\newcommand{\E}{\mathbb{E}}          
\newcommand{\R}{\mathbb{R}}          
\newcommand{\p}{\mathbb{P}}          
\newcommand{\indep}{\perp \!\!\! \perp}
\DeclareMathOperator{\Var}{Var}
\DeclareMathOperator{\cov}{cov}
\DeclareMathOperator{\supp}{supp}
\theoremstyle{definition}
\newtheorem{definition}{Definition}
\theoremstyle{plain}
\newtheorem{lemma}{Lemma}
\newtheorem{assumption}{Assumption}
\theoremstyle{remark}
\newtheorem{remark}{Remark}
\newtheorem{prop}{Proposition}
\newtheorem{theorem}{Theorem}
\DeclareMathAlphabet{\mathpzc}{OT1}{pzc}{m}{it}
\title{Distributional Instruments: Identification and Estimation with Quantile Least Squares}
\author{
	Rowan Cherodian\\
	\textit{University of Sheffield}
	\and
	Guy Tchuente\\
	\textit{Purdue University and GLO}
}
\date{\today}
\begin{document}
\maketitle
\begin{abstract}
We study instrumental-variable designs where policy reforms strongly shift the \textit{distribution} of an endogenous variable but only weakly move its \textit{mean}. We formalize this by introducing \textit{distributional relevance}: instruments may be ``purely distributional" with $\Var(\E[X\mid Z])=0$ while $F_{X\mid Z}(\cdot\mid Z)$ varies nontrivially. Within a triangular model, distributional relevance suffices for nonparametric identification of average structural effects via a control function constructed from $F_{X\mid Z}$. We then propose Quantile Least Squares (Q–LS), which aggregates conditional quantiles of $X$ given $Z$ into an optimal mean-square predictor and uses this projection as an instrument in a linear IV estimator. We establish consistency, asymptotic normality, and the validity of standard 2SLS variance formulas, and we discuss regularization across quantiles. Monte Carlo designs show that Q–LS delivers well-centered estimates and near-correct size when mean-based 2SLS suffers from weak instruments. In Health and Retirement Study data, Q–LS exploits Medicare Part D–induced distributional shifts in out-of-pocket risk to sharpen estimates of its effects on depression.\\

\noindent\textbf{Keywords:} Distributional instruments; Quantile least squares; Instrumental variables; Medicare Part D; Out-of-pocket medical spending risk; Depression.

\medskip

\noindent\textbf{JEL codes:} C26; C14; C36; I13; D14.

\end{abstract}

\newpage

\section{Introduction}

A central promise of public health insurance is \emph{insurance}, not just subsidies: it should protect households against the risk of large, unpredictable medical expenses. In the United States this promise is especially salient for older adults. Nearly all Americans over age 65 are enrolled in Medicare, yet many still face substantial exposure to out-of-pocket (OOP) medical spending, particularly for prescription drugs.\footnote{See, for example, \citet{finkelstein2008did} on the initial impact of Medicare on OOP spending and \citet{jones2018lifetime} on lifetime medical spending of retirees.} The introduction of Medicare Part D in 2006 was explicitly designed to reduce this financial risk by expanding subsidized prescription drug coverage and introducing catastrophic protection. A large empirical literature has since documented how Part D and related coverage changes affect drug utilization, adherence, and OOP spending \citep[e.g.][]{ketcham2008medicare,engelhardt2011medicare,park2017medicare,carvalho2019impact}, and a parallel literature shows that Medicare eligibility and supplemental coverage sharply reduce the probability of catastrophic medical expenditures and financial strain among the elderly \citep[e.g.][]{barcellos2015effects,scott2021assessing,jones2019predicting}. 

Much of this work treats financial exposure itself---for example, the level, variance, or tail probability of OOP spending---as an \emph{outcome} of policy. From a policy design perspective, however, we often want the reverse object: the \emph{causal effect} of financial risk exposure on behavior and welfare. Does facing a higher risk of catastrophic prescription drug expenses change adherence, portfolio choice, or retirement decisions? Does improved financial risk protection from Part D or supplemental policies translate into better mental health or reduced financial distress?\footnote{See \citet{ayyagari2015does} and \citet{ayyagari2016medicare} for related evidence on mental health and portfolio choice.} Answering such questions requires treating a measure of risk exposure---for example, a catastrophic-spending indicator or a predicted OOP risk index---as an endogenous regressor and using policy variation as an instrument.

In practice, this approach runs into a familiar econometric obstacle. The most natural instruments for a risk index are policy indicators (e.g.\ Part D eligibility or plan generosity) and features of the plan menu. These instruments primarily shift the \emph{distribution} of OOP spending---compressing right tails, changing dispersion, and altering the frequency of catastrophic events---while often having relatively modest effects on the \emph{mean} of the scalar risk measure itself. Empirically, first-stage regressions of a single risk index on these policy variables can have very low $F$-statistics, even when the underlying policy clearly reshapes the full distribution of spending \citep[see, for example,][]{engelhardt2011medicare,barcellos2015effects,scott2021assessing}. \footnote{ Table \ref{tab:mean_vs_risk_lit} summarizes some representative studies showing mean effect of insurance on mean OOP spending. }Standard IV diagnostics based on mean shifts therefore label the design as ``weak'' and push researchers toward reduced-form analyses, regression discontinuity designs at age~65, or structural models of plan choice \citep[e.g.][]{abaluck2011choice}, rather than causal IV estimates for the effect of financial risk on outcomes.

This paper starts from the observation that, in such settings, the instruments are not weak in any substantive sense: they are \emph{strongly} relevant for the \emph{distribution} of the endogenous variable, even if they appear weak for its mean. We formalize this idea by replacing the classical notion of \emph{mean relevance}---variation in $\E[X\mid Z]$---with a weaker and more general notion of \emph{distributional relevance}: the conditional distribution $F_{X\mid Z}(\cdot\mid Z)$ of the endogenous variable $X$ shifts with the instrument $Z$ in an $L^2$ sense, even when $\E[X\mid Z]$ is constant. In the Medicare example, $X$ may be a scalar index of financial exposure constructed from the OOP distribution, while $Z$ captures policy-induced variation in coverage generosity. Part D and related reforms satisfy distributional relevance whenever they alter the shape (tails, variance, skewness) of the spending distribution, regardless of whether they strongly move its mean.

Our first contribution is to show that distributional relevance is a meaningful notion of instrument strength, not just a descriptive property. Within a non-separable triangular model, we characterize a class of \emph{purely distributional instruments}: variables $Z$ for which $\Var(\E[X\mid Z])=0$ but $F_{X\mid Z}(\cdot\mid Z)$ is non-degenerate in $L^2$. Building on the control-function framework of \citet{cfnsv20_cfa}, we prove that such instruments still identify average structural effects via the control variable $V=F_X(X\mid Z)$, and we relate this notion to existing concepts of instrument strength based on heteroskedasticity \citep{lewbel1997constructing}.
 Thus, even when policy variables are ``purely distributional'' for a risk index, they can support point identification of its causal effect on outcomes.

Our second, and main, contribution is to develop a practical IV estimator tailored to these designs, which we call \emph{Quantile Least Squares} (Q--LS).\footnote{Our notion of ``distributional instruments" is orthogonal to the shift-share (Bartik) literature (see \citep{adao2019shift, borusyak2025practical, goldsmith2020bartik} for a detailed review). There, the focus is on constructing a composite instrument from many shocks and exposure shares. Here, we take the instrument $Z$ as given and show how to optimally exploit its distributional impact on $X$ via conditional quantiles. In particular, Q–LS does not impose a shift–share structure on $Z$; it can be applied whether $Z$ is a simple policy indicator or a more complex shock.} Instead of summarizing the first stage with a conditional mean, Q--LS builds instruments from the \emph{entire} conditional quantile process of $X$ given $Z$. Concretely, we: (i) estimate a series of first-stage quantile regressions of $X$ on $Z$; (ii) form a data-driven linear combination of these conditional quantiles that best predicts $X$ in mean square error; and (iii) use this optimal quantile-aggregated prediction as the instrument in an otherwise standard linear IV estimator. Under distributional relevance and a mild non-degeneracy condition, the resulting optimal Q--LS instrument is always relevant in the classical sense, even when the conditional mean of $X$ is flat in $Z$. In a ``best-case'' world where $X\mid Z$ is a pure location shift, we show that Q--LS collapses to optimal 2SLS and attains the same efficiency bound. In more general cases, Q--LS remains reliable precisely when mean-based instruments become weak.

Methodologically, we characterize the asymptotic properties of Q--LS in a linear triangular model with a single endogenous regressor. We show that (i) under distributional relevance and mild regularity conditions, the Q--LS estimator is consistent and asymptotically normal; (ii) standard heteroskedasticity-robust 2SLS standard errors computed with the generated Q--LS instrument are valid for inference; and (iii) ill-posedness in the optimal-weight problem can be handled with ridge or LASSO regularization across quantiles, yielding stable finite-sample performance. We propose simple diagnostics and a decision chart that place Q--LS alongside conventional weak-IV tools: researchers can test for mean relevance, test for distributional relevance, and then choose between classical 2SLS and Q--LS (and between its regularized variants) based on these diagnostics. In this sense, Q--LS is designed to \emph{complement}, not replace, the existing 2SLS toolkit.

Our Monte Carlo experiments are calibrated to the Medicare context and compare Q--LS to conventional 2SLS and a control-function estimator in designs where instruments primarily shift the distribution of risk rather than its mean. In designs with strong mean relevance, Q--LS and 2SLS deliver nearly identical point estimates and standard errors, confirming that Q--LS does not sacrifice efficiency when classical IV works well. In designs where traditional first-stage $F$-statistics indicate severe weak-instrument problems, Q--LS produces well-centered estimates and near-correct coverage, while 2SLS exhibits substantial bias and undercoverage. Regularized Q--LS variants further stabilise performance when the quantile dictionary is rich relative to sample size.

Finally, we illustrate the usefulness of distributional instruments in the Medicare Part D setting. Using the Health and Retirement Study (HRS), we construct individual-level measures of OOP medical spending risk in real (2015) dollars and exploit Part D as a source of exogenous variation in risk exposure. We show, first, that standard scalar risk measures, such as mean OOP spending or catastrophic-spending indicators, can render an otherwise informative policy design ``weak'' by classical first-stage diagnostics, even when the upper tail of the OOP distribution clearly shifts. Second, we demonstrate how Q--LS, built from a finite dictionary of conditional quantiles of OOP spending, recovers the same causal effects as linear 2SLS in specifications where the real OOP first stage is strong, while delivering much tighter confidence intervals and more stable estimates in specifications where mean relevance is weak. Substantively, our estimates indicate that higher OOP risk is associated with worse mental health among older Americans. To our knowledge, this is the first empirical illustration of how distributional instruments can materially improve the precision and credibility of IV estimates in applied work on medical spending risk.

\paragraph{Contributions.}
Our findings contribute to two main parts of the instrumental‐variables literature. 
First, on the identification side, building on the control–function results of \citet{cfnsv20_cfa}, 
we highlight that their argument continues to apply even when instruments are \emph{purely distributional}: 
even when $\Var(\E[X\mid Z]) = 0$ and the conditional mean of the endogenous variable is completely flat in the 
instrument, nontrivial variation in the conditional distribution $F_{X\mid Z}(\cdot\mid Z)$ is sufficient to identify 
average structural effects in a nonseparable triangular model via a control function based on 
$F_{X\mid Z}(X\mid Z)$. Our contribution is to make this “purely distributional’’ case explicit and to connect it 
to weak–IV diagnostics in linear models, where instruments may look weak in the mean yet be strong in the 
distribution. This formalizes a notion of \emph{distributional relevance} that extends classical mean relevance in a 
direction that is natural for risk and tail–outcome applications.

Second, on the linear IV side, we develop a simple, implementable estimator, Quantile Least Squares (Q--LS), which constructs an optimal distributional instrument from conditional quantiles and then uses it in a conventional linear IV step. Under a strong-distributional-relevance condition, Q--LS shares the same large-sample behavior as 2SLS when instruments are strong in the mean, but in designs where policy reforms mainly reshape the distribution of risk it can extract a much stronger effective first stage from the same variation, complementing---rather than replacing---existing weak-IV diagnostics and robust inference tools. We characterize its finite–sample behavior through Monte Carlo designs.

Our empirical application illustrates how Q--LS can extract a strong
first stage from policy variation that looks weak in classical
mean-based diagnostics, while collapsing back to conventional IV when the instrument is strong in the mean.

The remainder of the paper is organized as follows. Section \ref{sec:lit_rev} discusses weak identification, quantile regression, and some existing ways to circumvent weak identification. Section \ref{sec:setup-DR} introduces the linear IV setup and formalizes distributional relevance. Section~\ref{sec:QLS} defines the Q--LS estimator, presents its asymptotic properties, and compares Q--LS to classical 2SLS and shows that they coincide in a Gaussian/location setting. Section~\ref{sec:implementation} discusses regularization, diagnostics, and practical implementation. Section~\ref{sec:sim} outlines the simulation designs and results. Section~\ref{sec:empiricalapp} presents the Medicare Part D application. Section~\ref{sec:conc} concludes.

\section{Literature Review}\label{sec:lit_rev}

The aim of this research is to develop a general framework that circumvents the classical notion of weak identification of average structural functions in an instrumental variables design. In this section, we first describe and characterize weak identification in triangular models. Second, we summarize quantile regression and discuss how it has been used to address endogeneity in structural functions. Finally, we discuss how our proposed method relates to the existing literature that imposes strong functional form restrictions on the underlying model to circumvent the problem of weak identification.

\subsection{Weak identification in triangular and linear IV models}

Let $(Y,X,Z)$ be random variables that follow a triangular system
\[
    X = g(Z,\eta), \qquad
    Y = f(X,\varepsilon), \qquad
    (\eta,\varepsilon)\perp Z,
\]
where $f$ and $g$ are unknown measurable structural functions, and
$\varepsilon,\eta$ are unobserved disturbances. The vector $Z$ collects the
instruments. The average structural function (ASF) is
\[
  \mu(x) := \E\big[f(x,\varepsilon)\big],
\]
so identification of $\mu(\cdot)$ hinges on how variation in $Z$ propagates to
variation in $X$ through $g(\cdot)$ and, ultimately, to variation in $Y$.

In nonseparable triangular models, identification of $\mu(\cdot)$ often proceeds
via control-function or distribution-regression arguments that construct a
control variable $V=V(X,Z)$ (for example $V=F_{X\mid Z}(X\mid Z)$) such that
$Y \perp Z \mid (X,V)$ and suitable completeness conditions hold; see, among
others, \citet{newey2003instrumental,newey2021control}, \citet{florens2008identification},
\citet{imbens2009identification}, \citet{cfnsv20_cfa}, and \citet{dunker2023nonparametric}.
In this framework, \emph{weak identification} arises when the mapping from $Z$
to the distribution of $X$ is too ``flat’’ to generate informative variation in
the control function. Intuitively, if changes in $Z$ induce only very small or
nearly collinear shifts in $F_{X\mid Z}(\cdot\mid z)$, then the associated
conditional moment conditions are nearly redundant and the inverse problem that
recovers $\mu(\cdot)$ becomes ill–posed or nearly unidentified. This is the
nonlinear analogue of weak instruments: the conditional distribution of $X$
given $Z$ is only weakly sensitive to $Z$, so the control function has little
effective variation.

Closest in spirit to our work is the recent ``Distributional Instrumental Variable (DIV)” method of \citet{holovchak2025div}. They also exploit distributional shifts in $X\mid Z$ and show how to identify and estimate the \emph{entire} interventional distribution of the outcome using flexible generative models in a nonlinear IV setting. DIV provides conditions under which the interventional distribution is identified, and illustrates designs where 2SLS fails but distributional information suffices for identification and accurate estimation of mean and quantile treatment effects. Relative to this approach, we focus on average structural effects in a triangular model and on the construction of an optimal \emph{scalar} instrument for linear IV estimands. Our Q--LS procedure can be viewed as a low-dimensional, analytically tractable counterpart to DIV: it concentrates the distributional information in a finite-dimensional quantile dictionary and delivers an optimal $L^2$-projection that can be used within the classical IV toolkit. Moreover, when the endogenous regressor is (or can be reduced to) a binary treatment, the Q--LS index provides a scalar monotone instrument, so that the resulting 2SLS coefficient admits a standard LATE interpretation as a \emph{positive-weighted} average of complier effects across index increments.

To connect with the more familiar linear IV setting, consider the linear
special case
\[
  Y = X\beta + W'\gamma + u, \qquad
  X = Z'\pi + W'\delta + v,
\]
where $Z$ are excluded instruments, $W$ are included exogenous regressors, and
$\E[u\mid Z,W]\neq 0$ while $\E[v\mid Z,W]=0$. Classical weak-instrument
problems arise when the first-stage coefficients $\pi$ are small, so that the
concentration parameter is close to zero and the instruments explain only a
tiny fraction of the variation in $X$; see \citet{staiger1997instrumental},
\citet{stock2005asymptotic}, and \citet{dufour1997some} for canonical analyses.
In this case, 2SLS and related estimators exhibit large finite-sample bias
toward the OLS estimand, heavy-tailed or multimodal sampling distributions, and
nonstandard asymptotic behavior \citep[see][for a general overview]{andrews2014weak,andrews2019weak}. These issues are
amplified in many-instrument settings \citep[e.g.][]{bekker1994alternative,hansen2008estimation}, where the
effective first-stage signal per instrument becomes small even if the joint $F$-statistic is sizable.

The weak-identification phenomena in triangular models are conceptually similar
but operate at the level of \emph{distributions} rather than means. Instead of
small first-stage coefficients on $Z$ in a linear projection of $X$ on $Z$, the
problem is that the instrument-induced shifts in $F_{X\mid Z}(\cdot\mid z)$ are
too limited or nearly fail completeness, so the mapping from structural objects
to observables is nearly singular. As emphasized by \citet{l19_id_zoo}, both
linear and nonlinear models admit a continuum of identification strengths,
ranging from strong point identification with regular asymptotics, through
weak or set identification with nonstandard limits, to complete lack of
identification.

Our Q--LS estimator can be viewed as a two-step device that first maps the
original instrument vector $Z$ into a scalar \emph{distributional instrument}
\[
  h(Z) \in \mathcal{H} = \Big\{ \int_0^1 \omega(\tau) Q_{X\mid Z}(\tau\mid Z)\,d\tau
  : \omega \in L^2(0,1)\Big\},
\]
and then applies conventional linear IV using $h(Z)$ in place of $Z$. From this
perspective, Assumption~\ref{ass:QLS-relevance} is the analogue of a strong-IV
condition on the \emph{generated} instrument $h(Z)$: it rules out sequences of
designs in which the covariance between $X$ and $h(Z)$ drifts to zero at a
$\sqrt{n}$–local rate.

We therefore do \emph{not} claim that Q--LS is generally robust in the weak-IV
sense of \citet{staiger1997instrumental} or that it uniformly dominates
Anderson--Rubin or conditional tests \citep[e.g.][]{andrews2019weak,olea2013robust} when instruments are arbitrarily weak. Rather, the contribution of our \emph{distributional relevance} framework is to highlight a different
margin of strength: designs in which $\Var(\E[X\mid Z])$ is small or zero, so
that instruments appear weak for the \emph{mean}, may still be strong for the
\emph{distribution}, with $\E[h_{\mathrm{opt}}(Z)^2]$ bounded away from zero.
In such cases, Q--LS leverages distributional variation in $X\mid Z$ to
construct an effective instrument $h(Z)$ that satisfies a strong-IV condition
even when the original mean-based instrument fails conventional first-stage
diagnostics. Our notion of distributional relevance thus specifies a
``strong-instrument’’ regime at the distributional level and complements both
the generative DIV approach and the classical weak-IV toolkit. This allows
us to focus on designs where instruments may be weak for linear first stages
but remain informative about the \emph{distribution} of $X$, and hence can
still support identification and efficient estimation once the first stage is
recast in distributional terms.

\subsection{Quantile regression and endogeneity}

Quantile regression was first formalised as an optimization problem minimizing asymmetrically weighted absolute residuals by \cite{kb78_qr}, to estimate any chosen quantile $\tau\in(0,1)$ of the distribution of an outcome variable $X$  conditional on a set of exogenous regressors $Z$,

\begin{equation*}
Q_{X|Z}\big(\tau\big|Z=z\big)=\inf \{X:F_{X|Z}(x|z)\geq \tau\},
\end{equation*}
where $F_{X|Z}(x|z)$ is the conditional cumulative distribution function (CDF) of $X$ given $Z=z$. 

The linear quantile regression model specifies

\begin{equation*}
Q_X\big(\tau\big|Z=z\big)=Z'\pi(\tau),
\end{equation*}

where $\pi(\tau)$ is estimated by solving the expected check-loss function

\begin{equation*}
\min_\pi \E\big[\rho_\tau(X-Z'\pi)\big], \;\;\; \rho_\tau(u)=u\big(\tau-\mathbbm{1}\{u<0\}\big)
\end{equation*}

This objective function is convex but non-differentiable, yielding a linear programming problem that can be solved efficiently in moderately large samples.

Identification of quantile regression relies on mild regularity conditions, including correct specification of the conditional quantile function, sufficient variation in the regressors, and a rank condition ensuring uniqueness of the conditional quantile. These conditions do not require parametric assumptions on the error distribution

\cite{aai02_qiv} were the first to combine quantile methods with instrumental variables to estimate distributional treatment effects—that is, how a binary treatment affects quantiles of the outcome distribution—under endogeneity. \cite{ch05_qiv} generalized this approach to estimate treatment effects at different points of the outcome distribution (quantile treatment effects) under endogeneity and introduced the notion of the structural quantile function. See \cite{w20} for a detailed comparison of \cite{ch05_qiv} and \cite{aai02_qiv}. Since these seminal contributions, many variants of quantile instrumental variable methods have been proposed. For example, \cite{p20_uncon_qiv} develop an instrumental-variable approach to estimate unconditional quantile treatment effects, while \cite{clp16_gqiv} propose methods for group-level quantile treatment effects.

There is also a growing literature that integrates over a grid of quantiles to estimate endogenous structural functions. \cite{cfk15_qiv} propose a Censored Quantile Instrumental Variables (CQIV) estimator that uses a control-function approach to estimate structural quantiles of a censored outcome with endogenous regressors. In their framework, the control function is estimated by integrating over a grid of quantiles. \cite{cfnsv20_cfa} extend the control-function framework of \cite{cfk15_qiv} and develop a unified approach for estimating average, quantile, and distributional structural functions. The control-function estimator we propose for nonseparable triangular models estimates the control function using the same strategy as \cite{cfnsv20_cfa}. See Appendix \ref{sec:app_nstm} for further details.

\subsection{Existing ways to circumvent weak identification}

There is a substantial literature that seeks to circumvent the problem of weak identification by imposing specific functional form restrictions. In this section, we discuss several key strands of this literature and how it relates to our proposed notion of distributional relevance.

There is a growing literature that imposes functional form restrictions on higher moments. \citet{lewbel1997constructing} show that, in linear models with measurement error and endogeneity, instruments can be constructed from second-moment variation:
\[
    Z_i^{\text{Lev}} = (Z_i - \bar Z)(X_i - \bar X),
\]
provided that the variation of $X$ conditional on $Z$ changes with $Z$ even when
$\E[X\mid Z]$ is constant. This identification strategy exploits \emph{variance shifts}. Several studies have extended \citet{lewbel1997constructing} and used second-moment restrictions to achieve identification under weaker functional form restrictions; see, for example, \cite{kv10_iv_cfa_hetro,l12_hetro}. These restrictions can be viewed as a specific form of distributional relevance (variance shifts) and are therefore encompassed by our framework.

\citet{t23_iv_icm} also impose a functional form restriction that ensures identification in models of the form $\E[Y - X\beta \mid Z]=0$ without excluded instruments. Specifically, \citet{t23_iv_icm} assume a nonlinear completeness condition requiring injectivity of the conditional mean operator:

\[
    \E[X\mid Z] \tau = 0
    \quad\Rightarrow\quad \tau = 0.
\]

Completeness relies on sufficiently rich nonlinear mean dependence of $X$ conditional on $Z$, that is, on the mapping $X \mapsto \E[X\mid Z]$.

Our approach is fundamentally different from \citet{t23_iv_icm}. Identification does not rely on the conditional \emph{mean} operator or its injectivity. Instead, distributional variation in $X$ suffices to identify the average structural function $\mu(x)$, even when $\E[X\mid Z]$ is constant and completeness fails. Consequently, our results do not require completeness, mean dependence, or any functional-form restrictions of the type imposed by \citet{t23_iv_icm}.

\citet{babii2020completeness} study estimation in nonidentified linear inverse
problems of the from $K\varphi = r$, where $K$ is a compact operator between Hilbert spaces. In nonparametric IV models, $K$ corresponds to the conditional expectation operator $(K\varphi)(z)=\E[\varphi(X)\mid Z=z]$. When completeness (injectivity of $K$) fails, $\varphi$ is not point-identified, but spectral regularization methods converge to the ``best approximation'' $\varphi_1$ in the orthogonal complement of the null space of $K$. Their work characterizes risk bounds and asymptotic distributions under varying degrees of identification.

Our framework differs in two central respects. First, identification is not
formulated as an inverse problem with operator $K\varphi=r$. Instead, we
use a structural triangular model with distributional relevance, which yields full nonparametric identification of the average structural function $\mu(x)$ when the usual  nonparametric IV completeness condition fails. Second, whereas
\citet{babii2020completeness} obtain convergence to a best approximation under
nonidentification, our results deliver point identification of $\mu(x)$ by
exploiting the rank structure of the first stage and the distributional
variation of $X$. Consequently, our contribution is complementary: we
provide an identification and estimation framework that does not rely on
operator injectivity or completeness, and remains applicable with
instruments that are mean-irrelevant but distributionally strong.

\section{Linear IV setup and distributional relevance}

\label{sec:setup-DR}

\subsection{Linear IV model}

We consider the linear triangular model
\begin{equation}
  Y = \alpha + \beta  X + Z_1'\gamma + \varepsilon,
  \qquad \E[\varepsilon \mid Z] = 0,
  \label{eq:structural}
\end{equation}
where $Y$ is the outcome, $X$ is a scalar endogenous regressor, $Z_1$ is a
vector of exogenous covariates included in the outcome equation, and
$Z = (Z_1, Z_2)$ collects $Z_1$ and a vector of excluded instruments
$Z_2$.

In the classical IV framework, instrument strength is measured by variation
in the conditional mean of $X$ given $Z$:
\[
  m(Z) := \E[X\mid Z], \qquad \Var(m(Z)) > 0.
\]
When $\Var(m(Z))$ is close to zero, instruments are labelled ``weak'' and
standard 2SLS inference is unreliable.

Our aim is to replace this mean-based notion of strength with a weaker and
more general concept based on the \emph{full conditional distribution} of
$X$.

\subsection{Distributional relevance}

Let
\[
  F_{X\mid Z}(x\mid Z)
  :=
  \p(X \le x \mid Z),
  \qquad
  \bar F_X(x) := \E\big[ F_{X\mid Z}(x\mid Z)\big]
\]
denote the conditional and unconditional distribution functions of $X$. We
measure variation in $F_{X\mid Z}(\cdot\mid Z)$ using an $L^2$ norm over
$x$. For concreteness, let $\lambda$ be Lebesgue measure on a compact
interval containing the support of $X$, and define
\[
  \| F_{X\mid Z}(\cdot\mid Z) \|_{L^2(\lambda)}^2
  :=
  \int \big( F_{X\mid Z}(x\mid Z) \big)^2\,d\lambda(x).
\]

\begin{definition}[Mean relevance]
\label{def:mean-relevance}
We say that the instrument $Z$ is \emph{mean-relevant} for $X$ if
\[
  \Var\big( \E[X \mid Z] \big)
  = \E\Big[\big(\E[X\mid Z] - \E[X]\big)^2\Big] > 0.
\]
\end{definition}

\begin{definition}[Distributional relevance]
\label{def:dist-relevance}
We say that the instrument $Z$ is \emph{distributionally relevant} for $X$
if
\begin{equation}
  \E\Big[
    \big\| F_{X\mid Z}(\cdot \mid Z) - \bar F_X(\cdot) \big\|_{L^2(\lambda)}^2
  \Big]
  > 0.
  \label{eq:dist-relevance-L2}
\end{equation}
Equivalently, there exist $z_p,z_q$ in the support of $Z$ such that
\[
  \big\| F_{X\mid Z}(\cdot \mid z_p) - F_{X\mid Z}(\cdot \mid z_q)
  \big\|_{L^2(\lambda)} > 0.
\]
\end{definition}

Any change in $\E[X\mid Z]$ across $Z$ induces a difference in
$F_{X\mid Z}(\cdot\mid Z)$, so mean relevance implies distributional
relevance. The converse need not hold: $Z$ can be distributionally relevant
even when $\E[X\mid Z]$ is constant in $Z$. This motivates the following
concept.

\begin{definition}[Purely distributional instruments]
\label{def:purely-dist}
We say that $Z$ is a \emph{purely distributional instrument} for $X$ if
\begin{enumerate}[label=(\roman*)]
  \item $Z$ is distributionally relevant for $X$ in the sense of
  Definition~\ref{def:dist-relevance}, and
  \item $Z$ is mean-irrelevant for $X$, i.e.
  \[
    \Var(\E[X\mid Z]) = 0
    \quad\Longleftrightarrow\quad
    \E[X\mid Z=z] = \E[X]
    \quad \text{for all } z.
  \]
\end{enumerate}
\end{definition}

Classical IV analysis treats \emph{mean relevance}---variation in
$\E[X\mid Z]$---as the benchmark notion of instrument strength.
Distributional relevance weakens this requirement: it allows
$\E[X\mid Z]$ to be completely flat in $Z$, as long as the conditional
distribution $F_{X\mid Z}(\cdot\mid Z)$ varies nontrivially in $L^2$.
In this sense, an instrument can be ``strong'' for the \emph{distribution}
of $X$ even when it is ``weak'' for its mean.
We refer to instruments that satisfy distributional relevance but have
$\Var(\E[X\mid Z])=0$ as \emph{purely distributional}: they convey no
information about $\E[X\mid Z]$ but still transmit rich information
about the risk environment through higher moments and tail behavior.

In the linear IV analysis that follows, distributional relevance and the
linear quantile representation for $X\mid Z$ guarantee the existence of
an optimal Q--LS instrument that is relevant in the classical sense
unless the projection degenerates.
In the nonseparable triangular model in Appendix~\ref{sec:app_nstm},
the condition that $Z$ is purely distributional is enough for the
identification of the average structural function via a control
function based on $F_{X\mid Z}(X\mid Z)$.

\begin{remark}[Illustrative example]
\label{rem:purely-dist-example}
Let $Z_2\in\{0,1\}$ be a binary instrument with $\p(Z_2=0)=\p(Z_2=1)=1/2$,
and let $U\sim\mathcal{N}(0,1)$ be independent of $Z_2$. Define
\[
  X = \sigma(Z_2)\,U,
  \qquad
  \sigma(0)=1,\ \sigma(1)=2.
\]
Then $\E[X\mid Z_2=z_2]=0$ for all $z_2$, so $\Var(\E[X\mid Z_2])=0$ and
$Z_2$ is mean-irrelevant. However, the conditional distributions differ:
$X\mid Z_2=0\sim\mathcal{N}(0,1)$ and $X\mid Z_2=1\sim\mathcal{N}(0,4)$, so
$F_{X\mid Z_2}(\cdot\mid 0)\neq F_{X\mid Z_2}(\cdot\mid 1)$ in $L^2$. Thus
$Z_2$ is a purely distributional instrument. In a classical first-stage
regression of $X$ on $Z_2$, the coefficient on $Z_2$ would be zero, but the
instrument carries substantial information about the dispersion of $X$.
\end{remark}

In what follows, distributional relevance will be our primitive assumption
on the instrument. We show that, under a linear quantile representation for
$X\mid Z$, it is enough to construct a strong instrument as a linear
functional of the conditional quantile process, and to obtain consistent and
asymptotically normal estimates of $(\alpha,\beta,\gamma)$ in the
linear model \eqref{eq:structural}.

\section{The Q--LS estimator in the linear model}
\label{sec:QLS}

We now define the Quantile Least Squares IV (Q--LS) estimator. The key
object is an \emph{optimal} instrument constructed as a linear functional of
the conditional quantile process of $X\mid Z$. 

\subsection{Conditional quantiles and quantile-aggregated instruments}

We work with the conditional quantile function of $X$ given $Z$. For each
$\tau\in(0,1)$, assume
\begin{equation}
  Q_{X\mid Z}(\tau\mid Z)
  = Z'\pi_0(\tau),
  \qquad \tau\in(0,1),
  \label{eq:first-stage-quantile}
\end{equation}
where $\pi_0(\tau)\in\R^{\dim(Z)}$ is a measurable function of $\tau$. We
write
\[
  g(\tau,Z) := Q_{X\mid Z}(\tau\mid Z) = Z'\pi_0(\tau).
\]
Under mild regularity, distributional relevance is equivalent to the
statement that the conditional quantile process
$\{g(\tau,Z):\tau\in(0,1)\}$ is non-constant in $Z$ in $L^2$.\footnote{For concreteness, we implement Q--LS using linear quantile
regressions of $X$ on $Z$ at a finite grid of quantile indices.
This choice is purely for illustration: any alternative specification
for the conditional quantiles (e.g., nonlinear, spline or series-based,
or machine–learning first stages) could be used to construct the
dictionary $G(Z)$ and hence the Q--LS instrument.}

We restrict attention to instruments that are linear functionals of this
quantile process. Let $\omega(\cdot)$ be a square-integrable weight function
on $(0,1)$. We define the quantile-aggregated instrument
\begin{equation}
  h_\omega(Z)
  :=
  \int_0^1 \omega(\tau)\, g(\tau,Z)\,d\tau
  =
  \int_0^1 \omega(\tau)\, Q_{X\mid Z}(\tau\mid Z)\,d\tau.
  \label{eq:QLS-instrument}
\end{equation}
The corresponding moment conditions for the structural parameter
$\theta := (\alpha,\beta,\gamma')'$ are
\begin{equation}
  \E\!\left[
    \begin{pmatrix}
      h_\omega(Z)\\
      Z_1
    \end{pmatrix}
    \big(
      Y - \alpha - \beta X - Z_1'\gamma
    \big)
  \right] = 0.
  \label{eq:QLS-moment}
\end{equation}

Let
\[
  \mathcal{H}
  :=
  \Big\{
    h_\omega(\cdot): h_\omega(Z)
    = \int_0^1 \omega(\tau)\,g(\tau,Z)\,d\tau,
    \ \omega\in L^2(0,1)
  \Big\}
\]
be the linear span (in $L^2$) of the conditional quantile process.

\subsection{Optimal Q--LS and relevance under distributional relevance}
\label{subsec:opt-qls}

Following the classical optimal IV logic for a single endogenous regressor,
we consider the element of $\mathcal{H}$ that best predicts $X$ in mean square
error.

\begin{definition}[Optimal Q--LS]
\label{def:opt-QLS}
The optimal Q--LS weight function $\omega_{\mathrm{opt}}(\cdot)$ is any solution to
\begin{equation}
  \omega_{\mathrm{opt}}
  \in \arg\min_{\omega\in L^2(0,1)}
  \E\!\left[
    \big(
      X - h_\omega(Z)
    \big)^2
  \right],
  \label{eq:omega-opt}
\end{equation}
and the associated optimal Q--LS instrument is
\begin{equation}
  h_{\mathrm{opt}}(Z)
  :=
  h_{\omega_{\mathrm{opt}}}(Z)
  =
  \int_0^1 \omega_{\mathrm{opt}}(\tau)\, Q_{X\mid Z}(\tau\mid Z)\,d\tau.
  \label{eq:h-opt}
\end{equation}
\end{definition}

By construction, $h_{\mathrm{opt}}(Z)$ is the $L^2$-projection of $X$ onto
the closed linear span $\mathcal{H}$ of quantile-generated instruments.
The next lemma shows that, whenever this projection is nonzero, the resulting
instrument is automatically relevant for $X$ in the classical sense.

\begin{lemma}[Distributional relevance and optimal Q--LS]
\label{lem:DR-relevance}
Suppose Assumption~\ref{def:dist-relevance} and
\eqref{eq:first-stage-quantile} hold, and let $h_{\mathrm{opt}}$ be defined
as in Definition~\ref{def:opt-QLS}. Then
\[
  X = h_{\mathrm{opt}}(Z) + r(Z),
  \qquad r \perp \mathcal{H},
\]
and either
\[
  h_{\mathrm{opt}}(Z) = 0 \quad \text{a.s.},
\]
or
\[
  \cov\big(h_{\mathrm{opt}}(Z), X\big)
  = \E\big[h_{\mathrm{opt}}(Z)^2\big] > 0.
\]
In particular, whenever $h_{\mathrm{opt}}$ is non-degenerate it is a relevant
instrument for $X$.
\end{lemma}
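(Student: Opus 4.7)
The plan is to recognize that, by Definition~\ref{def:opt-QLS}, $h_{\mathrm{opt}}$ is the minimizer of $\E[(X-h)^2]$ over $h\in\mathcal{H}$, which makes it the $L^2$-projection of $X$ onto $\mathcal{H}$. Both claims of Lemma~\ref{lem:DR-relevance} are then routine consequences of the Hilbert space projection theorem in $L^2(P)$, where $P$ denotes the joint distribution of $(X,Z)$.

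The first step is to verify that $\mathcal{H}$ (or its $L^2$-closure) is a closed linear subspace of $L^2(P)$. Under the linear quantile representation \eqref{eq:first-stage-quantile} and square-integrability of $\pi_0(\cdot)$, the map $\omega\mapsto h_\omega$ is a bounded linear operator from $L^2(0,1)$ to $L^2(P)$, so passing to the closure $\overline{\mathcal{H}}$ if necessary makes closedness routine. The projection theorem then delivers $h_{\mathrm{opt}}=P_{\overline{\mathcal{H}}}X$ together with the orthogonal decomposition $X=h_{\mathrm{opt}}(Z)+r$, where $r:=X-h_{\mathrm{opt}}(Z)$ is orthogonal to every element of $\mathcal{H}$. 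The notation $r(Z)$ in the statement is just shorthand for this $L^2$ residual. This yields the first claim.

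The second step is the dichotomy. Using orthogonality ($h_{\mathrm{opt}}\in\mathcal{H}$ and $r\perp\mathcal{H}$),
\[
  \E[h_{\mathrm{opt}}(Z)\,X]
  = \E[h_{\mathrm{opt}}(Z)^2]+\E[h_{\mathrm{opt}}(Z)\,r]
  = \E[h_{\mathrm{opt}}(Z)^2].
\]
Because $Z$ contains an intercept, constants lie in $\mathcal{H}$, so $\E[r]=0$ and $\E[h_{\mathrm{opt}}(Z)]=\E[X]$; once the intercept is absorbed in the linear IV step (or, equivalently, one works with demeaned variables), this collapses to $\cov(h_{\mathrm{opt}}(Z),X)=\E[h_{\mathrm{opt}}(Z)^2]$. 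Either this quantity equals zero---in which case $h_{\mathrm{opt}}(Z)=0$ almost surely---or it is strictly positive, in which case $h_{\mathrm{opt}}$ is classically relevant for $X$.

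The only delicate point is the closedness of $\mathcal{H}$; one may need to work with the $L^2$-closure rather than the raw image of $L^2(0,1)$ under $\omega\mapsto h_\omega$, and some care is required in translating between the $L^2$-projection identity $\E[h_{\mathrm{opt}}(Z)\,X]=\E[h_{\mathrm{opt}}(Z)^2]$ and the covariance statement in the lemma (which is a bookkeeping issue about the intercept). It is worth noting that distributional relevance itself is not used in the proof of the dichotomy, which is a pure Hilbert-space fact. Whether the degenerate branch $h_{\mathrm{opt}}=0$ is realized depends on further structure of the first-stage quantile process: the pure variance-shift example of Remark~\ref{rem:purely-dist-example} falls on the degenerate side, whereas any asymmetric $L^2$-shift in $F_{X\mid Z}(\cdot\mid Z)$ will generically produce $h_{\mathrm{opt}}\neq 0$.
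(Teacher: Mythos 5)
Your proof is correct and follows essentially the same route as the paper's: identify $h_{\mathrm{opt}}$ as the $L^2$-projection of $X$ onto $\mathcal{H}$, use orthogonality of the residual to get $\E[X\,h_{\mathrm{opt}}(Z)]=\E[h_{\mathrm{opt}}(Z)^2]$, and read off the dichotomy. The two points you flag as delicate --- that one should work with the $L^2$-closure of $\mathcal{H}$ to apply the projection theorem, and that equating $\E[X\,h_{\mathrm{opt}}(Z)]$ with $\cov(X,h_{\mathrm{opt}}(Z))$ requires constants to lie in $\mathcal{H}$ --- are both glossed over in the paper's own proof, so your version is if anything slightly more careful.
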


\begin{proof}
By definition of $h_{\mathrm{opt}}$ as the $L^2$-projection of $X$ onto the
closed linear subspace $\mathcal{H}$, we can write
\[
  X = h_{\mathrm{opt}}(Z) + r(Z),
\]
where $r(Z)$ is the projection residual and satisfies
$\E[h(Z)\,r(Z)] = 0$ for all $h\in\mathcal{H}$. In particular,
$\E[h_{\mathrm{opt}}(Z)\,r(Z)] = 0$, so
\[
  \E[X\,h_{\mathrm{opt}}(Z)]
  = \E\big[(h_{\mathrm{opt}}(Z)+r(Z))\,h_{\mathrm{opt}}(Z)\big]
  = \E\big[h_{\mathrm{opt}}(Z)^2\big]
  + \E\big[h_{\mathrm{opt}}(Z)\,r(Z)\big]
  = \E\big[h_{\mathrm{opt}}(Z)^2\big].
\]
Thus either $h_{\mathrm{opt}}(Z)=0$ a.s., in which case both sides vanish, or
$\E[h_{\mathrm{opt}}(Z)^2]>0$ and $h_{\mathrm{opt}}$ is correlated with $X$.
\end{proof}

Lemma~\ref{lem:DR-relevance} delivers a clean dichotomy: the optimal Q--LS
instrument is either identically zero or, if nonzero, it is automatically
relevant and its ``strength'' is measured by $\E[h_{\mathrm{opt}}(Z)^2]$.
Distributional relevance guarantees that $\mathcal{H}$ contains more than
just constants, but it does \emph{not} by itself rule out the degenerate case
$h_{\mathrm{opt}}\equiv 0$. A simple example is a pure variance-shift design,
\[
  X = \sigma(Z)\,U,
  \qquad \E[U]=0,\quad U \indep Z,
\]
with $\sigma(Z)$ non-constant. In this case the conditional distributions
$F_{X\mid Z}(\cdot\mid Z)$ vary with $Z$ (distributional relevance), yet
$\cov(X,h(Z))=0$ for every $h\in\mathcal{H}$ so that $h_{\mathrm{opt}}(Z)\equiv 0$.
Our Design~B in Section~\ref{sec:sim} provides a Monte Carlo illustration of
this phenomenon.

For identification, we therefore add an explicit relevance condition that is
the exact analogue of the usual mean-relevance condition in classical IV.

\begin{assumption}[Strong distributional relevance]\label{ass:QLS-relevance}
Under the linear quantile representation \eqref{eq:first-stage-quantile}, the optimal Q--LS instrument $h_{\mathrm{opt}}(Z)$ defined in \eqref{eq:h-opt} satisfies
\[
  \E\big[h_{\mathrm{opt}}(Z)^2\big] \ge c > 0
\]
for some constant $c$ that does not depend on the sample size $n$.
\end{assumption}

Assumption~\ref{ass:QLS-relevance} is the distributional analogue of the usual ``strong IV'' condition in linear models, where one assumes that the first-stage coefficient on the excluded instruments does not drift to zero as $n$ grows (e.g.\ \citealp{staiger1997instrumental, stock2005asymptotic}. Here we impose a lower bound on the variance of the optimal Q--LS instrument $h_{\mathrm{opt}}(Z)$, ruling out sequences of designs in which instruments become weak either in the mean or in the distribution. Throughout our asymptotic analysis we work under Assumption~\ref{ass:QLS-relevance}, so our results should be interpreted as \emph{strong-distributional-IV} asymptotics.

Under Assumption~\ref{ass:QLS-relevance}, Lemma~\ref{lem:DR-relevance}
implies that $h_{\mathrm{opt}}$ is a non-degenerate, relevant instrument
constructed purely from the conditional distribution of $X\mid Z$, even when
$\E[X\mid Z]$ is constant in $Z$. 

Throughout the linear IV analysis, we assume that distributional
relevance holds and that $X\mid Z$ admits the linear quantile
representation in \eqref{eq:first-stage-quantile}.
Under these conditions, Lemma~\ref{lem:DR-relevance} shows that the
optimal Q--LS instrument $h_{\mathrm{opt}}(Z)$ is either identically
zero or a classically relevant instrument for $X$, and Assumption~\ref{ass:QLS-relevance}
rules out the degenerate case.

\subsection{Finite-grid implementation}

In practice, we approximate the integral in \eqref{eq:QLS-instrument} using
a finite grid of quantiles. Let
\[
  0 < \tau_1 < \cdots < \tau_K < 1
\]
be a set of quantile indexes with gap of order $1/K$, and let $\Delta\tau_k$
denote the associated quadrature weights (for simplicity, one can take
$\Delta\tau_k = 1/K$).

For each $\tau_k$, we estimate the first-stage quantile regression
\begin{equation}
  \widehat{\pi}_n(\tau_k)
  \in \arg\min_{\pi}
  \frac{1}{n}\sum_{i=1}^n
  \rho_{\tau_k}\big(X_i - Z_i'\pi\big),
  \qquad k=1,\dots,K,
  \label{eq:sample-qr}
\end{equation}
and define the estimated quantile-based instruments
\[
  \widehat{g}_k(Z_i) := Z_i'\widehat{\pi}_n(\tau_k),
  \qquad i=1,\dots,n,\quad k=1,\dots,K.
\]
Stacking them, let
\[
  \widehat{G}_i
  :=
  \big(
    \widehat{g}_1(Z_i),
    \dots,
    \widehat{g}_K(Z_i)
  \big)',
  \qquad
  \widehat{G}
  :=
  \begin{pmatrix}
    \widehat{G}_1'\\
    \vdots\\
    \widehat{G}_n'
  \end{pmatrix}
  \in \R^{n\times K}.
\]

\paragraph{Discrete Q--LS weights.}
A natural discrete analogue of \eqref{eq:omega-opt} chooses weights
$w\in\R^K$ to best predict $X$ from the $K$ quantile instruments:
\begin{equation}
  \widehat{w}_n
  \in \arg\min_{w\in\R^K}
  \frac{1}{n}\sum_{i=1}^n
  \big(
    X_i - \widehat{G}_i' w
  \big)^2,
  \label{eq:w-hat-opt}
\end{equation}
so that, when $\widehat{G}'\widehat{G}$ is invertible,
\begin{equation}
  \widehat{w}_n
  =
  (\widehat{G}'\widehat{G})^{-1}\widehat{G}'X,
  \qquad
  X = (X_1,\dots,X_n)'.
  \label{eq:w-hat-closed-form}
\end{equation}
The corresponding finite-sample Q--LS first-stage prediction is
\begin{equation}
  \widehat{X}_i^{\mathrm{Q\text{--}LS}}
  :=
  \widehat{G}_i'\widehat{w}_n
  =
  \sum_{k=1}^K \widehat{w}_{n,k}\,\widehat{g}_k(Z_i),
  \qquad i=1,\dots,n.
  \label{eq:Xhat-QLS}
\end{equation}
This $\widehat{X}_i^{\mathrm{Q\text{--}LS}}$ is the empirical best linear
predictor of $X_i$ in the span of the $K$ quantile-based instruments.

\paragraph{Second stage.}
Let $y = (Y_1,\dots,Y_n)'$ be the outcome vector, and define the regressor
and instrument matrices
\[
  S = [\mathbf{1},\, X,\, Z_1],
  \qquad
  M = [\widehat{X}^{\mathrm{Q\text{--}LS}},\, Z_1],
\]
where $\widehat{X}^{\mathrm{Q\text{--}LS}} = (\widehat{X}_1^{\mathrm{Q\text{--}LS}},
\dots,\widehat{X}_n^{\mathrm{Q\text{--}LS}})'$ and $\mathbf{1}$ denotes the
$n$-vector of ones. The Q--LS estimator of
$\theta = (\alpha,\beta,\gamma')'$ is the usual 2SLS estimator
\begin{equation}
  \widehat{\theta}_n^{\mathrm{Q\text{--}LS}}
  :=
  \big(S'P_M S\big)^{-1} S'P_M y,
  \qquad
  P_M = M(M'M)^{-1}M'.
  \label{eq:theta-QLS}
\end{equation}

\subsection{Asymptotic properties}
\label{subsec:QLS-asymp}

We now state conditions under which the Q--LS estimator is consistent and
asymptotically normal. The key identification condition is distributional
relevance, which, via Lemma~\ref{lem:DR-relevance}, implies relevance of the
optimal Q--LS instrument.

\begin{assumption}[Regularity and identification]
\label{ass:consistency}
\leavevmode
\begin{enumerate}[label=(\roman*)]
  \item The sample $\{(Y_i,X_i,Z_i)\}_{i=1}^n$ is i.i.d., and
  $\E\|Z\|^2 < \infty$, $\E|X|^2 < \infty$, $\E|Y|^2 < \infty$.
  \item For each $\tau\in(0,1)$, the conditional quantile $Q_{X\mid Z}(\tau\mid Z)$
  exists and satisfies \eqref{eq:first-stage-quantile}, with $\pi_0(\cdot)$
  continuous on $(0,1)$ and $\int_0^1 \|\pi_0(\tau)\|^2 d\tau < \infty$.
  \item The structural error satisfies $\E[\varepsilon\mid Z]=0$ and
  $\E[\varepsilon^2\mid Z] < \infty$.
  \item Distributional relevance holds in the sense of
  Definition~\ref{def:dist-relevance}, and the span $\mathcal{H}$ generated
  by $\{g(\tau,Z):\tau\in(0,1)\}$ is not reduced to constants. Let
  $h_{\mathrm{opt}}$ be defined as in Definition~\ref{def:opt-QLS}, and
  suppose the population matrix
  \[
    \E\big[ S_0(Z)' S_0(Z) \big],
    \qquad S_0(Z) = (1, X, Z_1')
  \]
  is nonsingular.
  \item The quantile grid $\{\tau_k\}_{k=1}^K$ is dense in $(0,1)$ as
  $K=K_n\to\infty$, and the number of quantiles satisfies
  $K_n^2/n \to 0$ as $n\to\infty$.
\end{enumerate}
\end{assumption}

\begin{prop}[Consistency of the Q--LS estimator]
\label{prop:QLS-consistency}
Suppose Assumptions  \ref{ass:QLS-relevance} and \ref{ass:consistency}  hold, and the weights
$\widehat{w}_n$ are defined by \eqref{eq:w-hat-opt} with $K=K_n\to\infty$
and $K_n^2/n\to 0$. Then
\[
  \widehat{\theta}_n^{\mathrm{Q\text{--}LS}}
  \xrightarrow{p} \theta
  \qquad\text{as } n\to\infty.
\]
\end{prop}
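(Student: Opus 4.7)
The plan is to treat Q--LS as ordinary 2SLS with a \emph{generated instrument} $\widehat{X}^{\mathrm{Q\text{--}LS}}$ and to reduce consistency to two independent pieces: (a) convergence of the generated instrument to the infeasible population counterpart $h_{\mathrm{opt}}(Z)$ in an $L^2(P_Z)$ sense, and (b) consistency of 2SLS when $h_{\mathrm{opt}}(Z)$ is used directly as the single excluded instrument. Part (b) is immediate: by Assumption~\ref{ass:consistency}(iii), $\E[\varepsilon\mid Z]=0$ gives $\E[h_{\mathrm{opt}}(Z)\varepsilon]=0$ because $h_{\mathrm{opt}}$ is $Z$-measurable; Lemma~\ref{lem:DR-relevance} together with Assumption~\ref{ass:QLS-relevance} yields $\cov(h_{\mathrm{opt}}(Z),X) = \E[h_{\mathrm{opt}}(Z)^2] \ge c > 0$, which, combined with Assumption~\ref{ass:consistency}(iv), secures the rank condition on the population moment matrix and hence identifies $\theta$.

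For part (a) I would decompose the generated-instrument error as
\[
  \widehat{X}_i^{\mathrm{Q\text{--}LS}} - h_{\mathrm{opt}}(Z_i)
  = \big[\widehat{X}_i^{\mathrm{Q\text{--}LS}} - h^\star_{K_n}(Z_i)\big]
  + \big[h^\star_{K_n}(Z_i) - h_{\mathrm{opt}}(Z_i)\big],
\]
where $h^\star_{K_n}$ is the infeasible best linear predictor of $X$ in the span of the \emph{true} quantile functions $\{g(\tau_k,Z)\}_{k=1}^{K_n}$. The second (discretization) term vanishes in $L^2(P_Z)$ as $K_n \to \infty$: continuity of $\pi_0(\cdot)$ on $(0,1)$ from Assumption~\ref{ass:consistency}(ii), together with the grid-refinement condition in Assumption~\ref{ass:consistency}(v), implies that the span of $\{g(\tau_k,Z)\}_{k=1}^{K_n}$ is dense in $\mathcal{H}$, so the projection of $X$ onto the sieve converges to the projection onto $\mathcal{H}$, namely $h_{\mathrm{opt}}(Z)$.

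The first (estimation) term is handled by a standard generated-regressor bound. Koenker--Bassett theory gives $\widehat{\pi}_n(\tau_k) - \pi_0(\tau_k) = O_p(n^{-1/2})$ at each fixed $\tau_k$, and an empirical-process/bracketing argument across the grid delivers uniform control of the form $\max_{k\le K_n}\|\widehat{\pi}_n(\tau_k) - \pi_0(\tau_k)\|^2 = O_p(K_n/n)$. A perturbation argument for the sample least-squares problem defining $\widehat{w}_n$ then yields $\|\widehat{w}_n - w^\star_{K_n}\|^2 = O_p(K_n/n)$, using the eigenvalue bounds on the population Gram matrix that are implicit in the non-degeneracy of $\mathcal{H}$ (Assumption~\ref{ass:consistency}(iv)). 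Combining the two components, one obtains $n^{-1}\sum_i (\widehat{X}_i^{\mathrm{Q\text{--}LS}} - h_{\mathrm{opt}}(Z_i))^2 = o_p(1)$ whenever $K_n^2/n \to 0$, which is precisely Assumption~\ref{ass:consistency}(v).

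The final step is a standard continuous-mapping argument. Writing $\widehat{\theta}_n^{\mathrm{Q\text{--}LS}} - \theta = (S'P_M S/n)^{-1}(S'P_M \varepsilon/n)$, I would replace $M$ by the infeasible $M^\star = [h_{\mathrm{opt}}(Z),Z_1]$, verify that $S'P_M S/n - S'P_{M^\star} S/n = o_p(1)$ and $S'P_M\varepsilon/n - S'P_{M^\star}\varepsilon/n = o_p(1)$ using the $L^2$ consistency established above, and then apply a law of large numbers together with (b) to conclude. The main technical obstacle is the uniform-over-the-grid rate for the first-stage quantile regressions: the condition $K_n^2/n\to 0$ must simultaneously absorb the $K_n$ weights being aggregated in $\widehat{w}_n$ and the entropy of the grid in the empirical-process step, and it is precisely here that one must exploit the square-integrability $\int_0^1 \|\pi_0(\tau)\|^2\,d\tau < \infty$ from Assumption~\ref{ass:consistency}(ii) to keep the aggregated estimation error negligible.
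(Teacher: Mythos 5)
Your proposal is correct and follows essentially the same route as the paper's own proof: establish $L^2$ convergence of the generated instrument $\widehat{X}^{\mathrm{Q\text{--}LS}}$ to $h_{\mathrm{opt}}(Z)$ via the same estimation-error-plus-discretization decomposition (quantile regression rates uniform over the grid, convergence of the weights, density of the sieve in $\mathcal{H}$), and then invoke standard 2SLS-with-generated-instrument arguments with the population moment condition identified through Lemma~\ref{lem:DR-relevance} and Assumption~\ref{ass:QLS-relevance}. Your version is somewhat more explicit about the rates and the rank condition, but the skeleton is the same.
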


\begin{proof}
By Assumption~\ref{ass:consistency}(ii) and standard quantile regression
theory, we have, for each fixed $\tau$,
$\widehat{\pi}_n(\tau) \xrightarrow{p} \pi_0(\tau)$, and, since the grid
$\{\tau_k\}$ becomes dense and $K_n^2/n\to 0$, the array
$\{\widehat{\pi}_n(\tau_k)\}$ uniformly approximates $\{\pi_0(\tau_k)\}$ in
mean square. It follows that
\[
  \max_{1\le k\le K_n}
  \E\big[
    \big(
      \widehat{g}_k(Z) - g(\tau_k,Z)
    \big)^2
  \big] \to 0,
\]
and hence $\widehat{G}'\widehat{G}/n$ converges in probability to the
corresponding population limit, and similarly for $\widehat{G}'X/n$.
Under $K_n^2/n\to 0$, the estimation error from the first-stage quantile
regressions is $o_p(n^{-1/2})$ in the moment conditions.

Therefore the discrete weights $\widehat{w}_n$ converge in probability to
the population weights $w_0$ that minimize
$\E[(X-G_0(Z)'w)^2]$ in the corresponding finite-dimensional approximation
space, where $G_0(Z)$ stacks the $g(\tau_k,Z)$'s. As $K_n\to\infty$ and the
grid becomes dense, the discrete approximation $G_0(Z)'w_0$ converges in
$L^2$ to the continuum optimal instrument $h_{\mathrm{opt}}(Z)$ defined in
Definition~\ref{def:opt-QLS}. In particular,
\[
  \E\big[
    \big(
      \widehat{X}_i^{\mathrm{Q\text{--}LS}} - h_{\mathrm{opt}}(Z_i)
    \big)^2
  \big] \to 0.
\]
Standard arguments for 2SLS with generated instruments then imply that
$\widehat{\theta}_n^{\mathrm{Q\text{--}LS}}$ converges in probability to the
unique solution of the population IV moment condition with instrument
$h_{\mathrm{opt}}(Z)$, which is $\theta_0$ by Assumption~\ref{ass:consistency}(iv).
\end{proof}

For asymptotic normality, let
\[
  S_i := \big(1, X_i, Z_{1i}'\big)',
  \qquad
  M_i := \big(\widehat{X}_i^{\mathrm{Q\text{--}LS}}, Z_{1i}'\big)'
\]
denote the stacked regressor and instrument vectors used in the second
stage, and let
\[
  \widehat{\varepsilon}_i
  :=
  Y_i - S_i'\widehat{\theta}_n^{\mathrm{Q\text{--}LS}}.
\]
Define the sample matrices
\[
  \widehat{A}_n
  :=
  \frac{1}{n}\sum_{i=1}^n S_i M_i',
  \qquad
  \widehat{Q}_n
  :=
  \frac{1}{n}\sum_{i=1}^n M_i M_i',
\]
and the heteroskedasticity-robust ``meat'' matrix
\[
  \widehat{\Omega}_n
  :=
  \frac{1}{n}\sum_{i=1}^n
    \widehat{\varepsilon}_i^{\,2}\,
    M_i M_i'.
\]

\begin{assumption}[Additional conditions for asymptotic normality]
\label{ass:an}
In addition to Assumption~\ref{ass:consistency}, suppose that:
\begin{enumerate}[label=(\roman*)]
  \item The fourth moments of $(Y,X,Z)$ are finite:
  $\E\|Z\|^4 < \infty$, $\E|X|^4 < \infty$, $\E|Y|^4 < \infty$.
  \item The conditional fourth moment of the structural error is finite:
  $\E[\varepsilon^4\mid Z] < \infty$ a.s.
  \item The eigenvalues of $\E[M_i^0 M_i^{0\prime}]$ are bounded away
  from zero and infinity, where $M_i^0 = (h_{\mathrm{opt}}(Z_i),Z_{1i}')'$ is
  the population instrument vector associated with the optimal Q--LS, and
  the same holds for $\E[S_i S_i']$.
\end{enumerate}
\end{assumption}

Let
\[
  A
  :=
  \E[S_i M_i^{0\prime}],
  \qquad
  Q
  :=
  \E[M_i^0 M_i^{0\prime}],
  \qquad
  \Omega
  :=
  \E\big[\varepsilon_i^{2} M_i^0 M_i^{0\prime}\big].
\]
The population asymptotic covariance matrix of the Q--LS estimator can then
be written in the usual IV ``sandwich'' form
\begin{equation}
  V_{\mathrm{Q\text{--}LS}}
  :=
  \big(A Q^{-1} A'\big)^{-1}
  \big( A Q^{-1} \Omega Q^{-1} A' \big)
  \big(A Q^{-1} A'\big)^{-1}.
  \label{eq:V-QLS}
\end{equation}

\begin{prop}[Asymptotic normality of the Q--LS estimator]
\label{prop:QLS-an}
Suppose Assumptions \ref{ass:QLS-relevance}, \ref{ass:consistency} and \ref{ass:an} hold. Then
\[
  \sqrt{n}\big(
    \widehat{\theta}_n^{\mathrm{Q\text{--}LS}} - \theta
  \big)
  \ \xrightarrow{d}\
  \mathcal{N}\big(0, V_{\mathrm{Q\text{--}LS}}\big),
\]
where $V_{\mathrm{Q\text{--}LS}}$ is given in \eqref{eq:V-QLS}. Moreover, the
heteroskedasticity-robust 2SLS covariance matrix computed with the generated
Q--LS instrument,
\begin{equation}
  \widehat{V}_n^{\mathrm{Q\text{--}LS}}
  :=
  \big(\widehat{A}_n \widehat{Q}_n^{-1} \widehat{A}_n'\big)^{-1}
  \big(\widehat{A}_n \widehat{Q}_n^{-1} \widehat{\Omega}_n
        \widehat{Q}_n^{-1} \widehat{A}_n'\big)
  \big(\widehat{A}_n \widehat{Q}_n^{-1} \widehat{A}_n'\big)^{-1},
  \label{eq:Vhat-QLS}
\end{equation}
is consistent for $V_{\mathrm{Q\text{--}LS}}$, in the sense that
$\widehat{V}_n^{\mathrm{Q\text{--}LS}} \xrightarrow{p} V_{\mathrm{Q\text{--}LS}}$.
Consequently, for any fixed vector $c\in\R^{\dim(\theta_0)}$,
\[
  \frac{
    \sqrt{n}\,c'\big(\widehat{\theta}_n^{\mathrm{Q\text{--}LS}} - \theta\big)
  }{
    \sqrt{c'\widehat{V}_n^{\mathrm{Q\text{--}LS}}c}
  }
  \ \xrightarrow{d}\ \mathcal{N}(0,1).
\]
\end{prop}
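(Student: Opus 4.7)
The plan is to treat Q--LS as a standard 2SLS procedure with a generated instrument and show that the first-stage estimation error arising from the $K_n$ quantile regressions is asymptotically negligible under $K_n^2/n \to 0$. The argument has four main pieces: an algebraic decomposition of the estimator, a replacement of the generated instrument by its population counterpart, a CLT/LLN step for the resulting oracle sum, and a final LLN step for the variance estimator.

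First, I would substitute $Y_i = S_i'\theta + \varepsilon_i$ into the closed form \eqref{eq:theta-QLS} to obtain the standard sandwich identity
\[
  \sqrt{n}\bigl(\widehat{\theta}_n^{\mathrm{Q\text{--}LS}} - \theta\bigr)
  = \bigl(\widehat{A}_n \widehat{Q}_n^{-1} \widehat{A}_n'\bigr)^{-1}
    \widehat{A}_n \widehat{Q}_n^{-1}\,
    \frac{1}{\sqrt{n}}\sum_{i=1}^n M_i \varepsilon_i,
\]
which reduces the problem to joint limits for $\widehat{A}_n$, $\widehat{Q}_n$, and $n^{-1/2}\sum_i M_i\varepsilon_i$. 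Next, I would replace the generated instrument $M_i$ by its population counterpart $M_i^0 = (h_{\mathrm{opt}}(Z_i), Z_{1i}')'$ inside each of these averages. The consistency step already used in the proof of Proposition~\ref{prop:QLS-consistency} gives $\E[(\widehat{X}_i^{\mathrm{Q\text{--}LS}} - h_{\mathrm{opt}}(Z_i))^2] \to 0$. For the score term I would use the expansion $\widehat{X}_i^{\mathrm{Q\text{--}LS}} - h_{\mathrm{opt}}(Z_i) = \sum_{k=1}^{K_n} \widehat{w}_{n,k} Z_i'(\widehat{\pi}_n(\tau_k) - \pi_0(\tau_k)) + R_{i,n}$, with $R_{i,n}$ collecting the grid-discretization and weight-estimation errors, and bound $n^{-1/2}\sum_i (M_i - M_i^0)\varepsilon_i$ using $\E[\varepsilon\mid Z]=0$, the rate $\widehat{\pi}_n(\tau_k) - \pi_0(\tau_k) = O_p(n^{-1/2})$ for each $k$, and the fourth-moment bounds in Assumption~\ref{ass:an}(i). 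The condition $K_n^2/n \to 0$ is precisely what forces this remainder to be $o_p(1)$, and the analogous second-moment version of the same argument gives $\widehat{A}_n - A = o_p(1)$ and $\widehat{Q}_n - Q = o_p(1)$.

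Third, I would apply the multivariate Lindeberg--L\'evy CLT to obtain $n^{-1/2}\sum_i M_i^0 \varepsilon_i \xrightarrow{d} \mathcal{N}(0,\Omega)$, verifying the Lyapunov condition using $\E[\varepsilon\mid Z]=0$, $\E[\varepsilon^2\mid Z]<\infty$, and the fourth-moment and eigenvalue bounds in Assumption~\ref{ass:an}(i)--(iii). Combined with the weak limits $\widehat{A}_n \xrightarrow{p} A$ and $\widehat{Q}_n \xrightarrow{p} Q$ from the preceding step, Slutsky's theorem yields the advertised $\mathcal{N}(0, V_{\mathrm{Q\text{--}LS}})$ limit with $V_{\mathrm{Q\text{--}LS}}$ in \eqref{eq:V-QLS}. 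For the variance estimator, I would write $\widehat{\varepsilon}_i = \varepsilon_i + S_i'(\theta - \widehat{\theta}_n^{\mathrm{Q\text{--}LS}})$ and combine the consistency result of Proposition~\ref{prop:QLS-consistency} with a uniform law of large numbers (again invoking Assumption~\ref{ass:an}(i)--(ii)) to conclude $\widehat{\Omega}_n \xrightarrow{p} \Omega$; the continuous mapping theorem then delivers $\widehat{V}_n^{\mathrm{Q\text{--}LS}} \xrightarrow{p} V_{\mathrm{Q\text{--}LS}}$, and the final Gaussian limit for the standardized linear combination follows from one last application of Slutsky.

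The main obstacle is the second step: controlling the aggregate generated-regressor contribution $n^{-1/2}\sum_i(M_i - M_i^0)\varepsilon_i$ when $K_n \to \infty$. Each of the $K_n$ first-stage quantile regressions contributes its own $O_p(n^{-1/2})$ estimation error, and these errors enter the score through the data-dependent weights $\widehat{w}_{n,k}$, which themselves depend on all $\widehat{\pi}_n(\tau_k)$. A careful accounting shows that the aggregate fluctuation scales like $K_n/\sqrt{n}$ in the relevant norm, so Assumption~\ref{ass:consistency}(v) requires $K_n^2/n \to 0$ rather than merely $K_n/n \to 0$; once the generated-instrument error is dispatched, everything else reduces to a standard 2SLS/Slutsky/continuous-mapping routine in which Assumption~\ref{ass:QLS-relevance} ensures invertibility of $AQ^{-1}A'$.
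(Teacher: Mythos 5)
Your proposal is correct and follows essentially the same route as the paper: treat Q--LS as 2SLS with a generated instrument, show that the first-stage quantile-estimation error is $o_p(n^{-1/2})$ in the moment conditions under $K_n^2/n\to 0$ so that the asymptotics coincide with the oracle estimator using $M_i^0=(h_{\mathrm{opt}}(Z_i),Z_{1i}')'$, and then apply the standard CLT/LLN/Slutsky routine for the sandwich form and the variance estimator. Your explicit accounting of the aggregate generated-instrument remainder (scaling like $K_n/\sqrt{n}$ through the data-dependent weights) is in fact more detailed than the paper's own proof, which simply asserts this step.
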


\begin{proof}
By Proposition~\ref{prop:QLS-consistency}, the generated instrument
$\widehat{X}_i^{\mathrm{Q\text{--}LS}}$ converges in $L^2$ to the population
optimal instrument $h_{\mathrm{opt}}(Z_i)$, and the Q--LS estimator solves
the sample moment condition
\[
  \frac{1}{n}\sum_{i=1}^n
    M_i\big(Y_i - S_i'\theta\big) = 0,
\]
with $M_i = (\widehat{X}_i^{\mathrm{Q\text{--}LS}},Z_{1i}')'$. The estimation
error in $\widehat{X}_i^{\mathrm{Q\text{--}LS}}$ is $o_p(n^{-1/2})$ in the
associated moment conditions under $K_n^2/n\to 0$, so the first-order
asymptotics coincide with those of a conventional 2SLS estimator that uses
the population instrument $M_i^0 = (h_{\mathrm{opt}}(Z_i),Z_{1i}')'$.
Standard arguments for IV/GMM with i.i.d.\ observations, finite fourth
moments, and nonsingular Jacobian then yield the stated result, and the
consistency of $\widehat{V}_n^{\mathrm{Q\text{--}LS}}$ follows by the law of
large numbers and Slutsky's theorem.
\end{proof}

In applications, we report standard errors as the square roots of the
diagonal elements of $\widehat{V}_n^{\mathrm{Q\text{--}LS}}$ in
\eqref{eq:Vhat-QLS}, which coincide with the usual heteroskedasticity-robust
2SLS standard errors computed using the generated Q--LS instrument and the
exogenous regressors $Z_1$ as instruments.

Proposition~\ref{prop:QLS-consistency}, is derived under Assumption~\ref{ass:QLS-relevance}, which enforces a strong form of distributional relevance for the generated instrument $h(Z)$. As in the classical linear-IV literature, this rules out weak-IV sequences in which the covariance between the endogenous regressor and the instrument shrinks at a $\sqrt{n}$--local rate. In particular, our asymptotic normality result for Q--LS should be interpreted as a \emph{strong-identification} limit theory: it guarantees standard Gaussian inference when the optimal distributional instrument $h_{\mathrm{opt}}(Z)$ has nonvanishing variance, but it does not characterize the behavior of Q--LS under general weak-IV sequences. Extending weak-identification robust inference to settings with distributional instruments would require combining our construction with the robust testing approaches of, for example, \citet{andrews2019weak} or \citet{olea2013robust}, which we leave for future work.

\subsection{Relation to classical 2SLS}
\label{sec:2sls-comparison}

In this section we compare the quantile least squares (Q--LS) estimator to
classical two-stage least squares (2SLS) in settings where both procedures
are valid.

\subsubsection{2SLS and optimal mean-based instruments}

In the classical IV framework with a single endogenous regressor, relevance
is formulated in terms of the conditional mean
\[
  m(Z) := \E[X\mid Z].
\]
Under homoskedasticity, the ``best'' instrument for a single endogenous
regressor is the efficient instrument $h^*(Z)$, which is proportional to the
best $L^2$ predictor of $X$ given $Z$, i.e.\ to $m(Z)$. The optimal 2SLS
estimator uses instruments in the linear span of $\{1,Z_1',m(Z)\}$ and is
semiparametrically efficient for $\theta_0$ within this mean-based class.

\subsubsection{Q--LS versus 2SLS in a location model}

By contrast, Q--LS is built from the entire conditional distribution of
$X\mid Z$ via its conditional quantiles. Under
\eqref{eq:first-stage-quantile}, we construct quantile-aggregated
instruments of the form
\[
  h_\omega(Z)
  =
  \int_0^1 \omega(\tau)\,Q_{X\mid Z}(\tau\mid Z)\,d\tau,
\]
and define the optimal Q--LS instrument $h_{\mathrm{opt}}$ as in
Definition~\ref{def:opt-QLS}, i.e.\ as the best $L^2$ predictor of $X$
within the quantile-generated class $\mathcal{H}$. The Q--LS estimator
$\widehat{\theta}_n^{\mathrm{Q\text{--}LS}}$ is the 2SLS estimator that uses
$(h_{\mathrm{opt}}(Z),Z_1)$ as instruments.

The following proposition shows that, in a Gaussian/location setting with
mean relevance, Q--LS and optimal 2SLS are asymptotically equivalent.

\begin{prop}[Q--LS versus 2SLS in a location model]
\label{prop:qls-2sls-location}
Suppose that, in addition to Assumptions~\ref{ass:consistency}--\ref{ass:an},
the conditional distribution of $X\mid Z$ belongs to a location family,
i.e.\ there exists a random variable $U$ with $\E[U]=0$ such that
\[
  X \mid Z \ \overset{d}{=}\ m(Z) + U,
\]
and the distribution of $U$ does not depend on $Z$. Then:
\begin{enumerate}[label=(\roman*)]
  \item For each $\tau\in(0,1)$ we have
  \[
    Q_{X\mid Z}(\tau\mid Z) = m(Z) + c(\tau),
  \]
  for some scalar function $c(\cdot)$ independent of $Z$.
  \item Any quantile-aggregated instrument $h_\omega(Z)$ can be written as
  \[
    h_\omega(Z)
    =
    a_\omega\,m(Z) + b_\omega,
    \qquad
    a_\omega = \int_0^1 \omega(\tau)\,d\tau,
    \ \
    b_\omega = \int_0^1 \omega(\tau)\,c(\tau)\,d\tau.
  \]
  \item The optimal Q--LS instrument $h_{\mathrm{opt}}(Z)$ is proportional to
  $m(Z)$ and, up to an irrelevant rescaling of the instrument, the Q--LS
  estimator coincides with the optimal 2SLS estimator:
  \[
    \widehat{\theta}_n^{\mathrm{Q\text{--}LS}}
    \ \equiv\
    \widehat{\theta}_n^{\mathrm{2SLS}} + o_p(n^{-1/2}),
  \]
  so that
  $V_{\mathrm{Q\text{--}LS}} = V_{\mathrm{2SLS}}$.
\end{enumerate}
\end{prop}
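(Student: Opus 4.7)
My plan is to break the proposition into its three parts. (i) and (ii) are straightforward computations that exploit the location structure; the substantive content lies in (iii), where I identify the optimal quantile-aggregated instrument with $m(Z)$ and use this to link Q--LS to optimal 2SLS. The key observation driving the whole argument is that, in a location model, the conditional quantile process depends on $Z$ only through $m(Z)$, which collapses the quantile-generated class $\mathcal{H}$ to a two-dimensional subspace.

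For part (i), I would use the defining property $X\mid Z \overset{d}{=} m(Z) + U$ with $U \indep Z$ to write $F_{X\mid Z}(x\mid Z) = F_U(x - m(Z))$, and then invert in the $x$-argument to obtain $Q_{X\mid Z}(\tau\mid Z) = m(Z) + Q_U(\tau)$; setting $c(\tau) := Q_U(\tau)$ completes the step, since $Q_U$ depends only on the distribution of $U$, which is free of $Z$. For part (ii), I would substitute the result of (i) into the definition of $h_\omega(Z)$ and separate the integral by linearity, which immediately produces $a_\omega = \int_0^1\omega(\tau)\,d\tau$ and $b_\omega = \int_0^1\omega(\tau)\,c(\tau)\,d\tau$.

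The core of the proof is part (iii). The consequence of (ii) is that $\mathcal{H}$ is contained in the two-dimensional subspace spanned by $1$ and $m(Z)$, so computing $h_{\mathrm{opt}}$ reduces to the scalar least-squares problem $\min_{a,b}\,\E[(X-am(Z)-b)^2]$. The normal equations give $b=\E[X]-a\,\E[m(Z)]$ and $\cov(m(Z),X)=a\,\Var(m(Z))$. Since $\E[X\mid Z]=m(Z)$ implies $\cov(m(Z),X)=\Var(m(Z))$, and since in a location model distributional relevance is equivalent to $\Var(m(Z))>0$ (which is guaranteed by Assumption~\ref{ass:QLS-relevance}, because here the only channel through which $F_{X\mid Z}$ varies with $Z$ is $m(Z)$), I solve to get $a=1$ and $b=0$, hence $h_{\mathrm{opt}}(Z)=m(Z)$.

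To close the argument I would invoke that $m(Z)$ is precisely the efficient instrument for 2SLS with a single endogenous regressor under conditional homoskedasticity, so the Q--LS second stage uses the same instrument set $(m(Z),Z_1')$ as optimal 2SLS. Proposition~\ref{prop:QLS-an}, together with the rate condition $K_n^2/n\to 0$, implies that the generated-instrument error is $o_p(n^{-1/2})$ in the relevant moment conditions, giving $\widehat{\theta}_n^{\mathrm{Q\text{--}LS}} = \widehat{\theta}_n^{\mathrm{2SLS}} + o_p(n^{-1/2})$ and matching sandwich variances. The main bookkeeping point, and the only step I expect to require any care, is the \emph{up to an irrelevant rescaling} clause: since 2SLS is invariant to nonzero linear transformations of a single instrument when a constant is included in the regressor set, any proportionality constant between $h_{\mathrm{opt}}$ and $m(Z)$ (or any shift absorbed by the intercept) drops out of both $\widehat{\theta}$ and $V$, so the asymptotic equivalence in (iii) holds in the strongest sense.
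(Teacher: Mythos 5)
Your proposal is correct and follows essentially the same route as the paper's proof: derive $Q_{X\mid Z}(\tau\mid Z)=m(Z)+Q_U(\tau)$ from the location structure, observe that $\mathcal{H}$ collapses to the span of $\{1,m(Z)\}$, solve the resulting two-parameter least-squares problem to get $h_{\mathrm{opt}}(Z)=m(Z)$, and conclude equivalence with optimal 2SLS. You actually supply more detail than the paper (explicit normal equations yielding $a=1$, $b=0$, and the invariance of 2SLS to affine rescaling of a single instrument), but the argument is the same.
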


\begin{proof}
Under the location assumption, $X\mid Z$ has distribution equal to $U$
shifted by $m(Z)$, so its $\tau$-quantile satisfies
$Q_{X\mid Z}(\tau\mid Z) = m(Z) + c(\tau)$, where $c(\tau)$ is the
$\tau$-quantile of $U$ and does not depend on $Z$. Any
$h_\omega\in\mathcal{H}$ is therefore of the form $a_\omega m(Z) + b_\omega$
as stated. Minimizing $\E[(X-h_\omega(Z))^2]$ over $\omega$ then reduces to
minimizing $\E[(X-a m(Z)-b)^2]$ over $(a,b)$, whose unique solution is
$a=1$ and $b=0$ (up to the usual scale normalization of instruments), so
$h_{\mathrm{opt}}(Z)$ is proportional to $m(Z)$. The Q--LS estimator thus
corresponds to 2SLS with the optimal mean-based instrument and has the same
asymptotic distribution.
\end{proof}

Proposition~\ref{prop:qls-2sls-location} shows that, in a conventional
Gaussian or location setting with mean relevance, Q--LS and optimal 2SLS are
asymptotically equivalent: they use instruments that differ only by an
irrelevant scale factor and attain the same efficiency bound.

\begin{lemma}[Equivalence between 2SLS and plug-in OLS on the projected regressor]
\label{lem:2sls-projection}
Let $\{(Y_i,X_i,Z_{1i},Z_{2i})\}_{i=1}^n$ be a sample and consider the linear
structural model
\[
  Y_i = \alpha + \beta X_i + Z_{1i}'\gamma + \varepsilon_i,
  \qquad \mathbb{E}[\varepsilon_i \mid Z_i] = 0,
\]
with $Z_i = (Z_{1i}',Z_{2i}')'$. Let
\[
  S
  :=
  \begin{bmatrix}
    \mathbf{1} & X & Z_1
  \end{bmatrix}
  \in\mathbb{R}^{n\times (1+1+d_1)},
  \qquad
  M
  :=
  \begin{bmatrix}
    Z_1 & G
  \end{bmatrix}
  \in\mathbb{R}^{n\times (d_1+K)},
\]
where $G$ collects a set of $K$ instrument functions (e.g.\ quantile-based
instruments) and $Z_1$ denotes the included exogenous regressors. Let
\[
  P_M := M(M'M)^{-1}M'
\]
be the projection matrix onto the column space of $M$ and define the
projected regressor
\[
  \tilde X := P_M X.
\]

Consider the following two estimators for $\theta =
(\alpha,\beta,\gamma')'$:
\begin{enumerate}[label=(\roman*)]
  \item The 2SLS estimator with instrument matrix $M$:
  \[
    \widehat{\theta}^{\mathrm{2SLS}}
    :=
    (S'P_M S)^{-1} S'P_M Y.
  \]
  \item The plug-in OLS estimator based on the projected regressor
  $\tilde X$:
  \[
    \widehat{\theta}^{\mathrm{OLS-proj}}
    :=
    \big(\tilde S'\tilde S\big)^{-1}\tilde S'Y,
    \qquad
    \tilde S :=
    \begin{bmatrix}
      \mathbf{1} & \tilde X & Z_1
    \end{bmatrix}.
  \]
\end{enumerate}
Then, in finite samples,
\[
  \widehat{\theta}^{\mathrm{2SLS}}
  =
  \widehat{\theta}^{\mathrm{OLS-proj}}.
\]
In particular, the 2SLS coefficient on $X$ coincides with the OLS coefficient
on $\tilde X$ in the regression of $Y$ on $(\tilde X,Z_1)$.
\end{lemma}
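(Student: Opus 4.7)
The plan is to show the equivalence as a direct consequence of two properties of the projection $P_M$: (i) the included exogenous regressors lie in the column space of $M$, so they are fixed by $P_M$; and (ii) $P_M$ is symmetric and idempotent, so inner products through $P_M$ can be rewritten as inner products of the projected objects. No optimization or limit argument is needed; this is purely finite-sample matrix algebra.

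Concretely, the first step is to verify that $P_M S = \tilde S$ column by column. Because $Z_1$ is a block of columns of $M$, one has $P_M Z_1 = Z_1$; similarly $P_M \mathbf{1} = \mathbf{1}$ whenever the intercept lies in the span of $M$ (see the caveat below). By definition $P_M X = \tilde X$, so
\[
  P_M S = [P_M \mathbf{1},\, P_M X,\, P_M Z_1] = [\mathbf{1},\, \tilde X,\, Z_1] = \tilde S.
\]
The second step uses $P_M = P_M' = P_M^2$ to collapse the 2SLS ``sandwich'' through the projection:
\[
  S'P_M S = S'P_M' P_M S = (P_M S)'(P_M S) = \tilde S'\tilde S,
  \qquad
  S'P_M Y = (P_M S)'Y = \tilde S'Y.
\]
Substituting these two identities into the closed form of $\widehat{\theta}^{\mathrm{2SLS}}$ yields
\[
  \widehat{\theta}^{\mathrm{2SLS}} = (S'P_M S)^{-1} S'P_M Y = (\tilde S'\tilde S)^{-1}\tilde S'Y = \widehat{\theta}^{\mathrm{OLS\text{-}proj}},
\]
which is the claimed equality. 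The coefficient-on-$X$ statement then follows by reading off the second entry of $\widehat{\theta}$.

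The only mild subtlety — which I would not call a real obstacle — is making sure the constant $\mathbf{1}$ is indeed fixed by $P_M$. The lemma writes $M = [Z_1\ G]$ without listing $\mathbf{1}$ separately, so one either assumes that $Z_1$ contains a constant (the standard convention in the paper's linear IV setup) or, equivalently, augments $M$ by $\mathbf{1}$; since adjoining a column already in $\mathrm{col}(M)$ does not change the projector, this is without loss. Aside from flagging this point, the proof is a textbook application of projection-matrix identities and requires no further work.
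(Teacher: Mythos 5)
Your proposal is correct and follows essentially the same route as the paper's own proof: verify $P_M S=\tilde S$ column by column (using $P_M Z_1=Z_1$ and $P_M\mathbf{1}=\mathbf{1}$), then collapse $S'P_MS$ and $S'P_MY$ via symmetry and idempotence of $P_M$. The caveat you flag about the constant needing to lie in $\mathrm{col}(M)$ is the same one the paper itself notes, so there is nothing to add.
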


\begin{proof}
By definition of $P_M$, we have
\[
  \tilde S := P_M S
  =
  \begin{bmatrix}
    P_M \mathbf{1} & P_M X & P_M Z_1
  \end{bmatrix}.
\]
Because each column of $Z_1$ is in the column space of $M$, $P_M Z_1 = Z_1$.
Similarly, if the constant is included in $Z_1$ (or in $M$), then
$P_M\mathbf{1} = \mathbf{1}$. By definition, $P_M X = \tilde X$. Hence
$\tilde S = [\mathbf{1},\,\tilde X,\,Z_1]$ as defined above.

The 2SLS estimator can be written as
\[
  \widehat{\theta}^{\mathrm{2SLS}}
  =
  (S'P_M S)^{-1} S'P_M Y.
\]
Using $\tilde S = P_M S$, we have
\[
  S'P_M S = S'\tilde S = \tilde S'\tilde S,
  \qquad
  S'P_M Y = S'\tilde Y = \tilde S'Y,
\]
since $\tilde Y := P_M Y$ and $P_M$ is symmetric and idempotent. Therefore
\[
  \widehat{\theta}^{\mathrm{2SLS}}
  =
  (\tilde S'\tilde S)^{-1}\tilde S'Y
  =
  \widehat{\theta}^{\mathrm{OLS-proj}},
\]
which proves the claim.
\end{proof}

\begin{remark}[Application to Q--LS]
\label{rem:qls-projection}
In the Q--LS setting, let $G$ collect the $K$ quantile-based instruments
$\widehat g_k(Z_i) = Z_i'\widehat\pi(\tau_k)$ and let $M=[Z_1,G]$.
The projected regressor $\tilde X = P_M X$ is exactly the best linear
predictor of $X$ in the span of $(Z_1,G)$:
\begin{equation}\label{eq:xtilde1}
      \tilde X_i
  =
  Z_{1i}'\hat\delta
  + G_i'\hat w,
\end{equation}
for some least-squares coefficients $(\hat\delta,\hat w)$. By
Lemma~\ref{lem:2sls-projection}, the Q--LS 2SLS estimator that uses $(Z_1,G)$
as instruments is numerically equivalent to OLS of $Y$ on $(\tilde X,Z_1)$.
In particular, if we define the Q--LS plug-in regressor as
\begin{equation}\label{eq:xtilde2}
      \widehat X_i^{\mathrm{QLS}} := \tilde X_i = P_M X_i,
\end{equation}

then the plug-in estimator that regresses $Y$ on
$(\widehat X^{\mathrm{QLS}},Z_1)$ is identical (in finite samples) to the
corresponding Q--LS IV estimator.
\end{remark}

\subsection*{General comparison and LATE interpretation}

Outside of the pure location case, Q--LS remains consistent under
distributional relevance even when mean relevance fails or the conditional
mean is misspecified, because it exploits shifts in the \emph{entire}
distribution of $X\mid Z$. In designs where both mean relevance and
distributional relevance hold and the efficient IV $h^*(Z)$ lies in the
closure of the quantile-generated class $\mathcal{H}$, Q--LS again attains
the same efficiency bound as optimal 2SLS. When $h^*(Z)\notin\mathcal{H}$,
Q--LS is efficient within $\mathcal{H}$ but may be slightly less efficient
than an oracle 2SLS that can use arbitrary functions of $Z$ as instruments.
The main gain of Q--LS is robustness: it can remain strongly identified in
settings where the classical first-stage mean shift is close to zero but the
distribution of $X\mid Z$ varies substantially with $Z$.

When the endogenous regressor is (or can be reduced to) a binary treatment
$X=D\in\{0,1\}$, Q--LS also admits a familiar LATE interpretation under
additional structure. In the positive-normalized version, the Q--LS index
$H = h(Z)$ is a convex combination of conditional quantiles of $D\mid Z$,
which collapses possibly high-dimensional instruments into a \emph{single
scalar} encouragement variable. Under standard IV validity and a
no-defiers monotonicity condition with respect to $H$, the 2SLS coefficient
using $H$ as the sole instrument can be written as a \emph{positive-weighted}
average of local average treatment effects for units whose treatment status
changes as $H$ increases. Thus, in binary-treatment applications, Q--LS can
be viewed not only as a strength/efficiency device but also as a way to
construct a scalar monotone instrument that delivers a conventional LATE
interpretation (see Appendix~\ref{app:late} for details). 

In the binary-treatment case, the scalar Q--LS index also admits a LATE
representation with nonnegative weights on adjacent ``index-complier''
groups (Appendix~\ref{app:late}), and these weights can be estimated in
practice, so the researcher can diagnose which parts of the instrument
support contribute most to the Q--LS effect. In addition, because the Q--LS index is built as a convex combination of conditional quantiles, the estimated weight vector \(\hat{w}\) reveals
which parts of the \(D\mid Z\) distribution (e.g.\ central vs.\ tail
quantiles) contribute most to instrument strength.

\section{Implementation and regularization}
\label{sec:implementation}

The population optimal Q--LS instrument $h_{\mathrm{opt}}$ solves the
projection problem in Definition~\ref{def:opt-QLS}. In the finite-grid
implementation, this leads to the least-squares problem
\eqref{eq:w-hat-opt} for the discrete weights $w\in\R^K$. When $K$ is
moderate and the quantile-based instruments are highly collinear, the Gram
matrix $\widehat{G}'\widehat{G}$ can be ill-conditioned, and the
unregularized solution $\widehat{w}_n$ becomes unstable. This is the
familiar ill-posedness of inverse problems and many-instrument IV.

We discuss two regularized versions of Q--LS that address this issue:
ridge-regularized Q--LS and sparse (LASSO) Q--LS.

\subsection{Ridge-regularized Q--LS}
\label{subsec:qls-regularization}

A simple and effective approach is Tikhonov (ridge) regularization. Instead
of \eqref{eq:w-hat-opt}, we solve
\begin{equation}
  \widehat{w}_n^{\lambda}
  \in
  \arg\min_{w\in\R^K}
  \left\{
    \frac{1}{n}\sum_{i=1}^n
      \big( X_i - \widehat{G}_i' w \big)^2
    + \lambda_n \|w\|^2
  \right\},
  \label{eq:w-hat-ridge}
\end{equation}
which yields the closed-form solution
\begin{equation}
  \widehat{w}_n^{\lambda}
  =
  \big(\widehat{G}'\widehat{G} + n\lambda_n I_K\big)^{-1}
  \widehat{G}'X.
  \label{eq:w-hat-ridge-closed}
\end{equation}
The corresponding regularized Q--LS instrument is
\begin{equation}
  \widehat{X}_i^{\mathrm{Q\text{--}LS},\lambda}
  :=
  \widehat{G}_i'\widehat{w}_n^{\lambda},
  \qquad i=1,\dots,n.
  \label{eq:Xhat-qls-lambda}
\end{equation}
Finite-sample stability is improved because the ridge term
$n\lambda_n I_K$ stabilizes the inversion of $\widehat{G}'\widehat{G}$ and
shrinks the weights toward zero. Under the same conditions as in
Proposition~\ref{prop:QLS-consistency}, and provided $\lambda_n\to 0$ at a
suitable rate (for example, $\lambda_n\downarrow 0$ and
$K_n^2/(n\lambda_n)\to 0$), the regularized weights $\widehat{w}_n^{\lambda}$
remain consistent for the population optimal weights and
$\widehat{X}_i^{\mathrm{Q\text{--}LS},\lambda}$ converges in $L^2$ to
$h_{\mathrm{opt}}(Z_i)$. The asymptotic distribution in
Proposition~\ref{prop:QLS-an} therefore continues to hold with
$\widehat{X}_i^{\mathrm{Q\text{--}LS},\lambda}$ replacing
$\widehat{X}_i^{\mathrm{Q\text{--}LS}}$.

Two additional practical devices can be used in tandem with the ridge
penalty. First, one may limit the number of quantile instruments $K$ to a
moderate value (e.g.\ $K=5$ or $K=10$), or work with a low-dimensional
principal-component representation of the columns of $\widehat{G}$. Second,
instead of solving \eqref{eq:w-hat-ridge} explicitly, one can treat each
$\widehat{g}_k(Z_i)$ as a separate instrument, run 2SLS or LIML with the
full instrument vector, and let the IV estimator choose an implicit linear
combination of quantile-based instruments. These strategies reduce the
effective instrument dimension and mitigate many-instrument bias while
preserving the distributional information exploited by Q--LS.

\subsection{Sparse Q--LS via LASSO}
\label{subsec:qls-lasso}

The ill-posedness of the optimal weight problem arises because the
finite-grid instrument dictionary $\widehat{G}$ can be high-dimensional and
highly collinear. An alternative to ridge regularization is to enforce
sparsity in the weights and let the data select a subset of quantile-based
instruments. This leads naturally to an $\ell_1$-penalized (LASSO) version
of Q--LS.

Given the $K$ quantile-based instruments
$\widehat{g}_k(Z_i) = Z_i'\widehat{\pi}_n(\tau_k)$, stacked in
$\widehat{G}_i$ and $\widehat{G}$ as in \eqref{eq:sample-qr}, we define the
LASSO-regularized weights as
\begin{equation}
  \widehat{w}_n^{\mathrm{LASSO}}
  \in
  \arg\min_{w\in\R^K}
  \left\{
    \frac{1}{n}\sum_{i=1}^n
      \big( X_i - \widehat{G}_i' w \big)^2
    + \lambda_n \|w\|_1
  \right\},
  \label{eq:w-hat-lasso}
\end{equation}
where $\lambda_n>0$ is a tuning parameter. The corresponding sparse Q--LS
instrument is
\begin{equation}
  \widehat{X}_i^{\mathrm{Q\text{--}LS,L}}
  :=
  \widehat{G}_i'\widehat{w}_n^{\mathrm{LASSO}}
  =
  \sum_{k=1}^K \widehat{w}_{n,k}^{\mathrm{LASSO}}\,\widehat{g}_k(Z_i),
  \qquad i=1,\dots,n.
  \label{eq:Xhat-qls-lasso}
\end{equation}
By construction, only a subset of the quantile indexes has
$\widehat{w}_{n,k}^{\mathrm{LASSO}}\neq 0$, so the effective instrument
dimension is reduced and the many-instrument problem is mitigated. In
practice, we also consider a post-LASSO refinement: letting
$\widehat{S}_n = \{k: \widehat{w}_{n,k}^{\mathrm{LASSO}}\neq 0\}$ denote the
selected support, we re-estimate the weights by least squares restricted to
$\widehat{S}_n$ and use the resulting fitted values as
$\widehat{X}_i^{\mathrm{Q\text{--}LS,L}}$.

The second-stage estimator is then defined as in
\eqref{eq:theta-QLS}, with $\widehat{X}^{\mathrm{Q\text{--}LS}}$ replaced by
$\widehat{X}^{\mathrm{Q\text{--}LS,L}}$. From the perspective of the
structural parameter $\theta_0$, the first stage (quantile processes and
weights) is a high-dimensional nuisance. The IV moment condition is
Neyman-orthogonal with respect to this nuisance, so that, under standard
sparsity and rate conditions for LASSO and a suitable choice of $\lambda_n$,
the estimation error in $\widehat{X}_i^{\mathrm{Q\text{--}LS,L}}$ is
$o_p(n^{-1/2})$ in the moment equations. As a result, the asymptotic
normality result in Proposition~\ref{prop:QLS-an} continues to hold with
$\widehat{X}_i^{\mathrm{Q\text{--}LS,L}}$ in place of
$\widehat{X}_i^{\mathrm{Q\text{--}LS}}$, and the usual
heteroskedasticity-robust 2SLS covariance matrix computed with the sparse
Q--LS instrument remains valid for inference.

In empirical work, the tuning parameter $\lambda_n$ can be chosen either by
cross-validation (targeting prediction of $X$) or by more conservative
theoretical rules designed for high-dimensional linear regression. A
moderate level of penalization, combined with post-LASSO refitting, tends to
yield stable first stages and good finite-sample size for Q--LS inference.

\subsection{Positive normalized Q--LS weights}\label{sec:pn_qls}

Our baseline Q--LS instrument is a linear functional of the conditional quantile process,
\[
h_\omega(Z)=\int_0^1 \omega(\tau)\,Q_{X\mid Z}(\tau\mid Z)\,d\tau,
\]
with $\omega\in L^2(0,1)$. In finite samples, we approximate this object using a quantile grid
$0<\tau_1<\cdots<\tau_K<1$ and fitted quantiles $\hat g_k(Z)\approx \widehat Q_{X\mid Z}(\tau_k\mid Z)$,
so that $h(Z)\approx \hat G(Z)'w$ for $\hat G(Z)=(\hat g_1(Z),\ldots,\hat g_K(Z))'$.

For interpretability and stability, we also consider a constrained version that enforces
\emph{positive, normalized weights}:
\begin{equation}\label{eq:pn_weights}
w_{\mathrm{PN}} \in \arg\min_{w\in\mathbb{R}^K}\ \E\!\left[(X-\hat G(Z)'w)^2\right]
\quad \text{s.t.}\quad w_k\ge 0\ \ \forall k,\qquad \mathbf{1}'w=1.
\end{equation}
Equivalently, in population form, this corresponds to restricting $\omega(\tau)\ge 0$ and
$\int_0^1\omega(\tau)\,d\tau=1$.

\paragraph{Interpretation.}
Under these constraints, $h_{\mathrm{PN}}(Z)=\hat G(Z)'w_{\mathrm{PN}}$ is an L-statistic: it is a
convex combination of conditional quantiles, i.e.\ a ``quantile average'' index. Positivity also
preserves distributional ordering: if the conditional distribution of $X$ under $Z=z_2$ first-order
stochastically dominates that under $Z=z_1$ (so that $Q_{X\mid Z=z_2}(\tau)\ge Q_{X\mid Z=z_1}(\tau)$
for all $\tau$), then $h_{\mathrm{PN}}(z_2)\ge h_{\mathrm{PN}}(z_1)$ for any admissible weight function.
This rules out sign-cancellation across quantiles and yields an index whose movements track
distributional shifts in a single direction.

\paragraph{Finite-sample stability.}
Because extreme quantiles are noisier to estimate, unconstrained Q--LS may assign large positive and
negative weights that amplify estimation error. The simplex constraint $w\ge 0$, $\mathbf{1}'w=1$
regularizes the instrument, bounding its variability and often improving robustness when $\hat G$ is
generated (e.g.\ via quantile regression). In our empirical implementation we compute
$w_{\mathrm{PN}}$ by quadratic programming, and we can combine it with cross-fitting to further limit
overfitting of $\hat G$.

\subsection{Practical recommendations}
\label{subsec:practical}

We view Q--LS as a complement to, rather than a replacement for, the standard
2SLS toolkit with weak instruments. In practice, applied work can proceed in
three steps: (i) run conventional mean-based diagnostics, (ii) assess
distributional relevance, and (iii) choose between 2SLS and Q--LS (and its
regularised variants) based on a small menu of test statistics and simple
stability checks.

\paragraph{Step 1: Start from the usual 2SLS diagnostics.}
\begin{itemize}
  \item Estimate the conventional first stage $X$ (endogenous variable) on $Z$ (instrument) and controls and
  report the usual weak-IV statistics (e.g. the cluster-robust
  Kleibergen--Paap $F$-statistic or the Stock--Yogo $F$ for the mean first
  stage).
  \item If the mean first stage is clearly strong (e.g. robust $F \gtrsim 10$),
  treat linear 2SLS with the conventional instrument as your baseline and view
  Q--LS mainly as a robustness check.
  \item If the mean first stage is weak (e.g. robust $F \ll 10$), proceed to
  distributional diagnostics rather than discarding the design.
\end{itemize}

\paragraph{Step 2: Diagnose distributional relevance.}
\begin{itemize}
  \item Run a small set of conditional quantile regressions of $X$ on $Z$ and
  controls (e.g. $\tau \in \{0.10,0.25,0.50,0.75,0.90\}$) and test the joint
  significance of $Z$ in each quantile regression. Large $F$-statistics at
  upper quantiles with small or insignificant mean effects are indicative of
  purely distributional relevance.
  \item Graphically, compare the estimated conditional quantiles or empirical
  CDFs of $X$ across instrument values (e.g. pre/post policy). Pronounced
  tail shifts with little change in the median or mean point towards a
  distributional design where Q--LS is useful.
\end{itemize}

\paragraph{Step 3: Choosing $K$, regularisation, and inference.}
\begin{itemize}
  \item[(i)] \textbf{Quantile grid.} Use a moderate grid of conditional quantiles,
  for example $K \in \{5,10,20\}$ equally spaced over $(0,1)$, possibly
  excluding extreme tails (e.g.\ $[\tau_{\min},1-\tau_{\min}]$ with
  $\tau_{\min} \in [0.02,0.05]$). In most empirical samples $K=10$ provides a
  good compromise between flexibility and stability.
  \item[(ii)] \textbf{Regularisation.} If $K$ is large relative to $n$ or the
  quantile regressions are noisy, use ridge- or LASSO-regularised Q--LS
  (Q–LS--R, Q–LS--L1, Q–LS--L2) rather than the unpenalised version. Our
  simulations suggest that ridge (Q–LS--R) is a conservative default, while
  LASSO variants are helpful in very many-instrument designs.
  \item[(iii)] \textbf{Inference.} Always report heteroskedasticity-robust (and, in
  panels, cluster-robust) standard errors based on the Q--LS sandwich
  variance (or its penalised analogue). Since Q--LS produces a conventional
  instrument $\hat h(Z)$, practitioners can also apply standard weak-IV-robust
  tools (e.g.\ Anderson--Rubin or conditional likelihood ratio tests) using
  $\hat h(Z)$ as the excluded instrument.
\end{itemize}

\paragraph{Step 4: Stability checks and reporting.}
\begin{itemize}
  \item[(a)] \textbf{Compare Q--LS and 2SLS.} Report side-by-side Q--LS and
  classical 2SLS estimates and standard errors. In designs with strong mean
  relevance, point estimates and standard errors should be similar. In
  designs with weak mean relevance but strong distributional relevance, Q--LS
  is expected to deliver comparable point estimates with much tighter
  confidence intervals.
  \item[(b)] \textbf{Check robustness to $(K,\lambda)$.} Re-estimate Q--LS on a few
  nearby quantile grids (e.g.\ $K=5,10,20$) and, for regularised variants, a
  small set of tuning parameters $\lambda$. Large swings in the coefficient
  across neighbouring grids or penalties are a warning sign that the design
  may be too fragile for sharp causal interpretation.
  \item[(c)] \textbf{Document instrument strength.} Alongside the conventional
  first-stage $F$, report a ``distributional $F$'': the joint significance of the quantile-generated instruments in the Q--LS first stage. This makes transparent what Q--LS is exploiting that 2SLS misses.
\end{itemize}

Table~\ref{tab:practical_guidelines} summarises these recommendations in a
simple diagnosis chart.

\begin{table}[htbp]\centering
\caption{Rule-of-thumb diagnostics and suggested estimator}
\label{tab:practical_guidelines}
\begin{tabular}{p{0.34\textwidth}p{0.58\textwidth}}
\hline\hline
\textbf{Diagnostic pattern} & \textbf{Suggested practice} \\
\hline
Strong mean first stage ($F_{\text{mean}} \gtrsim 10$); quantile regressions show similar shifts across quantiles
&
Use conventional 2SLS with the mean-based instrument as the primary estimator; report Q--LS as a robustness check (expect similar point estimates and standard errors).
\\[0.6em]
Weak mean first stage ($F_{\text{mean}} \ll 10$); quantile regressions show strong tail shifts (large $F$ at high quantiles)
&
Treat the design as a distributional IV setting. Report Q--LS (and a regularised variant if $K$ is large) as the main estimator, with weak-IV-robust inference based on the Q--LS instrument; include 2SLS for comparison but emphasise its weak-IV limitations.
\\[0.6em]
Both mean and distributional diagnostics weak (no clear shifts in mean or quantiles of $X$ given $Z$)
&
Interpret any IV estimates with caution. Neither 2SLS nor Q--LS has a strong effective first stage; focus on reduced-form estimates or seek alternative instruments or sources of identification.
\\[0.6em]
Q--LS and 2SLS estimates differ markedly in sign or magnitude, or Q--LS is highly sensitive to $(K,\lambda)$
&
Treat results as exploratory. Investigate model misspecification (functional form, outliers, covariates), reconsider the quantile specification, and report the instability explicitly rather than relying on a single specification.
\\
\hline\hline
\end{tabular}
\end{table}

\section{Simulation study}
\label{sec:sim}
This section will examine the finite-sample performance of Q--LS and its
regularised variants in a set of Monte Carlo designs. We focus on three
types of designs that map directly to our theoretical discussion. (A) strong mean relevance (benchmark Gaussian case), (B) Trigonometric and distributional instruments (variance shifts) and (C) mixed mean and distributional relevance.

The following parts of the data-generating processes (DGPs) are the same in all three designs;

\begin{itemize}
\item The structural equation
\begin{equation*}
Y = \beta_0 + Z_1\beta_1 + X \beta_x + \varepsilon
\end{equation*}
where the parameter on the endogenous variable $X$, $\beta_x=1$ is the \textit{parameter of interest}, and $\beta_0=1, \beta_1=1$ with $Z_1\sim N(0,1)$.
\item The errors between the structural equation ($\varepsilon$) and first-stage ($\nu$) follow the following process,
\begin{align*}
  (\varepsilon,\nu)&\sim N\left(0,
  \begin{pmatrix}
    1 & 0.6 \\
    0.6 & 1
  \end{pmatrix}\right). \\
\end{align*}
\item We will consider sample sizes representative of typical applied work $n=500,1000$ and we run 1000 replications.
\end{itemize}

We compare 10 estimators:
\begin{enumerate}
\item Classical IV where the first stage is a least squares projection with just first order terms i.e.
\begin{align}
X &=  \zeta_0 + \zeta_1Z_1 +\zeta_2Z_2 + e_x 
\end{align}
\item Our Q–LS with \textit{all quantiles equally weighted}, first stage is estimated as a second order polynomial,
\begin{align}
X &=  \zeta_{\tau_{k,0}} + \zeta_{\tau_{k,1}}Z_1 +\zeta_{\tau_{k,2}}Z_2+ \zeta_{\tau_{k,3}}Z_1^2 +\zeta_{\tau_{k,4}}Z_2^2+ \zeta_{\tau_{k,5}}Z_1Z_2 + e_{x,\tau_k} \label{12ndploy}
\end{align}

for a grid of quantiles $\{\tau_k\}$. The resulting fitted quantile regression functions are used to generate a prediction of the endogenous variable by an equally weighted average of the quantile regression fitted values,

\begin{enumerate}
\item[a.] Estimate of $\beta_x$ is from analytical 2SLS analogue formulas \eqref{eq:theta-QLS} denoted Q–LS-a.
\item[b.] Estimate of $\beta_x$ is from a plug-in OLS regression of the structural equation where the equally weighted average of quantile fitted values is used in place of $X$, denoted Q–LS.
\end{enumerate}

\item Distributionally relevant control function, first stage is also estimated as a second order polynomial \eqref{12ndploy} and the control function is estimates as in \cite{cfnsv20_cfa}.
\begin{align}
y &= \beta_0 +\beta_1z_1+\beta_xx + \beta_v\hat{v} + e_y \label{eq:drcf_sim}
\end{align}
where $\hat{v}\approx F_X(X|Z)$ is a control function and is defined in \eqref{eq:FXhat-quantile-app}. Estimate of $\beta_x$ is from \eqref{eq:drcf_sim}, denoted DR-IV.

\item Our Q–LS with \textit{all quantiles weighted by ridge regression coefficients}, Same as estimator (2) but where the weights are ridge regression coefficients. 
\begin{enumerate}
\item[a.] Estimate of $\beta_x$ is from analytical 2--SLS analogue formulas \eqref{eq:theta-QLS} denoted Q–LS-R-a.
\item[b.] Estimate of $\beta_x$ is from a plug-in OLS regression of the structural equation, where the ridge weighted average of quantile fitted values is used in place of $X$, denoted Q–LS-R.
\end{enumerate}
\item Our Q–LS with \textit{all LASSO selected quantiles are weighted equally weighted (all other given zero weights)}, Same as estimator (2) but only LASSO selected quantile fitted values given non-zero weights.
\begin{enumerate}
\item[a.] Estimate of $\beta_x$ is from analytical 2--SLS analogue formulas \eqref{eq:theta-QLS} denoted Q–LS-L1-a.
\item[b.] Estimate of $\beta_x$ is from a plug-in OLS regression of the structural equation where the LASSO selected weighted average of quantile fitted values is used in place of $X$ denoted Q–LS-L1.
\end{enumerate}

\item Our Q–LS with \textit{all quantiles weighted by LASSO regression coefficients}, Same as estimator (2) but where the weights are LASSO regression coefficients. 
\begin{enumerate}
\item[a.] Estimate of $\beta_x$ is from analytical 2--SLS formulas in the main paper denoted Q–LS-L2-a.
\item[b.] Estimate of $\beta_x$ is from a plug-in OLS regression of the structural equation where the LASSO weighted average of quantile fitted values is used in place of $X$ denoted Q–LS-L2.
\end{enumerate}
\end{enumerate}

The weights for the weighted average of the quantiles are generated by sample splitting. The first 50\% of the data is used to estimate the Ridge/LASSO regression regularisation parameters with 10 fold cross-validation predication accuracy. The estimated hyperparamter and the other 50\% of the data is then used to estimated the Ridge/LASSO coefficient estimate.

For all Q–LS variants and DR-CF we consider quantiles between $v=0.01$ and $1-v$ and set the gap between the quantiles to $c=\{0.10,0.05,0.01,0.001\}$ this corresponds to using $K=\{10,20,99,981\}$ quantiles to generate the average of the quantile regression fitted values of $X$ or approximate the CDF ($\hat{v}$). The trimming of the quantiles follows \cite{cfnsv20_cfa}.

Simulation results will be reported in terms of bias, root mean squared error (RMSE), and empirical 95\% coverage of the simulation estimate.

\subsection*{Design A: Strong mean relevance (benchmark Gaussian case)}

The first design is a ``best-case'' benchmark in which classical mean
relevance holds and the conditional distribution of $X\mid Z$ belongs to a
Gaussian location family. The data-generating process of the first stage is
\begin{align*}
X & = \alpha_1 Z_1 + \alpha_2 Z_2 + \nu \\
Z_2 & \sim \mathcal{N}(0,1)
\end{align*}
with $\alpha_1=\alpha_2=1$.

In this design, both 2SLS and Q--LS are correctly specified and the
conditions of Proposition~\ref{prop:qls-2sls-location} hold. The simulation
will document:
\begin{itemize}
  \item similarity of bias, RMSE and empirical coverage for Q--LS and 2--SLS;
  \item sensitivity of Q--LS to the choice of quantile grid and
  regularisation in a setting where both estimators are efficient.
\end{itemize}

\subsection*{Design B: Trigonometric and distributional instruments}

The second design studies a case where the excluded instrument $Z_2$ generates
rich non-linear and distributional variation in $X$, while the \emph{linear}
first stage of $X$ on $Z_2$ can be arbitrarily weak. The data-generating process of the first stage is
\begin{align}
X & =  \alpha_1 Z_1 - \cos(\omega Z_2) + \exp(\gamma Z_2)\nu \label{eq:designB-firststage} \\
Z_2 & \sim \mathcal{N}(0,1) \nonumber
\end{align}
with $\alpha_1 = 1$ and a variance parameter $\gamma = 0.5$.

Hence, conditional on $(Z_1,Z_2)$,
\[
  \mathbb{E}[X \mid Z_1,Z_2]
  = \alpha_1 Z_1 - \cos(\omega Z_2),
  \qquad
  \Var(X \mid Z_1,Z_2)
  = \exp(2\gamma Z_2).
\]
The excluded instrument $Z_2$ affects the conditional mean through the nonlinear term $\cos(\omega Z_2)$ and the conditional dispersion through $\exp(\gamma Z_2)$. The conditional distribution $F_{X\mid Z}(\cdot \mid Z_1,Z_2)$
is therefore strongly distributionally relevant in the sense of the main text.

Note that, since $Z_2 \sim \mathcal{N}(0,1)$ is symmetric around zero and
$z \mapsto \cos(\omega z)$ is an even function, we have
\[
  Cov\big(Z_2, \cos(\omega Z_2)\big)
  = \mathbb{E}\big[-Z_2 \cos(\omega Z_2)\big] = 0
\]
for all $\omega \ge 0$. Moreover, $\nu$ is independent of $Z_2$ and has mean
zero, so $Cov(Z_2,\exp(\gamma Z_2)\nu)=0$. Thus
\[
  Cov(X,Z_2) = 0
\]
in the population for every value of $\omega$. Classical linear IV based on the
regression of $X$ on $Z_2$ therefore has an exactly zero linear first stage,
even though $Z_2$ induces strong nonlinear and distributional variation in $X$.

\paragraph{Sub-designs B1--B3.}
We consider three values of the frequency parameter $\omega$:

\begin{itemize}
  \item \textbf{Design B1 ($\omega = 0$).}
  \[
    X = \alpha_1 Z_1 + \exp(\gamma Z_2)\,\nu.
  \]
  The conditional mean is linear in $Z_1$ and does not depend on $Z_2$,
  \[
    \mathbb{E}[X\mid Z_1,Z_2] = \alpha_1 Z_1,
  \]
  so the excluded instrument $Z_2$ is \emph{purely distributional}: it affects
  only the variance of $X$ through $\exp(\gamma Z_2)$, and
  $\mathbb{E}[X\mid Z_2]$ is constant. Classical 2SLS that treats $Z_2$ as a
  mean shifter is therefore weak, while  Q--LS can exploit the
  variance shifts.

  \item \textbf{Design B2 ($\omega = 0.01$).}
  \[
    X = \alpha_1 Z_1 - \cos(0.01 Z_2) + \exp(\gamma Z_2)\,\nu.
  \]
  Here
  \[
    \mathbb{E}[X\mid Z_2] = -\cos(0.01 Z_2)
  \]
  is non-constant but very flat. The linear projection of $X$ on $Z_2$ has a
  small slope and a low first-stage $F$-statistic, so classical 2SLS using
  $Z_2$ appears weak. Nevertheless, the true mean and the conditional
  distribution depend nontrivially on $Z_2$, so the optimal Q--LS instrument
  $h_{\mathrm{opt}}(Z)$ is non-degenerate and remains strongly relevant.

  \item \textbf{Design B3 ($\omega = 0.5$).}
  \[
    X = \alpha_1 Z_1 - \cos(0.5 Z_2) + \exp(\gamma Z_2)\,\nu.
  \]
  The conditional mean
  \[
    \mathbb{E}[X\mid Z_2] = -\cos(0.5 Z_2)
  \]
  is highly nonlinear and oscillatory. As $\omega$ increases, the covariance
  $\mathrm{Cov}(X,Z_2)$, and hence the linear first-stage coefficient of $X$
  on $Z_2$, becomes small even though the amplitude of $\cos(\omega Z_2)$
  remains large. Thus the linear first stage for 2SLS can look weak or
  misleading, while the quantile-generated space used by Q--LS can approximate
  $\cos(0.5 Z_2)$ well, delivering a strong and stable first stage in the
  Q--LS sense.
\end{itemize}

Overall, Designs B1--B3 trace a progression from a purely distributional
instrument (B1, no mean effect of $Z_2$) through an ``almost mean-irrelevant''
case (B2, very flat mean effect) to a design with strong nonlinear mean and
variance effects (B3). This family highlights that classical first-stage
diagnostics based on the linear regression of $X$ on $Z_2$ can substantially
understate instrument strength, whereas Q--LS remains well identified whenever
$X$ has a nonzero projection in the quantile-generated space.

\subsection*{Design C: Mixed mean and distributional relevance}

The third design introduces both mean and variance effects:
\begin{itemize}
  \item $Z_2\in\{0,1\}$ with $\p(Z_2=0)=\p(Z_2=1)=1/2$.
  \item $X =  \alpha_1 Z_1 + \alpha_2 Z_2 + \sigma(Z_2) \nu$, with $\sigma(0)=1$, $\sigma(1)=3$, so that $\E[X\mid Z_2]\neq 0$, and $\alpha_1=\alpha_2=1$.
\end{itemize}
Here, both mean relevance and distributional relevance hold, but the
contribution of each can be tuned by varying $(\alpha_2,\sigma(0),\sigma(1))$.
The simulation will:
\begin{itemize}
  \item study the performance of Q--LS and 2SLS as the mean effect
  $\alpha_2$ becomes small while variance shifts remain large;
  \item assess whether Q--LS remains competitive with 2SLS when mean effects
  are strong, and dominates when mean effects are weak but distributional
  shifts are large.
\end{itemize}

Note as $Z_2$ is binary in this design we set $\zeta_{\tau_{k,4}}=0$ in \eqref{12ndploy} to avoid a singular QR design matrix, for all Q--LS variants and DR-CF.

\subsection*{Simulation Results}

\begin{table}[htbp]
\centering
\caption{Design A and C finite-sample performance} \label{tab:dAC}
\resizebox{\textwidth}{!}{\begin{tabular}{clcccc}
  \hline \hline
   & & \multicolumn{2}{c}{Design A} & \multicolumn{2}{c}{Design C} \\
   \cline{3-6}
K & Estimator & Bias(RMSE)Cov. & Bias(RMSE)Cov. & Bias(RMSE)Cov. & Bias(RMSE)Cov. \\ 
  \hline
 -- & Classic IV & 0.001(0.046)0.942 & -0.001(0.032)0.946 & -0.010(0.094)0.961 & -0.012(0.070)0.957 \\ 
  10 & DR-CF & 0.046(0.063)0.689 &  0.038(0.048)0.623 &  0.058(0.066)0.498 &  0.053(0.058)0.312 \\ 
  10 & Q–LS & 0.003(0.048)0.999 &  0.001(0.033)1.000 &  0.615(0.686)0.878 &  0.663(0.718)0.377 \\ 
  10 & Q–LS-a & 0.004(0.046)0.937 &  0.001(0.031)0.946 &  0.021(0.095)0.928 & -0.560(17.875)0.934 \\ 
  10 & Q–LS-R & 0.005(0.049)0.999 &  0.002(0.033)0.999 &  0.042(0.272)0.966 &  0.023(0.189)0.962 \\ 
  10 & Q–LS-R-a & 0.004(0.046)0.939 &  0.001(0.031)0.946 &  0.010(0.087)0.941 & -0.001(0.068)0.942 \\ 
  10 & Q–LS-L1 & 0.002(0.057)0.992 &  0.000(0.040)0.994 &  0.018(0.271)0.957 &  0.011(0.193)0.952 \\ 
  10 & Q–LS-L1-a & 0.004(0.046)0.938 &  0.001(0.031)0.946 &  0.010(0.087)0.942 & -0.001(0.068)0.941 \\ 
  20 & DR-CF & 0.026(0.051)0.818 &  0.018(0.035)0.828 &  0.032(0.047)0.773 &  0.027(0.036)0.742 \\ 
  20 & Q–LS & 0.004(0.047)1.000 &  0.001(0.032)0.999 &  0.192(0.220)1.000 &  0.187(0.202)1.000 \\ 
  20 & Q–LS-a & 0.004(0.046)0.938 &  0.001(0.031)0.946 &  0.012(0.088)0.942 &  0.000(0.067)0.939 \\ 
  20 & Q–LS-R & 0.005(0.047)1.000 &  0.001(0.032)0.999 &  0.036(0.264)0.967 &  0.018(0.186)0.966 \\ 
  20 & Q–LS-R-a & 0.004(0.046)0.938 &  0.001(0.031)0.946 &  0.010(0.087)0.941 & -0.001(0.068)0.942 \\ 
  20 & Q–LS-L1 & 0.002(0.057)0.991 &  0.000(0.041)0.989 &  0.019(0.269)0.958 &  0.011(0.192)0.958 \\ 
  20 & Q–LS-L1-a & 0.004(0.046)0.939 &  0.001(0.031)0.946 &  0.010(0.087)0.940 & -0.001(0.068)0.942 \\ 
  99 & DR-CF & 0.015(0.048)0.849 &  0.008(0.032)0.858 &  0.018(0.040)0.873 &  0.011(0.028)0.880 \\ 
  99 & Q–LS & 0.004(0.046)1.000 &  0.001(0.032)0.999 &  0.015(0.092)1.000 &  0.001(0.069)1.000 \\ 
  99 & Q–LS-a & 0.004(0.046)0.938 &  0.001(0.031)0.946 &  0.009(0.088)0.943 & -0.002(0.067)0.945 \\ 
  99 & Q–LS-R & 0.006(0.048)0.999 &  0.001(0.033)0.998 &  0.030(0.266)0.966 &  0.014(0.189)0.965 \\ 
  99 & Q–LS-R-a & 0.004(0.046)0.937 &  0.001(0.031)0.945 &  0.010(0.088)0.941 & -0.001(0.068)0.943 \\ 
  99 & Q–LS-L1 & 0.004(0.060)0.991 &  0.002(0.045)0.982 &  0.015(0.264)0.959 &  0.011(0.192)0.958 \\ 
  99 & Q–LS-L1-a & 0.004(0.046)0.940 &  0.001(0.032)0.946 &  0.011(0.087)0.940 & -0.001(0.068)0.941 \\ 
  981 & DR-CF & 0.015(0.047)0.849 &  0.007(0.032)0.858 &  0.016(0.039)0.880 &  0.010(0.028)0.885 \\ 
  981 & Q–LS & 0.004(0.046)1.000 &  0.001(0.032)0.999 &  0.015(0.092)1.000 &  0.001(0.069)1.000 \\ 
  981 & Q–LS-a & 0.004(0.046)0.938 &  0.001(0.031)0.946 &  0.009(0.088)0.944 & -0.002(0.067)0.945 \\ 
  981 & Q–LS-R & 0.004(0.052)0.997 &  0.000(0.037)0.998 &  0.001(0.252)0.959 & -0.009(0.184)0.956 \\ 
  981 & Q–LS-R-a & 0.004(0.046)0.939 &  0.001(0.032)0.946 &  0.009(0.087)0.943 & -0.002(0.068)0.944 \\ 
  981 & Q–LS-L1 & 0.003(0.062)0.989 &  0.002(0.046)0.977 &  0.016(0.264)0.959 &  0.012(0.192)0.959 \\ 
  981 & Q–LS-L1-a & 0.004(0.046)0.940 &  0.001(0.031)0.946 &  0.011(0.087)0.942 & -0.001(0.068)0.943 \\ 
   \cline{3-6}
   &  &  n=500 & n=1000 &  n=500 &  n=1000 \\ 
     \hline \hline
\end{tabular}}
 {\raggedright \footnotesize Note: Bias and RMSE are Monte Carlo averages over 1{,}000 replications. ``Cov.'' is the empirical coverage of nominal 95\% confidence intervals. K is the number of quantiles used. The full results, including average asymptotic standard error and simulation standard deviation, and Q–LS-L2 are available upon request. \par}
\end{table}

\begin{table}[htbp]
\centering
\caption{Design B finite-sample performance} \label{tab:dB}
\resizebox{\textwidth}{!}{\begin{tabular}{clcccc}
   \hline \hline
K & Estimator & Bias(RMSE)Cov. & Bias(RMSE)Cov. & Bias(RMSE)Cov. & Bias(RMSE)Cov. \\ 
  \hline
-- & Classic IV &  0.199(5.745)0.989 &  0.627(25.900)0.988 & -1.063(34.095)0.993 &  2.187(67.609)0.987 \\ 
  10 & DR-CF &  0.122(0.135)0.386 &  0.106(0.113)0.209 &  0.116(0.129)0.412 &  0.100(0.107)0.234 \\ 
  10 & Q–LS & -0.127(0.937)0.991 & -0.638(0.979)0.961 &  0.007(0.446)0.982 & -0.167(0.390)0.930 \\ 
  10 & Q–LS-a &  0.026(6.982)0.954 &  1.297(27.315)0.977 &  0.121(0.627)0.904 &  0.064(0.291)0.932 \\ 
  10 & Q–LS-R &  0.212(0.910)0.982 &  0.162(1.087)0.987 &  0.003(0.437)0.989 & -0.046(0.361)0.982 \\ 
  10 & Q–LS-R-a &  0.318(6.496)0.937 &  5.713(156.401)0.952 &  0.056(1.340)0.899 & -0.031(2.164)0.934 \\ 
  10 & Q–LS-L1 & -0.008(0.794)0.984 & -0.027(0.961)0.980 & -0.067(0.443)0.986 & -0.083(0.375)0.979 \\ 
  10 & Q–LS-L1-a &  1.366(29.108)0.932 &  0.730(18.423)0.959 &  0.063(1.258)0.907 &  0.068(0.995)0.938 \\ 
  20 & DR-CF &  0.075(0.097)0.717 &  0.057(0.070)0.694 &  0.071(0.094)0.716 &  0.053(0.067)0.708 \\ 
  20 & Q–LS &  0.544(0.989)0.992 &  0.268(0.953)0.988 &  0.184(0.445)0.992 &  0.079(0.335)0.991 \\ 
  20 & Q–LS-a &  0.539(4.909)0.915 &  0.472(7.595)0.951 &  0.113(0.564)0.884 &  0.052(0.227)0.934 \\ 
  20 & Q–LS-R &  0.222(0.915)0.984 &  0.179(0.960)0.988 &  0.000(0.441)0.987 & -0.054(0.359)0.983 \\ 
  20 & Q–LS-R-a & -0.037(10.468)0.936 &  0.185(4.615)0.954 &  0.094(0.576)0.903 &  0.136(3.232)0.932 \\ 
  20 & Q–LS-L1 & -0.023(0.783)0.979 & -0.048(0.919)0.983 & -0.068(0.445)0.985 & -0.083(0.369)0.982 \\ 
  20 & Q–LS-L1-a &  0.386(5.726)0.934 &  3.381(93.580)0.958 &  0.088(1.120)0.907 &  0.034(0.365)0.937 \\ 
  99 & DR-CF &  0.045(0.078)0.835 &  0.027(0.052)0.861 &  0.043(0.077)0.830 &  0.025(0.049)0.867 \\ 
  99 & Q–LS &  0.744(1.045)0.990 &  0.714(1.068)0.993 &  0.170(0.438)0.996 &  0.057(0.325)0.993 \\ 
  99 & Q–LS-a &  0.493(4.270)0.910 &  0.289(3.596)0.942 &  0.077(0.729)0.889 &  0.036(0.234)0.931 \\ 
  99 & Q–LS-R &  0.205(0.929)0.982 &  0.134(0.953)0.988 & -0.009(0.447)0.987 & -0.064(0.363)0.984 \\ 
  99 & Q–LS-R-a &  2.639(54.067)0.930 & -0.033(14.738)0.960 &  0.081(0.630)0.901 &  0.109(2.470)0.936 \\ 
  99 & Q–LS-L1 & -0.050(0.759)0.981 & -0.061(0.894)0.983 & -0.078(0.442)0.985 & -0.086(0.373)0.980 \\ 
  99 & Q–LS-L1-a &  0.398(2.288)0.933 &  0.212(8.797)0.957 &  0.073(1.885)0.908 &  0.036(0.374)0.933 \\ 
  981 & DR-CF &  0.044(0.077)0.848 &  0.025(0.051)0.877 &  0.041(0.076)0.839 &  0.023(0.049)0.876 \\ 
  981 & Q–LS &  0.742(1.045)0.987 &  0.701(1.108)0.993 &  0.171(0.439)0.996 &  0.057(0.326)0.993 \\ 
  981 & Q–LS-a &  0.355(0.725)0.910 &  0.497(2.448)0.940 &  0.076(0.764)0.890 &  0.036(0.234)0.931 \\ 
  981 & Q–LS-R &  0.200(0.917)0.984 &  0.119(0.949)0.986 & -0.023(0.440)0.988 & -0.075(0.363)0.982 \\ 
  981 & Q–LS-R-a &  0.362(3.965)0.935 &  0.554(10.794)0.955 &  0.008(3.065)0.899 &  0.037(0.429)0.935 \\ 
  981 & Q–LS-L1 & -0.052(0.773)0.982 & -0.064(0.908)0.986 & -0.079(0.438)0.985 & -0.085(0.369)0.979 \\ 
  981 & Q–LS-L1-a &  0.414(6.936)0.933 &  1.274(30.298)0.955 &  0.119(0.967)0.909 &  0.036(0.375)0.931 \\ 
  \cline{3-6}
   &  & n=500 \& $\omega$=0 & n=1000 \& $\omega$=0 & n=500 \& $\omega$=0.5 & n=1000 \& $\omega$=0.5 \\ 
 \hline \hline 
\end{tabular}}
{\raggedright \footnotesize Note: Bias and RMSE are Monte Carlo averages over 1{,}000 replications. ``Cov.'' is the empirical coverage of nominal 95\% confidence intervals. K is the number of quantiles used. The full results, including average asymptotic standard error and simulation standard deviation, and Q–LS-L2 are available upon request. \par}
\end{table}

Table \ref{tab:dAC} presents the results for Design A and Design C for sample sizes of 500 and 1000.\footnote{Full results for Design A and Design C are available upon request.} It is important to note that classical IV is correctly specified, as $Z_2$ is strongly mean-relevant in both Design A and Design C. For Design A, we find that all Q-LS variants perform similarly to classical IV, regardless of the number of quantiles used. This is due to the quantile projection being similar to the mean projection. The analytical Q-LS variants provide better coverage (closer to 0.95) than the OLS plug-in standard errors, which tend to yield over-coverage, indicating conservative standard errors in this design. The control function version (DR-CF) performs worse in terms of bias, RMSE, and coverage than all Q-LS variants and classical IV, regardless of sample size. However, its performance does improve as the number of quantiles increases in this design.

For Design C, $Z_2$ is both strongly mean-relevant and variance-relevant. As mean relevance is less clean, the bias and RMSE of classical IV are ten and two times as large, respectively, as in Design A for a given sample size. Unlike Design A, the unregularized Q-LS variants exhibit substantial bias when 10 quantiles are used; however, this bias decreases as the number of quantiles and the sample size increase. When 981 quantiles are used, all Q-LS variants have the same or smaller bias than classical IV, and the regularized versions (plug-in and analytical) all yield good coverage. The performance of DR-CF relative to the other estimators and as the number of quantiles and the sample size increase, is similar to that in Design A.

Table \ref{tab:dB} shows the results for Design B with $\omega={0,0.5}$.\footnote{Full results for Design B are available upon request.} Interestingly, classical IV performs worse when $\omega$ is non-zero in terms of both bias and RMSE for a given sample size, whereas the bias and RMSE of the Q-LS variants decrease, except for the bias of Q–LS-L1, which marginally increases (in absolute value) as the number of quantiles increases. More quantiles are needed for better approximation because the distributional relevance of $Z_2$ becomes stronger as $\omega$ increases due to the $\cos$ transformation. Similar to Designs A and C, DR-CF performance improves as the number of quantiles increases. However, unlike Designs A and C, DR-CF has the smallest bias and RMSE when $\omega=0$ for a given sample size, and also the smallest bias and RMSE when $\omega=0.5$ and $n=1000$. Despite this, DR-CF coverage is the worst among all estimators considered for a given number of quantiles and sample size.

Overall, across all designs considered, we find that at least one variant of Q-LS with 99 or more quantiles performs as well as or better than classical IV, even in designs where the instrument $Z_2$ is strongly mean-relevant. When a large number of quantiles are used, analytical Q-LS empirical coverage tends to be close to, but slightly below, 0.95, while the plug-in variants tend to exhibit coverage above 0.95. When the instrument is not strongly mean-relevant, the plug-in point estimates tend to be more stable than their analytical counterparts. The DR-CF estimator performs better as the number of quantiles increases across all designs considered, but it requires a large number of quantiles and weak mean relevance to achieve lower bias and RMSE than all Q-LS variants.

In Design~A and C (strong mean relevance), Q--LS essentially reproduces optimal 2SLS.  This confirms the theoretical prediction that, in a pure location-shift setting with strong instruments, exploiting the full conditional quantile process does not alter first-order behavior. Design~B1 illustrates the degenerate branch of Lemma~\ref{lem:DR-relevance}. Because the optimal Q--LS instrument $h_{\mathrm{opt}}(Z)$ is identically zero, the first stage for Q--LS collapses and the estimator fails in the usual weak-IV sense. We do not regard this as a pathology of Q--LS, but as a reminder that distributional relevance alone is not sufficient for a useful first stage; Assumption~\ref{ass:QLS-relevance} is needed to rule out such cases. The main action comes in Design~B3, where instruments are weak in the mean but strong in the distribution. Here classical 2SLS exhibits substantial finite-sample distortions. In contrast, Q--LS estimators based on a moderate grid of quantiles (e.g.\ $K=10$) are well centered and display near-correct coverage across the same sample sizes.  These patterns mirror the HRS application: when policy reforms primarily reshape tails and dispersion rather than means, exploiting distributional relevance through Q--LS restores a strong, stable first stage from the same underlying instrument.

\section{Empirical application}
\label{sec:empiricalapp}

We use the Health and Retirement Study (HRS) to study how financial exposure to out-of-pocket (OOP) medical spending risk affects mental health among older Americans. Our main outcome is an indicator for any depressive symptoms (CESD) among respondents aged 65 and older. The endogenous regressor is a scalar measure of annual OOP medical spending, and the key policy instrument is an indicator for the post--Medicare Part D period (year $\ge 2006$), during which subsidized prescription drug coverage expanded and catastrophic protection was introduced.

To mirror the simulation designs, we work with two complementary specifications. First, we treat OOP spending in nominal dollars and use a single post--Part D indicator as the excluded instrument. This ``stress test'' design generates a classical weak-IV environment when the risk index is summarized by mean OOP spending, and it provides a direct counterpart to the distributional simulation in Section~\ref{sec:sim}. Second, we express OOP spending in real 2015 dollars using the OECD CPI and show that the same policy indicator becomes a strong instrument in this scaling. The real-dollar specification serves as a well-identified benchmark in which Q–LS and conventional 2SLS can be expected to coincide when we omit a time trend. Once we add a linear trend in survey year, the point estimates remain very similar but Q–LS delivers somewhat tighter standard errors.  Throughout, we trim the top 1\% of the OOP distribution to reduce the influence of extreme outliers.

\subsection{Descriptive evidence on depression and medical spending risk}

\begin{figure}[htbp]
  \centering
  \includegraphics[width=0.9\textwidth]{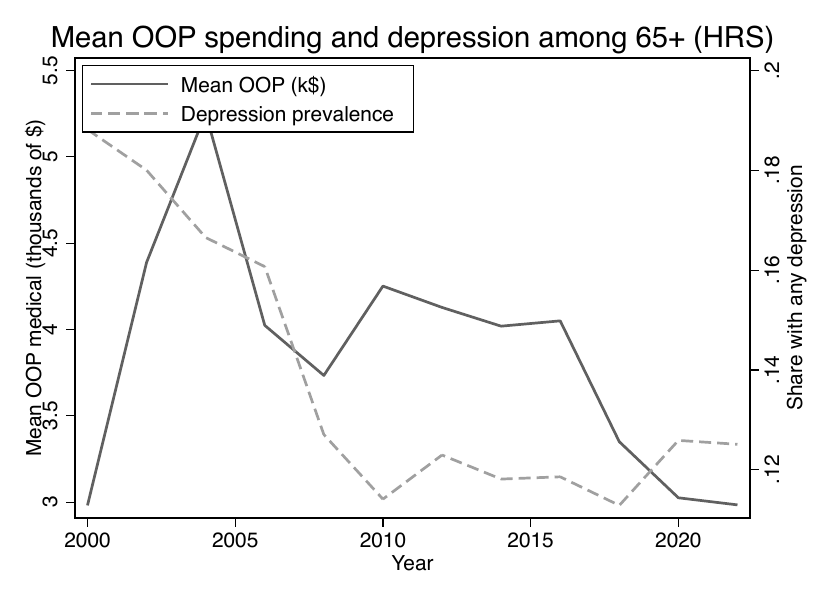}
  \caption{Mean out-of-pocket medical spending and depression prevalence among HRS respondents aged 65+.
  Notes: The figure plots mean annual out-of-pocket (OOP) medical spending (thousands of dollars, left axis)
  and the share of respondents with any depressive symptoms (CESD indicator, right axis) by survey year for
  HRS respondents aged 65 and older. The vertical line at 2006 marks the introduction of Medicare Part D.}
  \label{fig:oop-depression-year}
\end{figure}

\paragraph{Descriptive patterns.}
Figure~\ref{fig:oop-depression-year} plots average out-of-pocket (OOP) medical
spending in thousands of dollars (left axis) and the share of respondents with
any depressive symptoms (right axis) among HRS respondents aged 65 and older
from 2000 to 2014. Mean OOP spending rises from about \$3{,}000 in 2000 to a
peak of roughly \$5{,}200 in the mid-2000s, then falls back toward \$4{,}000–\$4{,}300
in the late 2000s and early 2010s. Over the same period, the prevalence of
depression declines fairly steadily: the share with any depression falls from
around 19–20 percent in 2000 to roughly 11–12 percent by the early 2010s.

These raw trends are consistent with the idea that older households faced
substantial OOP risk in the early 2000s, followed by some compression of
spending risk---coinciding with the implementation of Medicare Part D in
2006---while mental health outcomes improved over time. However, the figure is
purely descriptive: it does not control for changes in the composition of the
65+ population, secular improvements in treatment, or other contemporaneous
policy changes. In the regression analysis below, we therefore focus on
isolating the effect of changes in financial risk exposure, as measured by
OOP spending, using instrumental‐variables methods. In addition, we trim the
regression sample at the 99th percentile of realized annual OOP spending to
remove a small number of extreme outliers; this has negligible impact on the
aggregate time-series patterns in Figure~\ref{fig:oop-depression-year} but
stabilizes the covariate–outcome relationships.

\begin{figure}[htbp]
  \centering
  \includegraphics[width=0.85\textwidth]{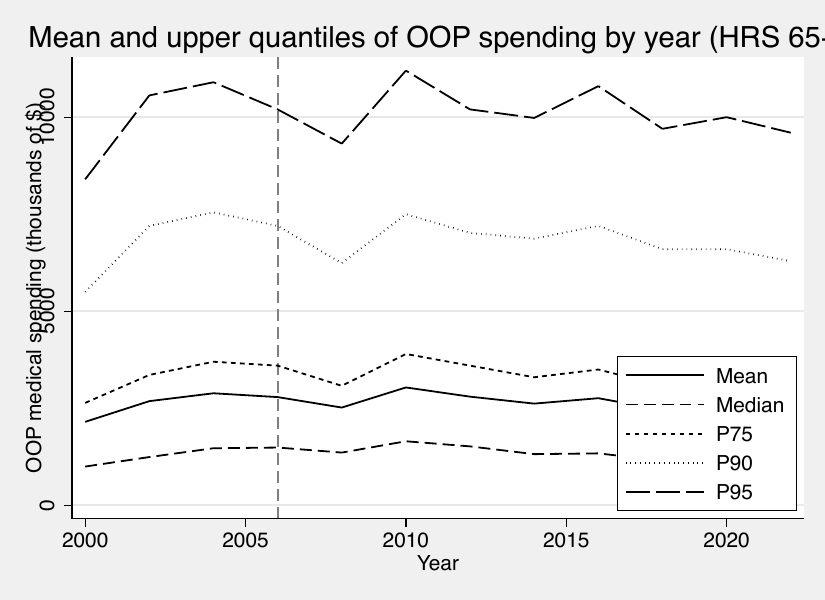}
  \caption{Mean and upper quantiles of out-of-pocket medical spending by year (HRS 65+).
  Notes: The figure plots the mean, median (P50), and upper quantiles (P75, P90, P95)
  of annual OOP medical spending for HRS respondents aged 65 and older,
  from 2000 to 2020. The vertical line at 2006 marks the introduction of Medicare Part D.
  While the median and P75 remain relatively flat over time, the upper quantiles rise in
  the early 2000s and then flatten or decline after 2006, indicating a post–Part D
  compression of the right tail of the OOP spending distribution.}
  \label{fig:oop-quantiles-trend}
\end{figure}

Figure~\ref{fig:oop-quantiles-trend} plots the mean, median, and upper quantiles
(75th, 90th, and 95th percentiles) of annual OOP medical spending for HRS
respondents aged 65 and older between 2000 and 2020. The median and 75th
percentile are remarkably stable over time, indicating that typical OOP
spending changes little over this period. By contrast, the mean closely tracks
movements in the upper tail: the 90th and 95th percentiles rise sharply in the
early 2000s and then flatten or decline following the introduction of
Medicare Part D in 2006. The resulting compression of the gap between the
upper quantiles and the median after 2006 is consistent with Part D primarily
reducing exposure to extreme spending realizations rather than shifting the
center of the distribution. From the perspective of our identification
strategy, this pattern illustrates why standard first-stage diagnostics based
on mean shifts label the post–Part D indicator as a weak instrument for mean
OOP risk, even though it is strongly relevant for the \emph{distribution} of
spending, especially in the right tail.

\subsection{Nominal OOP and a weak-IV design}

We begin with a deliberately ``stressful'' specification that uses nominal OOP
medical spending (in thousands of dollars) as the endogenous regressor and a
single post–Part D indicator as the excluded instrument. This design closely
mirrors the weak-IV, heavy-tailed environments in our Monte Carlo exercises and
is useful for showing how Q--LS behaves when the instrument is weak for the
mean but strongly relevant for the distribution of OOP spending.

Table~\ref{tab:oop-depression} reports linear probability models relating an
indicator for any depressive symptoms to mean OOP medical spending. The sample
consists of 16{,}313 person–year observations for respondents aged 65 and older
in the RAND HRS panel between 2000 and 2020, after excluding person–years with
OOP spending above the 99th percentile of the nominal OOP distribution.

\begin{table}[htbp]\centering
\def\sym#1{\ifmmode^{#1}\else\(^{#1}\)\fi}
\caption{Depression and financial exposure to medical spending (HRS 65+)}
\label{tab:oop-depression}
\begin{tabular}{l*{2}{D{.}{.}{-3}}}
\hline\hline
            &\multicolumn{1}{c}{(1)}&\multicolumn{1}{c}{(2)}\\
            &\multicolumn{1}{c}{OLS}&\multicolumn{1}{c}{2SLS}\\
\hline
OOP medical spending (k\$) $[oop\_med\_k]$
            &       0.002\sym{***}&       0.442         \\
            &     (0.001)         &     (0.410)         \\
[0.75em]
Age         &       0.001\sym{**} &      -0.013         \\
            &     (0.000)         &     (0.013)         \\
[0.5em]
Medicaid    &      -0.032\sym{***}&      -0.224         \\
            &     (0.005)         &     (0.181)         \\
[0.5em]
Medicare    &      -0.156         &      -0.943         \\
            &     (0.098)         &     (0.741)         \\
[0.5em]
Sex (female=1) 
            &      -0.040\sym{***}&       0.045         \\
            &     (0.006)         &     (0.084)         \\
[0.5em]
Hispanic    &       0.094\sym{***}&       0.441         \\
            &     (0.011)         &     (0.330)         \\
\hline
Observations&       16313         &       16313         \\
\hline\hline
\multicolumn{3}{l}{\footnotesize Clustered standard errors in parentheses.}\\
\multicolumn{3}{l}{\footnotesize \sym{*} \(p<0.10\), \sym{**} \(p<0.05\), \sym{***} \(p<0.01\).}\\
\multicolumn{3}{p{0.9\textwidth}}{\footnotesize
\emph{Notes:} The dependent variable is an indicator for any depressive symptoms (CESD).
The main regressor is nominal out-of-pocket medical spending in thousands of dollars
($oop\_med\_k$). The sample consists of 16{,}313 person--year observations for HRS respondents
aged 65 and older between 2000 and 2020, excluding the top 1\% of the nominal OOP spending
distribution. All regressions control for age, Medicaid coverage, Medicare coverage, sex,
and Hispanic ethnicity (shispan), and include a constant. Column (2) reports 2SLS estimates
where $oop\_med\_k$ is instrumented with an indicator for the post--Medicare Part D period
(year $\geq 2006$). Standard errors are clustered at the individual level.}
\end{tabular}
\end{table}

In column~(1), higher nominal OOP spending is positively and statistically
significantly associated with depression: a \$1{,}000 increase in annual OOP
spending is associated with a 0.2 percentage point increase in the probability
of reporting depressive symptoms ($\hat\beta = 0.002$, s.e.\ 0.001). The 2SLS
specification in column~(2), which instruments mean OOP spending with
post–Part D, yields a much larger point estimate ($\hat\beta = 0.442$) but an
even larger standard error (s.e.\ 0.410). The IV estimate is therefore
imprecise and not statistically different from zero at conventional levels.

Table~\ref{tab:firststage} shows that this lack of precision reflects the fact
that the post–Part D indicator is a very weak instrument for mean nominal OOP
spending in the trimmed sample: the cluster-robust partial $F$-statistic for
the excluded instrument is 1.19, far below usual thresholds.

\begin{table}[htbp]\centering
\def\sym#1{\ifmmode^{#1}\else\(^{#1}\)\fi}
\caption{First-stage regression and weak-IV test (mean nominal OOP spending)}
\label{tab:firststage}
\begin{tabular}{l*{1}{D{.}{.}{-3}}}
\hline\hline
            &\multicolumn{1}{c}{(1)}\\
            &\multicolumn{1}{c}{OOP medical spending (k\$)}\\
\hline
Post--Part D indicator
            &      -0.083         \\
            &     (0.076)         \\
[0.75em]
Age         &       0.032\sym{***}\\
            &     (0.006)         \\
[0.5em]
Medicaid    &       0.446\sym{***}\\
            &     (0.069)         \\
[0.5em]
Medicare    &       1.780\sym{***}\\
            &     (0.215)         \\
[0.5em]
Sex (female=1)
            &      -0.193\sym{**} \\
            &     (0.075)         \\
[0.5em]
Hispanic    &      -0.773\sym{***}\\
            &     (0.098)         \\
\hline
Observations&       16313         \\
Weak-IV $F$ (Post--Part D)
            &        1.19         \\
p-value     &       0.275         \\
\hline\hline
\multicolumn{2}{l}{\footnotesize Standard errors in parentheses.}\\
\multicolumn{2}{l}{\footnotesize \sym{*} \(p<0.10\), \sym{**} \(p<0.05\), \sym{***} \(p<0.01\).}\\
\multicolumn{2}{p{0.9\textwidth}}{\footnotesize
\emph{Notes:} The dependent variable is nominal OOP medical spending in thousands of dollars
($oop\_med\_k$). The regression includes age, Medicaid coverage, Medicare coverage, sex, and
Hispanic ethnicity, as well as a constant. The sample consists of 16{,}313 person--year observations
for HRS respondents aged 65 and older between 2000 and 2020, excluding the top 1\% of the nominal
OOP spending distribution. Standard errors are clustered at the individual level. ``Weak-IV $F$
(Post--Part D)'' reports the cluster-robust first-stage $F$-statistic for the excluded instrument
from the corresponding 2SLS specification in Table~\ref{tab:oop-depression}, along with its $p$-value.}
\end{tabular}
\end{table}

\paragraph{Q--LS vs 2SLS in the weak nominal design.}
We now use this weak-IV nominal specification as a stress test to compare Q--LS
and linear 2SLS. Our goal here is diagnostic: to document how the two
estimators behave when the instrument is weak for the mean, yet strongly
relevant for the distribution of OOP spending.

Table \ref{tab:qls_vs_2sls_trim} compares the Q--LS estimate based on the
projected OOP risk index with the conventional linear 2SLS estimate that uses
mean nominal OOP spending as the endogenous regressor.

\begin{table}[htbp]\centering
\def\sym#1{\ifmmode^{#1}\else\(^{#1}\)\fi}
\caption{Q--LS vs linear 2SLS: effect of nominal OOP risk on depression}
\label{tab:qls_vs_2sls_trim}
\begin{tabular}{l*{2}{D{.}{.}{-3}}}
\hline\hline
                    &\multicolumn{1}{c}{(1)}&\multicolumn{1}{c}{(2)}\\
                    &\multicolumn{1}{c}{Q--LS}&\multicolumn{1}{c}{2SLS}\\
\hline
Projected nominal OOP (Q--LS index, k\$)
                    &       0.442\sym{***}&                     \\
                    &     (0.071)         &                     \\
[0.75em]
Nominal OOP spending (k\$)
                    &                     &       0.442         \\
                    &                     &     (0.410)         \\
\hline
Observations        &       16313         &       16313         \\
\hline\hline
\multicolumn{3}{l}{\footnotesize Standard errors in parentheses.}\\
\multicolumn{3}{l}{\footnotesize \sym{*} \(p<0.10\), \sym{**} \(p<0.05\), \sym{***} \(p<0.01\).}\\
\multicolumn{3}{p{0.9\textwidth}}{\footnotesize
\emph{Notes:} The dependent variable is an indicator for any depressive symptoms (CESD).
Column (1) reports the Q--LS estimate obtained by regressing depression on the projected nominal
OOP index $x_{\mathrm{QLS}}$ (in thousands of dollars) and controls, treating $x_{\mathrm{QLS}}$
as a generated regressor and using cluster-robust standard errors. Column (2) reports the conventional
2SLS estimate using nominal mean OOP spending as the endogenous regressor instrumented with the
post--Part D indicator. All specifications control for age, Medicaid coverage, Medicare coverage,
sex, and Hispanic ethnicity, and include a constant. The sample consists of 16{,}313 person--year
observations between 2000 and 2020, trimming the top 1\% of the nominal OOP spending distribution.
Standard errors are clustered at the individual level.}
\end{tabular}
\end{table}

The point estimates in columns~(1) and (2) are numerically identical (0.442),
but the standard errors differ dramatically. In the weak-IV nominal
specification, Q--LS delivers a standard error of 0.071 and a precisely
estimated positive effect, whereas linear 2SLS yields a standard error of
0.410 and a statistically insignificant estimate. This pattern mirrors our
simulation results: when the policy instrument reshapes the distribution of
OOP spending but produces only a noisy mean shift, Q--LS can extract a much
stronger effective first stage from the same variation, tightening
confidence intervals without changing the underlying signal.

In this nominal specification, the combination of weak mean
relevance and heavy tails also makes the estimated magnitude difficult to
interpret substantively. In the next subsection, we therefore turn to a more
natural specification that expresses OOP spending in real terms.

\subsection{Real OOP: from strong-IV benchmark to trend-adjusted design}

For substantive interpretation, it is more natural to express OOP spending in
real dollars. In the second part of the empirical section we therefore: (i)
redefine OOP spending in thousands of 2015 U.S.\ dollars, and (ii) show how
the Part D instrument looks first like a strong mean instrument, and then,
after soaking up secular trends, like a primarily \emph{distributional}
instrument where Q--LS mainly improves precision.

We deflate nominal OOP using the OECD consumer price index obtained from FRED,
constructing a monthly index with base year 2015 and aggregating to the
relevant survey years. Real OOP spending in thousands of 2015 dollars is
denoted $oop\_med\_2015\_k$.\footnote{Organization for Economic Co-operation
and Development, \emph{Consumer Price Indices (CPIs, HICPs), COICOP 1999:
Consumer Price Index: Total for United States}, retrieved from FRED, Federal
Reserve Bank of St. Louis; \url{https://fred.stlouisfed.org/series/USACPIALLMINMEI},
accessed January 14, 2026.}

\paragraph{Real OOP without time trend: a strong-IV benchmark.}
In the first real-dollar specification, we repeat the nominal regressions
replacing $oop\_med\_k$ with $oop\_med\_2015\_k$ and keeping the same controls.
In this case, the post–Part D indicator is a strong instrument for real OOP:
the first-stage partial $F$-statistic is around 50 in the trimmed sample. Table
\ref{tab:qls_vs_2sls_real} compares Q--LS and linear 2SLS for the depression
outcome in this strong-IV environment.

\begin{table}[htbp]\centering
\def\sym#1{\ifmmode^{#1}\else\(^{#1}\)\fi}
\caption{Q--LS vs linear 2SLS: effect of real OOP risk on depression (2015\$)}
\label{tab:qls_vs_2sls_real}
\begin{tabular}{l*{2}{D{.}{.}{-3}}}
\hline\hline
                    &\multicolumn{1}{c}{(1)}&\multicolumn{1}{c}{(2)}\\
                    &\multicolumn{1}{c}{Q--LS}&\multicolumn{1}{c}{2SLS}\\
\hline
Projected real OOP (Q--LS index, 2015 k\$)
                    &       0.045\sym{***}&                     \\
                    &     (0.007)         &                     \\
[0.75em]
Real OOP spending (2015 k\$)
                    &                     &       0.045\sym{***}\\
                    &                     &     (0.008)         \\
\hline
Observations        &       16311         &       16311         \\
\hline\hline
\multicolumn{3}{l}{\footnotesize Standard errors in parentheses.}\\
\multicolumn{3}{l}{\footnotesize \sym{*} \(p<0.10\), \sym{**} \(p<0.05\), \sym{***} \(p<0.01\).}\\
\multicolumn{3}{p{0.85\textwidth}}{\footnotesize
\emph{Notes:} The dependent variable is an indicator for any depressive symptoms (CESD).
In column (1), the endogenous regressor is the projected real OOP index in thousands
of 2015 dollars constructed via the Q--LS procedure; the coefficient is estimated by
OLS with cluster-robust standard errors. In column (2), real OOP spending in thousands
of 2015 dollars is instrumented with the post--Part D indicator, and the coefficient
is estimated by linear 2SLS with cluster-robust standard errors. All regressions control
for age, Medicaid, Medicare, gender, and Hispanic origin. OOP spending is deflated to
2015 dollars using the OECD CPI (index 2015=100), and the sample trims the top 1\% of
the real OOP distribution. Standard errors are clustered at the individual level.}
\end{tabular}
\end{table}

In this specification, Q--LS and 2SLS are virtually indistinguishable: both
yield an effect of about 0.045 per \$1{,}000 of real OOP, with very similar
standard errors. This is exactly the ``well behaved'' case in which our theory
predicts that Q--LS collapses to optimal 2SLS when the instrument is strong and
the linear mean representation is adequate. Table~\ref{tab:qls_se_compare}
, in Appendix, confirms that three different ways of evaluating uncertainty for the Q--LS
coefficient---cluster-robust OLS, clustered bootstrap, and the analytical
Q--LS sandwich formula---deliver nearly identical standard errors.
\paragraph{Real OOP with time trend: a distributional relevance design.}
Finally, we add a flexible linear time trend to the
real-OOP specification. This absorbs much of the secular drift in spending and
returns us to a situation in which post--Part D induces rich distributional
changes in real OOP but only a small additional mean shift conditional on time trend: with first-stage partial $F$-statistic is around 2. 
From the perspective of classical diagnostics, the instrument again looks
fragile; from the perspective of distributional relevance, it remains 
informative.

Table~\ref{tab:qls_vs_2sls_real_trend} reports Q--LS and 2SLS estimates for
this trend-adjusted real specification.

\begin{table}[htbp]\centering
\def\sym#1{\ifmmode^{#1}\else\(^{#1}\)\fi}
\caption{Q--LS vs linear 2SLS: effect of real OOP risk on depression with time trend (2015\$)}
\label{tab:qls_vs_2sls_real_trend}
\begin{tabular}{l*{2}{D{.}{.}{-3}}}
\hline\hline
                    & \multicolumn{1}{c}{(1)} & \multicolumn{1}{c}{(2)} \\
                    & \multicolumn{1}{c}{Q--LS} & \multicolumn{1}{c}{2SLS} \\
\hline
Projected real OOP (Q--LS index, 2015 k\$)
                    &   0.07                  &                         \\
                    & (0.053)                  &                         \\
[0.75em]
Real OOP spending (2015 k\$)
                    &                          &   0.07                \\
                    &                          & (0.079)                 \\
\hline
Observations        & \multicolumn{1}{c}{16{,}295} & \multicolumn{1}{c}{16{,}295} \\
\hline\hline
\multicolumn{3}{l}{\footnotesize Standard errors in parentheses.}\\
\multicolumn{3}{l}{\footnotesize \sym{*} \(p<0.10\), \sym{**} \(p<0.05\), \sym{***} \(p<0.01\).}\\
\multicolumn{3}{p{0.9\textwidth}}{\footnotesize
\emph{Notes:} The dependent variable is an indicator for any depressive symptoms (CESD). 
Column (1) reports the Q--LS estimate obtained by regressing depression on the projected real
OOP index $x_{\mathrm{QLS}}$ (in thousands of 2015 dollars), treating $x_{\mathrm{QLS}}$ as a
generated regressor and using cluster-robust standard errors. Column (2) reports the linear 2SLS
estimate where real OOP medical spending (in thousands of 2015 dollars) is instrumented with the
post--Part D indicator. All specifications control for age, Medicaid coverage, Medicare coverage,
sex, Hispanic origin, a linear time trend in years since 2000, and race (sracem), and include a
constant. The sample consists of 16{,}295 person--year observations for HRS respondents aged 65
and older between 2000 and 2020, trimming the top 1\% of the real OOP distribution. Standard errors
are clustered at the individual level.}
\end{tabular}
\end{table}

In this trend-adjusted real specification, Q--LS and 2SLS again deliver the
same point estimate (0.07 per \$1{,}000 of real OOP). The main difference is
precision: the Q--LS standard error is 0.05, whereas the 2SLS standard error
is 0.07. Thus Q--LS reduces the confidence interval width by roughly one-third,
but the estimate remains only marginally significant at best (p-value around
0.14). Economically, the estimates across the  real-dollar specifications
suggest a modest positive effect of OOP risk on depression, with the strongest
evidence in the specification that treats Part D as a strong mean instrument
and somewhat weaker evidence once secular trends are absorbed.

Taken together, the three designs -- nominal weak-IV, real strong-IV, and
real--with-trend -illustrate the two key empirical messages of the paper.
First, the way in which a policy reform is summarized in the first stage (nominal vs real,
with or without trend) can dramatically change whether the design looks weak
or strong in mean terms, even when the underlying distributional shifts are
similar. Second, when mean relevance is fragile but distributional relevance
can be strong, Q--LS behaves like a natural complement to 2SLS: it recovers the
same underlying signal where the design is well identified, and it delivers
substantial precision gains relative to linear IV in settings where standard
weak-IV diagnostics based on mean shifts understate the information content of
the instrument.

\section{Concluding remarks}
\label{sec:conc}
Our results can be summarized along two dimensions. On the identification side, we formalize a notion of \emph{distributional relevance} and show that even \emph{purely distributional} instruments---for which $\Var(\E[X\mid Z]) = 0$ while $F_{X\mid Z}(\cdot\mid Z)$ varies nontrivially in $Z$---are sufficient to identify average structural functions in a nonseparable triangular model via the control function $V = F_{X\mid Z}(X\mid Z)$. This makes explicit how instruments that look weak in the mean can still be strong in the distribution, and connects the control–function identification argument to weak-IV concerns in linear models.

On the estimation side, we propose a simple, implementable estimator, Quantile Least Squares (Q--LS), which constructs an optimal distributional instrument from conditional quantiles and then uses it in a conventional linear IV step. In designs where policy reforms are primarily meant to reshape the distribution of risk rather than to move its mean, Q--LS complements the existing 2SLS and weak-IV toolkits by extracting a stronger effective first stage from the same policy variation, while leaving classical strong-IV applications essentially unchanged.

Q--LS exploits this structure by choosing, within a given dictionary of
quantile-based functions, the instrument that best predicts $X$ in mean
square error and then using it in a standard 2SLS procedure. We showed that
Q--LS is consistent and asymptotically normal under distributional relevance,
that it coincides with optimal 2SLS in Gaussian/location settings where mean
relevance holds, and that it remains reliable when classical first-stage
diagnostics indicate weak instruments from the perspective of the conditional
mean. We also developed ridge and LASSO regularisation schemes that stabilise
the first stage and mitigate many-instrument bias, while preserving the
first-order asymptotics of Q--LS and providing practically useful variants
(Q–LS--R, Q–LS--L1, Q–LS--L2) with distinct robustness properties.

Our Monte Carlo evidence highlights three main messages. First, when the
instrument primarily shifts the tails of the $X$-distribution, Q--LS can
recover the same limiting parameter as optimal 2SLS but with substantially
smaller finite-sample variance, even in designs where the mean first stage is
formally weak. Second, the regularised Q--LS variants are effective at
controlling the dispersion of the estimator in many-instrument designs and
under moderate misspecification of the quantile representation. Third,
conventional IV estimators can behave erratically in these designs, with
large RMSE and unstable confidence intervals, even when coverage remains
nominal, whereas Q--LS delivers more concentrated sampling distributions.

The empirical application to Medicare Part D illustrates these features in a
setting where policy reforms plausibly operate through the \emph{distribution}
of medical spending risk rather than its mean. In the nominal-dollar
specifications, standard first-stage diagnostics classify the post--Part D
indicator as a very weak instrument for mean OOP spending, and conventional
2SLS delivers large but imprecise estimates of the effect of OOP risk on
depression. By contrast, Q--LS exploits the pronounced
compression of the upper tail of the OOP distribution to construct a strong
distributional first stage from the same policy variation, yielding
substantially tighter confidence intervals while preserving the economic
interpretation of the effect. When we deflate spending to 2015 dollars, the
mean first stage becomes much stronger and conventional 2SLS performs well;
in that case Q--LS and 2SLS again agree closely on the point estimates, but
Q--LS offers at most modest efficiency gains. Taken together, the simulations
and the Medicare Part D evidence suggest viewing Q--LS as a complement to
standard IV: it nests the optimal mean-based instrument in classical designs,
but remains informative in applications where instruments are weak in the mean
and strong in the distribution.

The distributional relevance framework extends beyond linear IV. In an appendix, we show that purely distributional instruments can identify average structural functions in nonseparable triangular models via a control-function
representation based on the conditional distribution $F_X(X\mid Z)$. This
connects our approach to the literature on nonparametric IV and completeness,
and suggests several directions for future work: deriving more primitive
conditions that guarantee nondegenerate optimal instruments; developing
formal diagnostics for the linear quantile representation and for
instrument-instability; designing data-driven procedures for selecting
quantile grids and penalties; and applying Q--LS in empirical settings where
policy or institutional variation plausibly operates through changes in risk
exposure rather than simple mean shifts.

\newpage

\bibliographystyle{agsm} 
\bibliography{reference.bib} 

@article{l12_hetro,
author = {Arthur Lewbel},
title = {Using Heteroscedasticity to Identify and Estimate Mismeasured and Endogenous Regressor Models},
journal = {Journal of Business \& Economic Statistics},
volume = {30},
number = {1},
pages = {67--80},
year = {2012}
}

@article{hansen2008estimation,
  title={Estimation with many instrumental variables},
  author={Hansen, Christian and Hausman, Jerry and Newey, Whitney},
  journal={Journal of Business \& Economic Statistics},
  volume={26},
  number={4},
  pages={398--422},
  year={2008},
  publisher={Taylor \& Francis}
}

@article{bekker1994alternative,
  title={Alternative approximations to the distributions of instrumental variable estimators},
  author={Bekker, Paul A},
  journal={Econometrica: Journal of the Econometric Society},
  pages={657--681},
  year={1994},
  publisher={JSTOR}
}

@article{dufour1997some,
  title={Some impossibility theorems in econometrics with applications to structural and dynamic models},
  author={Dufour, Jean-Marie},
  journal={Econometrica: Journal of the Econometric Society},
  pages={1365--1387},
  year={1997},
  publisher={JSTOR}
}

@article{stock2005asymptotic,
  title={Asymptotic distributions of instrumental variables statistics with many instruments},
  author={Stock, James and Yogo, Motohiro},
  journal={Identification and inference for econometric models: Essays in honor of Thomas Rothenberg},
  volume={6},
  pages={109--120},
  year={2005},
  publisher={Cambridge University Press Cambridge}
}

@article{dunker2023nonparametric,
  title={Nonparametric identification of random coefficients in aggregate demand models for differentiated products},
  author={Dunker, Fabian and Hoderlein, Stefan and Kaido, Hiroaki},
  journal={The Econometrics Journal},
  volume={26},
  number={2},
  pages={279--306},
  year={2023},
  publisher={Oxford University Press}
}

@article{imbens2009identification,
  title={Identification and estimation of triangular simultaneous equations models without additivity},
  author={Imbens, Guido W and Newey, Whitney K},
  journal={Econometrica},
  volume={77},
  number={5},
  pages={1481--1512},
  year={2009},
  publisher={Wiley Online Library}
}

@article{florens2008identification,
  title={Identification of treatment effects using control functions in models with continuous, endogenous treatment and heterogeneous effects},
  author={Florens, Jean-Pierre and Heckman, James J and Meghir, Costas and Vytlacil, Edward},
  journal={Econometrica},
  volume={76},
  number={5},
  pages={1191--1206},
  year={2008},
  publisher={Wiley Online Library}
}

@article{newey2021control,
  title={Control variables, discrete instruments, and identification of structural functions},
  author={Newey, Whitney and Stouli, Sami},
  journal={Journal of Econometrics},
  volume={222},
  number={1},
  pages={73--88},
  year={2021},
  publisher={Elsevier}
}

@article{olea2013robust,
  title={A robust test for weak instruments},
  author={Olea, Jos{\'e} Luis Montiel and Pflueger, Carolin},
  journal={Journal of Business \& Economic Statistics},
  volume={31},
  number={3},
  pages={358--369},
  year={2013},
  publisher={Taylor \& Francis}
}

@article{andrews2014weak,
  title={Weak identification in maximum likelihood: A question of information},
  author={Andrews, Isaiah and Mikusheva, Anna},
  journal={American Economic Review},
  volume={104},
  number={5},
  pages={195--199},
  year={2014},
  publisher={American Economic Association 2014 Broadway, Suite 305, Nashville, TN 37203}
}

@article{finkelstein2008did,
  title={What did Medicare do? The initial impact of Medicare on mortality and out of pocket medical spending},
  author={Finkelstein, Amy and McKnight, Robin},
  journal={Journal of public economics},
  volume={92},
  number={7},
  pages={1644--1668},
  year={2008},
  publisher={Elsevier}
}

@article{kluender2021medical,
  title={Medical debt in the US, 2009-2020},
  author={Kluender, Raymond and Mahoney, Neale and Wong, Francis and Yin, Wesley},
  journal={Jama},
  volume={326},
  number={3},
  pages={250--256},
  year={2021},
  publisher={American Medical Association}
}

@article{allen2013oregon,
  title={The Oregon health insurance experiment: when limited policy resources provide research opportunities},
  author={Allen, Heidi and Baicker, Katherine and Taubman, Sarah and Wright, Bill and Finkelstein, Amy},
  journal={Journal of health politics, policy and law},
  volume={38},
  number={6},
  pages={1183--1192},
  year={2013},
  publisher={Duke University Press}
}

@article{ayyagari2015does,
  title={Does prescription drug coverage improve mental health? Evidence from Medicare Part D},
  author={Ayyagari, Padmaja and Shane, Dan M},
  journal={Journal of health economics},
  volume={41},
  pages={46--58},
  year={2015},
  publisher={Elsevier}
}

@article{abaluck2011choice,
  title={Choice inconsistencies among the elderly: evidence from plan choice in the Medicare Part D program},
  author={Abaluck, Jason and Gruber, Jonathan},
  journal={American Economic Review},
  volume={101},
  number={4},
  pages={1180--1210},
  year={2011},
  publisher={American Economic Association}
}

@article{ayyagari2016medicare,
  title={Medicare Part D and portfolio choice},
  author={Ayyagari, Padmaja and He, Daifeng},
  journal={American Economic Review},
  volume={106},
  number={5},
  pages={339--342},
  year={2016},
  publisher={American Economic Association 2014 Broadway, Suite 305, Nashville, TN 37203}
}

@article{jones2019predicting,
  title={Predicting hospital readmissions from home healthcare in Medicare beneficiaries},
  author={Jones, Christine D and Falvey, Jason and Hess, Edward and Levy, Cari R and Nuccio, Eugene and Bar{\'o}n, Anna E and Masoudi, Frederick A and Stevens-Lapsley, Jennifer},
  journal={Journal of the American Geriatrics Society},
  volume={67},
  number={12},
  pages={2505--2510},
  year={2019},
  publisher={Wiley Online Library}
}

@incollection{NeweyMcFadden1994,
  author    = {Newey, Whitney K. and McFadden, Daniel},
   title     = {Large Sample Estimation and Hypothesis Testing},
   booktitle = {Handbook of Econometrics, Volume 4},
   editor    = {Engle, Robert F. and McFadden, Daniel},
   publisher = {Elsevier},
   year      = {1994}
 }

@book{Wooldridge2010,
   author    = {Wooldridge, Jeffrey M.},
   title     = {Econometric Analysis of Cross Section and Panel Data},
   edition   = {2nd},
   publisher = {MIT Press},
   year      = {2010}
 }

@article{AngristImbens1995,
   author  = {Angrist, Joshua D. and Imbens, Guido W.},
   title   = {Two-Stage Least Squares Estimation of Average Causal Effects in Models with Variable Treatment Intensity},
   journal = {Journal of the American Statistical Association},
   year    = {1995},
   volume  = {90},
   number  = {430},
   pages   = {431--442}
 }

@book{Kreyszig1978,
  title     = {Introductory Functional Analysis with Applications},
  author    = {Kreyszig, Erwin},
  year      = {1978},
  publisher = {Wiley},
  address   = {New York}
}

@article{scott2021assessing,
  title={Assessing catastrophic health expenditures among uninsured people who seek care in US hospital-based emergency departments},
  author={Scott, Kirstin Woody and Scott, John W and Sabbatini, Amber K and Chen, Carina and Liu, Angela and Dieleman, Joseph L and Duber, Herbert C},
  booktitle={JAMA health forum},
  volume={2},
  number={12},
  pages={e214359--e214359},
  year={2021},
 journal={American Medical Association}
}

@article{barcellos2015effects,
  title={The effects of Medicare on medical expenditure risk and financial strain},
  author={Barcellos, Silvia Helena and Jacobson, Mireille},
  journal={American Economic Journal: Economic Policy},
  volume={7},
  number={4},
  pages={41--70},
  year={2015},
  publisher={American Economic Association 2014 Broadway, Suite 305, Nashville, TN 37203-2425}
}

@article{carvalho2019impact,
  title={The impact of Medicare part D on income-related inequality in pharmaceutical expenditure},
  author={Carvalho, Natalie and Petrie, Dennis and Chen, Linkun and Salomon, Joshua A and Clarke, Philip},
  journal={International Journal for Equity in Health},
  volume={18},
  number={1},
  pages={57},
  year={2019},
  publisher={Springer}
}

@article{park2017medicare,
  title={Medicare Part D's effects on drug utilization and out-of-pocket costs: A systematic review},
  author={Park, Young Joo and Martin, Erika G},
  journal={Health services research},
  volume={52},
  number={5},
  pages={1685--1728},
  year={2017},
  publisher={Wiley Online Library}
}

@article{engelhardt2011medicare,
  title={Medicare Part D and the financial protection of the elderly},
  author={Engelhardt, Gary V and Gruber, Jonathan},
  journal={American Economic Journal: Economic Policy},
  volume={3},
  number={4},
  pages={77--102},
  year={2011},
  publisher={American Economic Association}
}

@article{ketcham2008medicare,
  title={Medicare Part D's effects on elderly patients' drug costs and utilization},
  author={Ketcham, Jonathan D and Simon, Kosali I},
  journal={The American journal of managed care},
  volume={14},
  number={11 Suppl},
  pages={SP14--SP22},
  year={2008}
}

@article{jones2018lifetime,
  title={The Lifetime Medical Spending of Retirees.},
  author={Jones, John Bailey and Kirschner, Justin and De Nardi, Mariacristina and French, Eric and McGee, Rory},
  journal={Economic Quarterly (10697225)},
  volume={104},
  number={3},
  year={2018}
}

@article{cfk15_qiv,
title = {Quantile regression with censoring and endogeneity},
journal = {Journal of Econometrics},
volume = {186},
number = {1},
pages = {201-221},
year = {2015},
issn = {0304-4076},
author = {Victor Chernozhukov and Iván Fernández-Val and Amanda E. Kowalski},
abstract = {In this paper we develop a new censored quantile instrumental variable (CQIV) estimator and describe its properties and computation. The CQIV estimator combines Powell (1986) censored quantile regression (CQR) to deal with censoring, with a control variable approach to incorporate endogenous regressors. The CQIV estimator is obtained in two stages that are nonadditive in the unobservables. The first stage estimates a nonadditive model with infinite dimensional parameters for the control variable, such as a quantile or distribution regression model. The second stage estimates a nonadditive censored quantile regression model for the response variable of interest, including the estimated control variable to deal with endogeneity. For computation, we extend the algorithm for CQR developed by Chernozhukov and Hong (2002) to incorporate the estimation of the control variable. We give generic regularity conditions for asymptotic normality of the CQIV estimator and for the validity of resampling methods to approximate its asymptotic distribution. We verify these conditions for quantile and distribution regression estimation of the control variable. Our analysis covers two-stage (uncensored) quantile regression with nonadditive first stage as an important special case. We illustrate the computation and applicability of the CQIV estimator with a Monte-Carlo numerical example and an empirical application on estimation of Engel curves for alcohol.}
}

@article{cfnsv20_cfa,
  title={Semiparametric estimation of structural functions in nonseparable triangular models},
  author={Chernozhukov, Victor and Fern{\'a}ndez-Val, Iv{\'a}n and Newey, Whitney and Stouli, Sami and Vella, Francis},
  journal={Quantitative Economics},
  volume={11},
  number={2},
  pages={503--533},
  year={2020},
  publisher={Wiley Online Library}
}

@article{p20_uncon_qiv,
  title={Quantile treatment effects in the presence of covariates},
  author={Powell, David},
  journal={Review of Economics and Statistics},
  volume={102},
  number={5},
  pages={994--1005},
  year={2020},
  publisher={MIT Press One Rogers Street, Cambridge, MA 02142-1209, USA journals-info~…}
}

@article{w20,
author = {Kaspar Wüthrich},
title = {A Comparison of Two Quantile Models With Endogeneity},
journal = {Journal of Business \& Economic Statistics},
volume = {38},
number = {2},
pages = {443--456},
year = {2020},
publisher = {Taylor \& Francis}
}

@article{babii2020completeness,
  title={Is completeness necessary? Estimation in nonidentified linear models},
  author={Babii, Andrii and Florens, Jean-Pierre},
  journal={Econometric Theory},
  pages={1--38},
  year={2025},
  publisher={Cambridge University Press}
}

@article{l19_id_zoo,
Author = {Lewbel, Arthur},
Title = {The Identification Zoo: Meanings of Identification in Econometrics},
Journal = {Journal of Economic Literature},
Volume = {57},
Number = {4},
Year = {2019},
Month = {December},
Pages = {835–903}
}

@article{aai02_qiv,
  title={Instrumental variables estimates of the effect of subsidized training on the quantiles of trainee earnings},
  author={Abadie, Alberto and Angrist, Joshua and Imbens, Guido},
  journal={Econometrica},
  volume={70},
  number={1},
  pages={91--117},
  year={2002},
  publisher={Wiley Online Library}
}

@article{kb78_qr,
 abstract = {A simple minimization problem yielding the ordinary sample quantiles in the location model is shown to generalize naturally to the linear model generating a new class of statistics we term "regression quantiles." The estimator which minimizes the sum of absolute residuals is an important special case. Some equivariance properties and the joint asymptotic distribution of regression quantiles are established. These results permit a natural generalization of the linear model of certain well-known robust estimators of location. Estimators are suggested, which have comparable efficiency to least squares for Gaussian linear models while substantially out-performing the least-squares estimator over a wide class of non-Gaussian error distributions.},
 author = {Roger Koenker and Gilbert Bassett},
 journal = {Econometrica},
 number = {1},
 pages = {33--50},
 publisher = {[Wiley, Econometric Society]},
 title = {Regression Quantiles},
 urldate = {2026-01-13},
 volume = {46},
 year = {1978}
}

@article{holovchak2025div,
  title={Distributional Instrumental Variable Method},
  author={Holovchak, Anastasiia and Saengkyongam, Sorawit and Meinshausen, Nicolai and Shen, Xinwei},
  journal={arXiv preprint arXiv:2502.07641},
  year={2025}
}

@article{adao2019shift,
  title={Shift-share designs: Theory and inference},
  author={Adao, Rodrigo and Koles{\'a}r, Michal and Morales, Eduardo},
  journal={The Quarterly Journal of Economics},
  volume={134},
  number={4},
  pages={1949--2010},
  year={2019},
  publisher={Oxford University Press}
}

@article{borusyak2025practical,
  title={A practical guide to shift-share instruments},
  author={Borusyak, Kirill and Hull, Peter and Jaravel, Xavier},
  journal={Journal of Economic Perspectives},
  volume={39},
  number={1},
  pages={181--204},
  year={2025},
  publisher={American Economic Association 2014 Broadway, Suite 305, Nashville, TN 37203-2418}
}

@article{goldsmith2020bartik,
  title={Bartik instruments: What, when, why, and how},
  author={Goldsmith-Pinkham, Paul and Sorkin, Isaac and Swift, Henry},
  journal={American Economic Review},
  volume={110},
  number={8},
  pages={2586--2624},
  year={2020},
  publisher={American Economic Association 2014 Broadway, Suite 305, Nashville, TN 37203}
}

@article{clp16_gqiv,
  title={IV quantile regression for group-level treatments, with an application to the distributional effects of trade},
  author={Chetverikov, Denis and Larsen, Bradley and Palmer, Christopher},
  journal={Econometrica},
  volume={84},
  number={2},
  pages={809--833},
  year={2016},
  publisher={Wiley Online Library}
}

@article{lewbel1997constructing,
  title={Constructing instruments for regressions with measurement error when no additional data are available, with an application to patents and R\&D},
  author={Lewbel, Arthur},
  journal={Econometrica},
  pages={1201--1213},
  volume = {65},
  year={1997},
  publisher={The Econometric Society}
}

@article{kv10_iv_cfa_hetro,
  title={Estimating a class of triangular simultaneous equations models without exclusion restrictions},
  author={Klein, Roger and Vella, Francis},
  journal={Journal of Econometrics},
  volume={154},
  number={2},
  pages={154--164},
  year={2010},
  publisher={Elsevier}
}

@article{ch05_qiv,
  title={An IV model of quantile treatment effects},
  author={Chernozhukov, Victor and Hansen, Christian},
  journal={Econometrica},
  volume={73},
  number={1},
  pages={245--261},
  year={2005},
  publisher={Wiley Online Library}
}

@article{staiger1997instrumental,
  title={Instrumental Variables Regression with Weak Instruments},
  author={Staiger, Douglas and Stock, James H},
  journal={Econometrica},
  volume={65},
  number={3},
  pages={557--586},
  year={1997},
  publisher={Econometric Society}
}

@article{newey2003instrumental,
  title={Instrumental variable estimation of nonparametric models},
  author={Newey, Whitney K and Powell, James L},
  journal={Econometrica},
  volume={71},
  number={5},
  pages={1565--1578},
  year={2003},
  publisher={Wiley Online Library}
}

@article{andrews2019weak,
  title={Weak instruments in instrumental variables regression: Theory and practice},
  author={Andrews, Isaiah and Stock, James H and Sun, Liyang},
  journal={Annual Review of Economics},
  volume={11},
  number={1},
  pages={727--753},
  year={2019},
  publisher={Annual Reviews}
}

@article{t23_iv_icm,
    author = {Tsyawo, Emmanuel Selorm},
    title = {Feasible IV regression without excluded instruments},
    journal = {The Econometrics Journal},
    volume = {26},
    number = {2},
    pages = {235-256},
    year = {2022},
    month = {12},
    abstract = {The relevance condition of integrated conditional moment (ICM) estimators is significantly weaker than the conventional instrumental variable's in at least two respects: (1) consistent estimation without excluded instruments is possible, provided endogenous covariates are nonlinearly mean-dependent on exogenous covariates, and (2) endogenous covariates may be uncorrelated with but mean-dependent on instruments. These remarkable properties notwithstanding, multiplicative-kernel ICM estimators suffer diminished identification strength, large bias, and severe size distortions even for a moderately sized instrument vector. This paper proposes a computationally fast linear ICM estimator that better preserves identification strength in the presence of multiple instruments and a test of the ICM relevance condition. Monte Carlo simulations demonstrate a considerably better size control in the presence of multiple instruments and a favourably competitive performance in general. An empirical example illustrates the practical usefulness of the estimator, where estimates remain plausible when no excluded instrument is used.},
}
\appendix

\setcounter{assumption}{0}
\setcounter{lemma}{0}
\setcounter{prop}{0}
\setcounter{remark}{0}
\setcounter{theorem}{0}
\renewcommand{\theassumption}{A.\arabic{assumption}}
\renewcommand{\thelemma}{A.\arabic{lemma}}
\renewcommand{\theprop}{A.\arabic{prop}}

\section{Nonseparable triangular models and distributional relevance}\label{sec:app_nstm}


This appendix collects the extension of distributional relevance and the
control-function approach to nonseparable triangular models, and outlines a
semiparametric estimator for the average structural function. The material
follows the draft text in Sections~\ref{sec:model}--\ref{sec:semiparametric}
of the earlier version of the paper, with notational adjustments for
consistency.

\label{sec:model}

We build on the nonseparable triangular framework of
\citet{cfnsv20_cfa} to formalize a notion of \emph{distributional
relevance} of instruments. Our goal is to show that identification of the
average structural function can be achieved even when instruments are
``weak'' in the classical mean sense, provided they are sufficiently strong in
a distributional sense.

\subsection{Triangular model and object of interest}

Let $(Y,X,Z)$ be random variables defined on a common probability space.
We consider the following nonseparable triangular system:
\begin{align}
    Y &= f(X,Z_1,\varepsilon), \label{eq:structural-np}\\
    X &= g(Z,\eta), \label{eq:firststage-np}
\end{align}
where $f$ and $g$ are unknown measurable functions, and
$\varepsilon,\eta$ are unobserved disturbances. The variable $Z=(Z_1,Z_2)'$
is a vector of instruments.

\begin{assumption}[Exogeneity and normalization]
\label{ass:exog}
The pair $(\varepsilon,\eta)$ is independent of $Z$, and
$\eta \sim U(0,1)$.
\end{assumption}

We are interested in the \emph{average structural function} (ASF)
\begin{equation}
    \mu(x,z_1) := \E\big[ f(x,z_1,\varepsilon) \big],
    \label{eq:ASF-def}
\end{equation}
which captures the mean potential outcome if the endogenous and exogenous
regressors were set to the value $x$ and $z_1$.

\subsection{Control-function representation via the conditional distribution of $X$}

Let $F_X(x\mid z)$ denote the conditional distribution function of $X$ given
$Z=z$:
\[
    F_X(x \mid z) := \p(X \le x \mid Z=z).
\]
Define the \emph{control variable}
\begin{equation}
    V := F_X(X \mid Z).
    \label{eq:controlV-def}
\end{equation}
Following \citet{cfnsv20_cfa}, we impose a monotonicity condition on the
first stage.

\begin{assumption}[Monotone first stage]
\label{ass:monotone}
For each $z$ in the support of $Z$, the map
$\eta \mapsto g(z,\eta)$ in \eqref{eq:firststage-np} is strictly increasing
and continuous.
\end{assumption}

Under Assumptions~\ref{ass:exog}--\ref{ass:monotone}, the first stage
admits a conditional quantile representation; see
\citet{cfnsv20_cfa}:
\[
    X = g(Z,\eta)
    \quad \Longrightarrow \quad
    F_X(X \mid Z) = \eta \quad \text{a.s.,}
\]
so that
\begin{equation}
    V = F_X(X \mid Z) = \eta \quad \text{a.s.}
    \label{eq:V-equals-eta}
\end{equation}
In particular, $V$ is independent of $Z$ and can be used as a control
function for the endogeneity of $X$.

A key implication of \eqref{eq:V-equals-eta} and
Assumption~\ref{ass:exog} is that $\varepsilon$ is independent of $(X,Z)$
conditional on $V$:
\begin{equation}
    \varepsilon \perp (X,Z) \mid V.
    \label{eq:epsilon-indep}
\end{equation}
It follows from \eqref{eq:structural-np} and \eqref{eq:epsilon-indep} that
\begin{equation}
    \E[ Y \mid X=x, Z_1=z_1, V=v ]
    = \E[ f(x,z_1,\varepsilon) \mid V=v ].
    \label{eq:cond-mean-Y-given-XV}
\end{equation}

To recover the ASF $\mu(x,z_1)$ in \eqref{eq:ASF-def}, we rely on a support
condition that is standard in control-function identification results.

\begin{assumption}[Support / overlap]
\label{ass:support}
For each $x$ in the interior of the support of $X$, the conditional support
of $V$ given $X=x$ coincides with the marginal support of $V$:
\[
    \supp(V \mid X=x) = \supp(V).
\]
\end{assumption}

Under Assumptions~\ref{ass:exog}--\ref{ass:support}, one obtains the usual
control-function representation for the ASF:
\begin{equation}
    \mu(x,z_1)
    = \E[f(x,z_1,\varepsilon)]
    = \int \E[Y \mid X=x,Z_1=z_1, V=v] \, dF_V(v).
    \label{eq:ASF-CF-repr}
\end{equation}

\subsection{Mean relevance and distributional relevance in an $L^2$ space}

The notions of mean relevance and distributional relevance in
Definitions~\ref{def:mean-relevance} and \ref{def:dist-relevance} extend
verbatim to this nonparametric setting: we view $m(z)=\E[X\mid Z=z]$ as an
element of $L^2(P_Z)$, and $F_X(\cdot\mid z)$ as an element of
$L^2(\lambda)$. We then define purely distributional instruments as in
Definition~\ref{def:purely-dist}.

\subsection{Identification of the ASF under purely distributional relevance}

We now show that the ASF in \eqref{eq:ASF-def} remains nonparametrically
identified under purely distributional relevance, even though the
instruments are completely irrelevant for the conditional mean of $X$.

\begin{theorem}[ASF identification with purely distributional instruments]
\label{thm:ASF-ident-dist}
Suppose Assumptions~\ref{ass:exog}, \ref{ass:monotone}, and
\ref{ass:support} hold. Assume further that $Z$ includes only purely
distributional instruments for $X$ in the sense of
Definition~\ref{def:purely-dist}. Then, for every $x$ in the interior of the
support of $X$, the ASF $\mu(x,z_1) = \E[f(x,z_1,\varepsilon)]$ is
nonparametrically identified and admits the representation
\begin{equation}
    \mu(x,z_1)
    = \int \E[ Y \mid X=x,Z_1=z_1, V=v ] \, dF_V(v).
    \label{eq:ASF-ident-dist}
\end{equation}
In particular, identification of $\mu(x,z_1)$ does not require any mean
relevance of the instruments; it is enough that $Z$ is distributionally
relevant for $X$.
\end{theorem}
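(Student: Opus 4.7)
The plan is to show that the control-function representation derived in equation \eqref{eq:ASF-CF-repr} already contains the entire identification argument, and that none of its steps invokes mean relevance of $Z$. First, under Assumption~\ref{ass:monotone}, I would apply the probability integral transform to $X = g(Z,\eta)$ to obtain $V = F_X(X\mid Z) = \eta$ almost surely. Combined with Assumption~\ref{ass:exog}, this yields $(\varepsilon,V) \indep Z$, and because the monotone first stage makes $X$ a deterministic function of $(Z,V)$, the conditional independence $\varepsilon \perp (X,Z)\mid V$ stated in \eqref{eq:epsilon-indep} follows. Substituting into \eqref{eq:structural-np} and integrating out $\varepsilon$ then delivers $\E[Y \mid X=x, Z_1=z_1, V=v] = \E[f(x,z_1,\varepsilon) \mid V=v]$.

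Second, I would integrate both sides against the marginal distribution $F_V$. Assumption~\ref{ass:support} guarantees that the conditional support of $V$ given $X=x$ equals the marginal support of $V$, so there is no extrapolation outside the observed region. Iterated expectations then yields $\int \E[Y \mid X=x, Z_1=z_1, V=v]\,dF_V(v) = \E[f(x,z_1,\varepsilon)] = \mu(x,z_1)$, which is exactly \eqref{eq:ASF-ident-dist}.

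The third, and conceptually most important, step is to verify the ``even when'' clause of the theorem. None of the above uses $\Var(\E[X\mid Z])>0$: the probability integral transform, the conditional independence, and the overlap condition refer exclusively to properties of the \emph{conditional distribution} $F_{X\mid Z}(\cdot\mid Z)$, not to its mean. What purely distributional relevance in the sense of Definition~\ref{def:purely-dist} does provide is the guarantee that $V = F_X(X\mid Z)$ is non-degenerate given $X$: without any variation of $F_{X\mid Z}(\cdot\mid z)$ across $z$, $V$ would collapse to a deterministic function of $X$, the conditional support $\supp(V\mid X=x)$ would shrink to a single point, and Assumption~\ref{ass:support} would fail trivially. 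Distributional relevance is thus the minimal ingredient that makes the overlap condition non-vacuous.

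The main obstacle is more conceptual than technical. The proof is essentially an inspection of the control-function argument of \citet{cfnsv20_cfa}, and the novel content of Theorem~\ref{thm:ASF-ident-dist} is to emphasize that the argument is entirely indifferent to the conditional mean of $X$ given $Z$. I would therefore present the proof as a transparent walk-through of \eqref{eq:V-equals-eta}--\eqref{eq:ASF-CF-repr}, flagging at each step that only distributional (not mean) variation in $F_{X\mid Z}$ is used, and closing with the observation that the support condition is precisely the channel through which distributional relevance substitutes for mean relevance.
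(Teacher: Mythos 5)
Your proposal is correct and follows essentially the same route as the paper's proof: probability integral transform under the monotone first stage to get $V=\eta$, exogeneity to obtain $\varepsilon\perp(X,Z)\mid V$, then the support condition and iterated expectations to recover $\mu(x,z_1)$, with the observation that no step uses mean relevance. Your added remark that distributional relevance is precisely what keeps the overlap condition from failing (since otherwise $V$ collapses to a deterministic function of $X$) is a slightly more explicit version of the paper's closing sentence, but it is the same idea.
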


\begin{proof}
By Assumptions~\ref{ass:exog} and \ref{ass:monotone}, we have
$V = F_X(X\mid Z) = \eta$ almost surely, and $V$ is independent of $Z$.
Moreover, the joint system
\eqref{eq:structural-np}--\eqref{eq:firststage-np} with
$(\varepsilon,\eta)\perp Z$ implies that $\varepsilon$ is independent of
$(X,Z)$ conditional on $V$, as in \eqref{eq:epsilon-indep}. Therefore,
\[
    \E[ Y \mid X=x,Z_1=z_1, V=v ]
    = \E[ f(x,z_1,\varepsilon) \mid V=v ],
\]
as in \eqref{eq:cond-mean-Y-given-XV}. By Assumption~\ref{ass:support}, the
conditional support of $V$ given $X=x$ coincides with its marginal support.
Using the law of total expectation, we can write
\[
    \mu(x,z_1)
    = \E[ f(x,z_1,\varepsilon) ]
    = \int \E[ f(x,z_1,\varepsilon) \mid V=v ] \, dF_V(v)
    = \int \E[ Y \mid X=x,Z_1=z_1, V=v ] \, dF_V(v),
\]
which yields \eqref{eq:ASF-ident-dist}. The purely distributional
instrument condition ensures that $F_X(\cdot\mid Z)$ varies with $Z$ in an
$L^2(\lambda)$ sense, so that the control variable $V = F_X(X\mid Z)$ is
non-degenerate. The lack of mean relevance, $\Var(\E[X\mid Z])=0$, does not
affect this argument, since the control-function approach does not require
$Z$ to move the mean of $X$.
\end{proof}

\subsection{Semiparametric estimation with distributionally relevant instruments}
\label{sec:semiparametric}

We briefly outline a semiparametric estimator of the ASF based on a
two-step plug-in approach: (i) estimate the control function
$V = F_X(X\mid Z)$, and (ii) estimate the conditional mean function
$m(x,z_1,v) := \E[Y\mid X=x,Z_1=z_1,V=v]$ using flexible basis expansions.
The estimator of the ASF is then obtained by plugging in the estimated
control function and conditional mean.

Let $\{(Y_i,X_i,Z_i)\}_{i=1}^n$ be an i.i.d.\ sample. The estimator proceeds
as follows.

\begin{enumerate}
    \item[\textbf{Step 1.}] Estimate the conditional distribution (or
    quantile process) of $X$ given $Z$ and construct an estimate
    $\hat V_i$ of the control variable $V_i = F_X(X_i \mid Z_i)$.
    \item[\textbf{Step 2.}] Regress $Y$ on a flexible series expansion in
    $(X,Z_1,\hat V)$ to estimate the conditional mean function
    $m(x,z_1,v)$, and use the control-function representation
    \[
        \mu(x,z_1) = \int m(x,z_1,v)\, dF_V(v)
    \]
    to construct a plug-in estimator $\hat \mu(x,z_1)$.
\end{enumerate}

The first stage can be implemented using series quantile regression as in
\citet{cfk15_qiv} and \citet{cfnsv20_cfa}. A finite grid of quantile
indices $\mathcal{T}_n=\{\tau_1,\dots,\tau_{K_n}\}$ is used to estimate
conditional quantiles $Q_X(\tau_k\mid Z)$ by series quantile regression in
$Z$, and the conditional CDF at $(X_i,Z_i)$ is approximated by
\begin{equation}
    \hat F_X(X_i \mid Z_i)
    :=  \mathbbm{1}\big\{ \hat Q_X(\tau_k \mid Z_i) \le X_i \big\}.
    \label{eq:FXhat-quantile-app}
\end{equation}
The resulting estimated control variable is $\hat V_i = \hat F_X(X_i\mid Z_i)$.

In the second stage, one specifies a basis $w(X,Z_1,V)$ in $(X,Z_1,V)$ and
estimates
\[
    m(x,z_1,v) \approx w(x,z_1,v)'\beta
\]
by least squares, replacing $V_i$ by $\hat V_i$. The ASF is then estimated
by
\begin{equation}
    \hat\mu(x,z_1)
    := \frac{1}{n} \sum_{i=1}^n \hat m(x,z_1,\hat V_i)
    = \frac{1}{n} \sum_{i=1}^n w(x,z_1,\hat V_i)' \hat\beta.
    \label{eq:mu-hat-app}
\end{equation}

Under standard smoothness and sieve-approximation assumptions on
$F_X(x\mid z)$ and $m(x,z_1,v)$, and suitable growth rates for the sieve
dimensions and the number of quantiles $K_n$, one can show that
$\hat\mu(x,z_1)$ is consistent and asymptotically normal, with an influence
function that is orthogonal to the first-stage estimation error in $\hat V$;
see \citet{cfnsv20_cfa} for related results.

\section{A LATE interpretation for Q--LS under a binary treatment}\label{app:late}

This appendix provides a potential-outcome interpretation of Q--LS as a LATE estimator
under additional structure. The main text treats Q--LS primarily as a strength/efficiency
device for IV. Here we show that when the endogenous variable is binary (or can be
interpreted as a binary treatment derived from a latent index), Q--LS naturally delivers a
\emph{scalar monotone instrument} for which the standard LATE logic applies.

\subsection{Setup: binary treatment, scalar instrument, and a Q--LS index}\label{app:late:setup}

Let $D\in\{0,1\}$ be the endogenous treatment and let $Y(d)$ denote potential outcomes
for $d\in\{0,1\}$. We observe $Y = Y(D)$. For the LATE interpretation in this appendix,
we specialize to a \emph{scalar} instrument $Z$ (which can be binary, $Z\in\{0,1\}$,
or more generally ordered). If exogenous covariates $Z_1$ are present, we work after
partialling them out and suppress $Z_1$ in the notation.

In the quantile-dictionary implementation of Q--LS, we consider a finite grid
$\mathcal{T} = \{\tau_1,\ldots,\tau_K\} \subset (0,1)$ and define
\[
  G(Z)
  :=
  \big(
    Q_{D\mid Z}(\tau_1\mid Z),
    \ldots,
    Q_{D\mid Z}(\tau_K\mid Z)
  \big)',
\]
where $Q_{D\mid Z}(\tau\mid Z)$ denotes the conditional $\tau$-quantile of $D$ given $Z$.
Q--LS then forms scalar indices of the form
\[
  H_w(Z) := G(Z)'w,
  \qquad w \in \Delta_K := \big\{w\in\mathbb{R}^K:\ w_k\ge 0,\ \mathbf{1}'w=1\big\}.
\]
When $w\in\Delta_K$, $H_w(Z)$ is a convex combination of conditional quantiles---an
L-statistic---which is useful both for monotonicity and for interpretability.

\begin{assumption}[IV validity]\label{ass:late:valid}
(\emph{Independence and exclusion})
The instrument $Z$ is independent of the potential outcomes and potential treatments,
and affects $Y$ only through $D$:
\[
  \{Y(0),Y(1),D(z)\}_{z} \;\perp\!\!\!\perp\; Z.
\]
(If covariates $Z_1$ are included, the independence is conditional on $Z_1$.)
\end{assumption}

\begin{assumption}[Quantile relevance]\label{ass:late:qrel}
(\emph{Distributional relevance of the scalar instrument})
There exists at least one $\tau_k\in\mathcal{T}$ such that the conditional quantile
$Q_{D\mid Z}(\tau_k\mid Z)$ is not almost surely constant in $Z$. Equivalently, not all
components of $G(Z)$ are almost surely flat in $Z$.
\end{assumption}

\begin{remark}
When $D$ is binary, the conditional quantiles $Q_{D\mid Z}(\tau\mid Z)$ are step functions
of the propensity score $p(Z):=\Pr(D=1\mid Z)$. In this case, Assumption~\ref{ass:late:qrel}
is equivalent to $p(Z)$ not being almost surely constant. We formulate relevance in terms of
conditional quantiles to keep the connection with the distributional-instrument framework
used in the main text.
\end{remark}

Quantile relevance ensures that there exist weights $w\in\Delta_K$ for which
$\mathrm{Cov}\big(H_w(Z),D\big)\neq 0$, i.e.\ the scalar index $H_w$ is a relevant instrument
for $D$.

\begin{assumption}[Quantile monotonicity in the scalar instrument]\label{ass:late:qmonoZ}
For each $\tau\in\mathcal{T}$, the conditional quantile $z\mapsto Q_{D\mid Z}(\tau\mid z)$
is weakly increasing in $z$.
\end{assumption}

\begin{lemma}[Convex combinations of quantiles preserve monotonicity]\label{lem:qmono_scalar}
Under Assumption~\ref{ass:late:qmonoZ}, any index $H_w(Z)=G(Z)'w$ with $w\in\Delta_K$
is weakly increasing in the scalar instrument $Z$.
\end{lemma}

\begin{proof}
For each $\tau_k\in\mathcal{T}$, $z\mapsto Q_{D\mid Z}(\tau_k\mid z)$ is weakly increasing by
Assumption~\ref{ass:late:qmonoZ}. A convex combination of weakly increasing functions is
weakly increasing, so $z\mapsto H_w(z)=\sum_{k=1}^K w_k Q_{D\mid Z}(\tau_k\mid z)$ is also
weakly increasing.
\end{proof}

In what follows, we fix a weight vector $w^\star\in\Delta_K$ (e.g.\ the one selected by Q--LS)
that satisfies Assumption~\ref{ass:late:qrel}, and write $H:=H_{w^\star}(Z)$.

\begin{assumption}[Monotone treatment selection w.r.t.\ the index]\label{ass:late:mono}
(\emph{No defiers in the index ordering})
There exists a version of the potential treatment process $\{D(h)\}_{h}$ such that
\[
  h_1 \ge h_0 \quad\Longrightarrow\quad D(h_1) \ge D(h_0)\qquad\text{a.s.}
\]
In particular, if $H_1 \ge H_0$ then $D(H_1)\ge D(H_0)$ almost surely.
\end{assumption}

Assumption~\ref{ass:late:mono} is the usual no-defiers (monotonicity) condition, expressed with respect to the scalar Q--LS index $H$. Assumption \ref{ass:late:qmonoZ} and Lemma \ref{lem:qmono_scalar} justify interpreting $H$ as a one-dimensional ``encouragement" variable inherited from the scalar instrument $Z$.

\subsection{Wald LATE using the Q--LS index}\label{app:late:wald}

When $H$ takes at least two values $h_0<h_1$ with positive probability, the standard Wald ratio
based on those two values identifies a LATE for compliers defined relative to $H$.

\begin{prop}[LATE for two index values]\label{prop:late:two}
Suppose Assumptions~\ref{ass:late:valid}, \ref{ass:late:qrel},
\ref{ass:late:qmonoZ}, and \ref{ass:late:mono} hold. Let $H$ take the values
$h_0<h_1$ in its support with $\Pr(H=h_j)>0$ for $j\in\{0,1\}$ and
$\E[D\mid H=h_1]\neq \E[D\mid H=h_0]$. Then
\begin{equation}\label{eq:late:two}
\frac{\E[Y\mid H=h_1]-\E[Y\mid H=h_0]}{\E[D\mid H=h_1]-\E[D\mid H=h_0]}
=
\E\!\left[Y(1)-Y(0)\ \big|\ D(h_1)>D(h_0)\right].
\end{equation}
\end{prop}

\noindent
\emph{Proof.}
Under IV validity, $\E[Y\mid H=h]=\E[Y(D(h))]$ and $\E[D\mid H=h]=\E[D(h)]$.
Write $Y(D(h)) = Y(0) + D(h)\{Y(1)-Y(0)\}$ and take differences between $h_1$ and
$h_0$. Monotonicity implies $D(h_1)-D(h_0)\in\{0,1\}$, which delivers
\eqref{eq:late:two}. \qed

\subsection{2SLS with a multi-valued scalar index: positive LATE weights}\label{app:late:2sls}

In practice, Q--LS produces a scalar index $H$ with many support points, and empirical work
typically uses 2SLS with $H$ (possibly along with controls). While multiple instruments can
generate non-convex (possibly negative) weights across different LATEs, a key advantage of using a
\emph{single scalar} index is that the IV estimand can be expressed as a \emph{positive-weighted}
average of LATEs across index increments.

Assume $H$ has finite support $\{h_1<\cdots<h_J\}$ for exposition, and define adjacent-index LATEs
\[
  \text{LATE}_j
  := \E\!\left[Y(1)-Y(0)\ \big|\ D(h_{j+1})>D(h_j)\right],
  \qquad j=1,\ldots,J-1.
\]
Let $\Delta_j := \E[D\mid H=h_{j+1}] - \E[D\mid H=h_j] \ge 0$ under
Assumption~\ref{ass:late:mono}.

\begin{prop}[2SLS as a positive average of adjacent LATEs]\label{prop:late:avg}
Suppose Assumptions~\ref{ass:late:valid}, \ref{ass:late:qrel},
\ref{ass:late:qmonoZ}, and \ref{ass:late:mono} hold and $H$ takes values
$\{h_1<\cdots<h_J\}$ with positive probability. Then the population IV
coefficient using the single instrument $H$,
\[
  \beta_{IV}(H):=\frac{\mathrm{Cov}(H,Y)}{\mathrm{Cov}(H,D)},
\]
can be written as
\[
  \beta_{IV}(H) = \sum_{j=1}^{J-1} \delta_j\, \text{LATE}_j,
  \qquad
  \delta_j := \frac{\pi_j}{\sum_{\ell=1}^{J-1}\pi_\ell},
  \qquad
  \pi_j \ge 0,
\]
where $\text{LATE}_j$ is the adjacent-index LATE between $h_j$ and $h_{j+1}$.
In particular, the weights $\delta_j$ are nonnegative and sum to one.
\end{prop}

\noindent
\emph{Remark.}
The exact form of $\pi_j$ depends on the distribution of $H$ (and on whether $H$ is
centered), but the key point is conceptual: \emph{with a single scalar index} satisfying
monotonicity, the IV estimand aggregates local LATEs using \emph{positive weights} tied to
how much the index shifts treatment take-up across adjacent levels.

\subsection{Why Q--LS is helpful for a LATE interpretation}\label{app:late:why}

Propositions~\ref{prop:late:two}--\ref{prop:late:avg} show that the LATE interpretation hinges on
(i) validity of the underlying instrument and (ii) monotonicity with respect to a scalar index.
Q--LS contributes by \emph{constructing} such a scalar index from distributional features of the
first stage.

\paragraph{(i) Scalar reduction.}
Even when the original instrument vector is high-dimensional, the quantile dictionary
$G(Z)$ and the Q--LS weights $w^\star$ collapse information into a one-dimensional index
$H=H_{w^\star}(Z)$. This facilitates monotonicity assumptions and avoids the negative-weight
issues that can arise when combining multiple instruments in 2SLS.

\paragraph{(ii) Positive normalized weights and monotone distributional shifts.}
When $w\in\Delta_K$, $H_w(Z)$ is a convex combination of conditional quantiles. Under
Assumption~\ref{ass:late:qmonoZ}, higher values of the scalar instrument $Z$ shift each
conditional quantile of $D$ weakly to the right, so $H_w(Z)$ inherits this ordering
(Lemma~\ref{lem:qmono_scalar}). This provides a transparent route for arguing that $H$
acts as a monotone ``encouragement'' variable, supporting Assumption~\ref{ass:late:mono}.

\paragraph{(iii) Strength.}
Within the class of admissible indices (e.g.\ $w\in\Delta_K$), Q--LS chooses weights to
maximize predictive content for the first stage. Thus, Q--LS does not change the causal
content of IV validity, but can materially improve finite-sample performance by increasing
$\mathrm{Cov}(H,D)$.

\begin{remark}[Inspecting which complier groups matter most]
The representation in Proposition~\ref{prop:late:avg} is not only
conceptual: in empirical work the weights $\omega_j$ can be estimated
from the data, because they are proportional to the contribution of each
adjacent index increment to $Cov(H,D)$. Concretely, letting
$\hat{\Delta}_j := \E_n[D\mid H=h_{j+1}] - \E_n[D\mid H=h_j]$ and
$\hat{p}_j := \Pr_n(H\in\{h_j,h_{j+1}\})$, we can form empirical
analogues $\hat{\pi}_j$ and $\hat{\delta}_j$ and report them as a
diagnostic. This makes it possible to see which ``complier bands''
between $h_j$ and $h_{j+1}$ contribute most to the overall Q--LS
estimate.
\end{remark}

\begin{remark}[Which parts of the distribution drive the Q--LS index?]
Because the Q--LS index takes the form
\(
H_w(Z) = G(Z)'w
\)
with \(
G(Z) = \big(Q_{X\mid Z}(\tau_1\mid Z),\ldots,Q_{X\mid Z}(\tau_K\mid Z)\big)'
\)
and \(w\in\Delta_K\), the estimated weights \(\hat{w}\) provide a
direct diagnostic of which quantile components are most influential in
the first stage. Large weights on upper quantiles (e.g.\ \(\tau\ge 0.8\))
indicate that Q--LS is mainly leveraging tail risk, whereas mass on
intermediate or lower quantiles reflects sensitivity of the effect to
central or lower parts of the \(X\mid Z\) distribution. In applications,
reporting \(\hat{w}\) (or aggregations such as the total weight above a
given quantile) helps interpret whether the Q--LS instrument is primarily
a location, dispersion, or tail-risk device.
\end{remark}

\subsection{Scope and limitations}\label{app:late:limits}

The LATE interpretation in this appendix requires a binary treatment (or a binary treatment
derived from a latent/continuous exposure) and monotonicity with respect to the scalar index
$H$. When the endogenous regressor is genuinely continuous, the analogous interpretation is in
terms of weighted averages of local causal responses (e.g.\ ``marginal treatment effects'' or
``local IV'' objects), rather than LATE. In that case, Q--LS still operates as an
instrument-construction/strength device, but the exact causal interpretation is no longer a
complier LATE.


\section{Detailed Proofs of Lemma and Propositions}\label{sec:appendix_proofs}
\setcounter{assumption}{0}
\setcounter{lemma}{0}
\setcounter{prop}{0}
\setcounter{remark}{0}
\setcounter{theorem}{0}
\renewcommand{\theassumption}{B.\arabic{assumption}}
\renewcommand{\thelemma}{B.\arabic{lemma}}
\renewcommand{\theprop}{B.\arabic{prop}}
This appendix collects the main proofs and clarifies the role of the Q--LS
instrument as an $L^2$ projection. Throughout we use the same notation and
assumptions as in the main text and treat the number of quantile basis functions $J$ as fixed (or growing slowly enough with $n$ that standard sieve arguments apply).

\subsection{Setup and assumptions}

Let $\{(Y_i,X_i,Z_i)\}_{i=1}^n$ be i.i.d.\ observations, where
\[
  Z_i = \begin{pmatrix} Z_{1i} \\ Z_{2i} \end{pmatrix},
\]
$Z_{1i} \in \mathbb{R}^{d_1}$ denotes the included exogenous controls, and
$Z_{2i}$ is the policy component (e.g.\ the post--Part D indicator). The
structural equation is
\begin{equation}\label{eq:structural-appendix}
  Y_i = \beta_0 X_i + Z_{1i}'\gamma_0 + \varepsilon_i,
  \qquad \E[\varepsilon_i \mid Z_i] = 0.
\end{equation}
We stack the structural regressors as
\[
  S_0(Z_i) 
  := \begin{pmatrix} X_i \\ Z_{1i} \\ 1 \end{pmatrix}
  \in \R^{d_s},
\]
and write $S_i := S_0(Z_i)$ for short.

The Q--LS instrument is built from a finite-dimensional dictionary of
transformations of $Z_i$. Let $G_n(Z_i)\in\R^{J_n}$ denote the $J_n$-dimensional
dictionary at sample size $n$. In applications, $G_n(Z_i)$ collects (functions
of) conditional quantiles of $X_i$ given $Z_i$ evaluated on a finite grid
$\mathcal{T}_n$ of quantile indices; the dimension $J_n$ is allowed to depend
on $|\mathcal{T}_n|$.

We recall and slightly specialize the assumptions from the main text.

\begin{assumption}[Sampling and moments]\label{ass:sample-moments}
(i) $\{(Y_i,X_i,Z_i)\}_{i=1}^n$ are i.i.d.\ and 
$\E\|S_0(Z_i)\|^4 < \infty$, $\E\|G_n(Z_i)\|^4 < \infty$ for all $n$.

(ii) The matrices
\[
  A_n := \E\big[S_0(Z_i) X_i\big],
  \qquad
  B_n := \E\big[S_0(Z_i) G_n(Z_i)'\big]
\]
are finite and have full column rank uniformly in $n$.

(iii) The number of dictionary terms $J_n$ (and, equivalently, the quantile
grid size $|\mathcal{T}_n|$) satisfies
\[
  J_n \to \infty
  \quad\text{and}\quad
  \frac{J_n^2}{n} \to 0
  \quad\text{as } n\to\infty.
\]
\end{assumption}

\begin{assumption}[Distributional relevance]\label{ass:DR-appendix}
Let $\mathcal{H}_n$ denote the closed linear span in $L^2$ of the dictionary
$\{G_n(Z_i)\}$ and the included controls $Z_{1i}$ and a constant:
\[
  \mathcal{H}_n
  := \overline{\text{span}\big\{G_n(Z_i),\,Z_{1i},\,1\big\}}
  \subset L^2.
\]
There exists a sequence $h_n \in \mathcal{H}_n$ such that
\[
  Cov(X_i, h_n(Z_i)) \neq 0 \quad\text{for all large } n.
\]
\end{assumption}

Assumption~\ref{ass:DR-appendix} is the sieve version of the
\emph{distributional relevance} Assumption~2 in the main text: at least one
element of the instrument space is correlated with $X_i$, even if the
conditional mean $\E[X_i\mid Z_i]$ is flat in $Z_{2i}$.

We define the population $L^2$-projection of $X_i$ onto the infinite-dimensional
limit space
\[
  \mathcal{H}_\infty
  := \overline{\bigcup_{n\ge 1}\mathcal{H}_n}
\]
by
\begin{equation}\label{eq:projection-def}
  h^\star \in \arg\min_{h\in\mathcal{H}_\infty} \E\big[(X_i - h(Z_i))^2\big],
\end{equation}
and denote the corresponding projection error by
\[
  r^\star_i := X_i - h^\star(Z_i).
\]

\begin{assumption}[Approximation of the optimal instrument]\label{ass:approx}
For each $n$, define
\[
  h_n(Z_i) := G_n(Z_i)' w_{0n},
  \qquad
  w_{0n} \in \arg\min_{w\in\R^{J_n}}
  \E\big[(h^\star(Z_i) - G_n(Z_i)'w)^2\big].
\]
Then the sieve approximates $h^\star$ in $L^2$:
\[
  \E\big[(h^\star(Z_i) - h_n(Z_i))^2\big] \to 0
  \quad\text{as } n\to\infty.
\]
\end{assumption}

Assumption~\ref{ass:approx} is the usual sieve approximation condition: as the
dictionary grows (via a finer quantile grid and more basis terms), linear
combinations of $G_n(Z_i)$ approximate the optimal projection $h^\star$ in
mean square.

\subsection{Instrument relevance of the optimal projection}

We first show that the $L^2$-projection $h^\star$ is a relevant instrument
whenever the dictionary is distributionally relevant.

\begin{lemma}\label{lem:projection_relevance_appendix}
Let $h^\star$ be defined as in \eqref{eq:projection-def}. Then either
\begin{enumerate}
  \item[(a)] $h^\star(Z_i) = 0$ almost surely, or
  \item[(b)] $\E[X_i h^\star(Z_i)] = \E[(h^\star(Z_i))^2] > 0$.
\end{enumerate}
Under Assumption~\ref{ass:DR-appendix}, case~(b) must hold, so
$Cov(X_i,h^\star(Z_i)) \neq 0$.
\end{lemma}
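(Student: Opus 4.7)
The plan is to exploit the defining orthogonality property of the $L^2$-projection $h^\star$ and then invoke Assumption~\ref{ass:DR-appendix} to rule out the degenerate alternative. The argument closely mirrors the proof of Lemma~\ref{lem:DR-relevance} from the main text, adapted to the sieve limit space $\mathcal{H}_\infty$; nothing new is required beyond the Hilbert-space projection theorem.

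First, I would write the projection identity. By definition of $h^\star$ as the minimizer in \eqref{eq:projection-def} over the closed subspace $\mathcal{H}_\infty \subset L^2$, the residual $r^\star_i := X_i - h^\star(Z_i)$ is orthogonal to every element of $\mathcal{H}_\infty$. Since $h^\star$ itself lies in $\mathcal{H}_\infty$, taking $h = h^\star$ yields $\E[r^\star_i\, h^\star(Z_i)] = 0$. Expanding,
\[
  \E\bigl[X_i\, h^\star(Z_i)\bigr]
  = \E\bigl[(h^\star(Z_i) + r^\star_i)\, h^\star(Z_i)\bigr]
  = \E\bigl[(h^\star(Z_i))^2\bigr],
\]
which is exactly the dichotomy: either $h^\star(Z_i) = 0$ almost surely so that both sides vanish (case (a)), or $\E[(h^\star(Z_i))^2] > 0$ and $\E[X_i h^\star(Z_i)] > 0$ (case (b)).

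Second, I would use Assumption~\ref{ass:DR-appendix} to rule out case~(a). Suppose for contradiction that $h^\star(Z_i) = 0$ a.s.; then $X_i = r^\star_i$ would be $L^2$-orthogonal to every element of $\mathcal{H}_\infty$, and in particular to every $h_n \in \mathcal{H}_n \subset \mathcal{H}_\infty$. Because $\mathcal{H}_n$ contains the constant function~$1$, the centered instrument $h_n(Z_i) - \E[h_n(Z_i)]$ also lies in $\mathcal{H}_n$, so $\cov(X_i, h_n(Z_i)) = \E\bigl[X_i\, (h_n(Z_i) - \E[h_n(Z_i)])\bigr] = 0$ for every $n$. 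This contradicts the existence of a sequence $h_n$ with $\cov(X_i, h_n(Z_i)) \neq 0$ for all large $n$. Hence case~(b) must hold and $h^\star$ is a classically relevant instrument for $X_i$.

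There is no substantive obstacle here: the proof is a textbook application of the Hilbert-space projection theorem. The only items to verify carefully are that $\mathcal{H}_\infty$ is a closed linear subspace of $L^2$ (true by construction, since it is defined as the closure of the increasing union $\bigcup_n \mathcal{H}_n$), so that the projection $h^\star$ is well defined and satisfies the orthogonality characterization, and that constants belong to each $\mathcal{H}_n$ so that covariance and uncentered inner product can be used interchangeably when diagnosing nondegeneracy. Neither of these raises any real difficulty given the setup in Assumptions~\ref{ass:sample-moments}--\ref{ass:DR-appendix}.
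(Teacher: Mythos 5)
Your proposal is correct and follows essentially the same route as the paper's own proof: the projection orthogonality gives $\E[X_i h^\star(Z_i)]=\E[(h^\star(Z_i))^2]$, and the degenerate case is excluded because it would force $\cov(X_i,h_n(Z_i))=0$ for every $h_n\in\mathcal{H}_n$, contradicting Assumption~\ref{ass:DR-appendix}. Your explicit use of the constant function to pass from the uncentered orthogonality $\E[r^\star_i h_n(Z_i)]=0$ to the covariance statement is a slightly more careful rendering of a step the paper treats implicitly, but it is the same argument.
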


\begin{proof}
By the $L^2$-projection property, the error $r^\star_i = X_i - h^\star(Z_i)$ is
orthogonal to the space $\mathcal{H}_\infty$:
\[
  \E[r^\star_i h(Z_i)] = 0
  \quad\text{for all } h \in \mathcal{H}_\infty.
\]
In particular, taking $h = h^\star$,
\[
  \E[r^\star_i h^\star(Z_i)] = 0.
\]
Thus
\begin{align*}
  \E[X_i h^\star(Z_i)]
  &= \E[(h^\star(Z_i) + r^\star_i)h^\star(Z_i)] \\
  &= \E[(h^\star(Z_i))^2] + \E[r^\star_i h^\star(Z_i)] \\
  &= \E[(h^\star(Z_i))^2] \ge 0.
\end{align*}
If $h^\star(Z_i) = 0$ almost surely, then $\E[X_i h^\star(Z_i)] = 0$. If
$\P(|h^\star(Z_i)|>0)>0$, then $\E[(h^\star(Z_i))^2]>0$, hence
$\E[X_i h^\star(Z_i)]>0$.

Now suppose Assumption~\ref{ass:DR-appendix} holds. For each $n$, write
\[
  X_i = h^\star(Z_i) + r^\star_i,
  \qquad r^\star_i \perp \mathcal{H}_\infty.
\]
For any $h_n\in\mathcal{H}_n\subset\mathcal{H}_\infty$,
\[
  Cov(X_i,h_n(Z_i))
  = Cov(h^\star(Z_i),h_n(Z_i)) + Cov(r^\star_i,h_n(Z_i))
  = Cov(h^\star(Z_i),h_n(Z_i)),
\]
because $r^\star_i$ is orthogonal to $\mathcal{H}_\infty$. 
If $h^\star(Z_i)=0$ a.s., then $Cov(X_i,h_n(Z_i)) = 0$ for all $h_n$, which
contradicts Assumption~\ref{ass:DR-appendix}. Hence $h^\star$ cannot be
identically zero, so we are in case~(b) and
$Cov(X_i,h^\star(Z_i)) = \E[(h^\star(Z_i))^2]>0$.
\end{proof}

\begin{remark}
This is a direct application of the orthogonal projection theorem in Hilbert
spaces: for any closed subspace $\mathcal{M}$ of $L^2$, the projection
$P_{\mathcal{M}}X$ satisfies $\E[(X-P_{\mathcal{M}}X)Z]=0$ for all
$Z\in\mathcal{M}$, so $\E[XZ]=\E[(P_{\mathcal{M}}X)Z]$
(e.g.\ \citealp[Ch.~3]{Kreyszig1978}).
\end{remark}

\subsection{Equivalence of 2SLS and OLS with projected regressors}

We next recall the standard algebra showing that 2SLS with instruments $M_i$
is equivalent to OLS on the regressors projected onto the instrument space.

Let
\[
  S_i = \begin{pmatrix} X_i \\ Z_{1i} \\ 1 \end{pmatrix},
  \qquad
  M_i = M(Z_i) \in \R^{L}
\]
denote the stacked structural regressors and instruments, and define the
$n\times d_s$ and $n\times L$ matrices
\[
  \mathbf{S} = (S_1',\dots,S_n')',
  \qquad
  \mathbf{M} = (M_1',\dots,M_n')'.
\]
Also set $\mathbf{Y} = (Y_1,\dots,Y_n)'$.

The 2SLS estimator using instruments $\mathbf{M}$ is
\begin{equation}\label{eq:2sls-def-appendix}
  \hat{\theta}^{\mathrm{2SLS}}
  = (\mathbf{S}'\mathbf{P}_M\mathbf{S})^{-1} \mathbf{S}'\mathbf{P}_M\mathbf{Y},
  \qquad
  \mathbf{P}_M := \mathbf{M}(\mathbf{M}'\mathbf{M})^{-1}\mathbf{M}'.
\end{equation}

Let $\tilde{\mathbf{S}} := \mathbf{P}_M\mathbf{S}$ denote the sample projection
of the regressors onto the instrument space; its $i$th row is
$\tilde{S}_i := \E_n[S_i\mid M_i]$. Consider the OLS regression
\[
  \mathbf{Y} = \tilde{\mathbf{S}}\,\tilde{\theta} + u.
\]

\begin{lemma}\label{lem:2sls-ols-appendix}
Suppose $\mathbf{M}$ has full column rank and
$\mathbf{S}'\mathbf{P}_M\mathbf{S}$ is nonsingular. Then the 2SLS estimator
$\hat{\theta}^{\mathrm{2SLS}}$ in \eqref{eq:2sls-def-appendix} coincides with
the OLS estimator $\tilde{\theta}$ from the regression of $\mathbf{Y}$ on
$\tilde{\mathbf{S}}$. In particular, the 2SLS coefficient on $X_i$ equals the
OLS coefficient on its instrumented version $\tilde{X}_i$.
\end{lemma}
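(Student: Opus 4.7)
The plan is to reduce the lemma to a one-line linear-algebra identity that uses two elementary properties of $\mathbf{P}_M$: symmetry ($\mathbf{P}_M' = \mathbf{P}_M$) and idempotence ($\mathbf{P}_M \mathbf{P}_M = \mathbf{P}_M$). Both hold because $\mathbf{M}$ is assumed to have full column rank, so $\mathbf{M}'\mathbf{M}$ is invertible and $\mathbf{P}_M = \mathbf{M}(\mathbf{M}'\mathbf{M})^{-1}\mathbf{M}'$ is the orthogonal projection onto the column space of $\mathbf{M}$. Invertibility of $\mathbf{S}'\mathbf{P}_M\mathbf{S}$ is given in the hypothesis, so both estimators in the statement are well defined.

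Given these properties, the steps are as follows. Write $\tilde{\mathbf{S}} := \mathbf{P}_M\mathbf{S}$ and compute the OLS normal-equation matrices,
\[
\tilde{\mathbf{S}}'\tilde{\mathbf{S}} = \mathbf{S}'\mathbf{P}_M'\mathbf{P}_M\mathbf{S} = \mathbf{S}'\mathbf{P}_M\mathbf{S},
\qquad
\tilde{\mathbf{S}}'\mathbf{Y} = \mathbf{S}'\mathbf{P}_M'\mathbf{Y} = \mathbf{S}'\mathbf{P}_M\mathbf{Y},
\]
where both equalities invoke symmetry (and the first also idempotence) of $\mathbf{P}_M$. Substituting into the OLS formula $\tilde\theta = (\tilde{\mathbf{S}}'\tilde{\mathbf{S}})^{-1}\tilde{\mathbf{S}}'\mathbf{Y}$ yields $\tilde\theta = (\mathbf{S}'\mathbf{P}_M\mathbf{S})^{-1}\mathbf{S}'\mathbf{P}_M\mathbf{Y} = \hat{\theta}^{\mathrm{2SLS}}$, which is exactly the definition in \eqref{eq:2sls-def-appendix}. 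The equality of the coefficient on $X$ and on its projected counterpart $\tilde X$ then follows immediately by reading off the block of $\tilde\theta$ whose index position in $\tilde{\mathbf{S}}$ matches that of $X$ in $\mathbf{S}$.

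The argument has no genuine obstacle: it is a finite-sample algebraic identity, not a limit theorem, so there are no convergence, rate, or bias considerations to manage. The only things to verify carefully are the two standing invertibility assumptions and the fact that the constant and exogenous controls $Z_1$ (which appear both in $\mathbf{S}$ and in $\mathbf{M}$) already lie in the column space of $\mathbf{M}$, so that $\mathbf{P}_M$ acts as the identity on those columns. This last observation makes precise the sense in which $\tilde{\mathbf{S}}$ differs from $\mathbf{S}$ only through the replacement of $X$ by $\tilde X = \mathbf{P}_M X$, and is what allows Lemma~\ref{lem:2sls-projection} from the main text (and its Q--LS application in Remark~\ref{rem:qls-projection}) to follow as a direct specialization of the present result.
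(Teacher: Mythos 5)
Your proof is correct and follows essentially the same route as the paper's: both rest on the symmetry and idempotence of $\mathbf{P}_M$ to show $\tilde{\mathbf{S}}'\tilde{\mathbf{S}} = \mathbf{S}'\mathbf{P}_M\mathbf{S}$ and $\tilde{\mathbf{S}}'\mathbf{Y} = \mathbf{S}'\mathbf{P}_M\mathbf{Y}$, then read off the coefficient on $X$ by partitioning. Your added observation that $\mathbf{P}_M$ fixes the constant and $Z_1$ when they lie in the column space of $\mathbf{M}$ is a useful clarification (it is what the main-text Lemma~\ref{lem:2sls-projection} makes explicit) but is not part of the appendix proof itself.
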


\begin{proof}
By definition,
\[
  \tilde{\mathbf{S}} = \mathbf{P}_M\mathbf{S}
  = \mathbf{M}(\mathbf{M}'\mathbf{M})^{-1}\mathbf{M}'\mathbf{S}.
\]
The OLS estimator from $\mathbf{Y} = \tilde{\mathbf{S}}\,\tilde{\theta} + u$ is
\begin{align*}
  \tilde{\theta}
  &= (\tilde{\mathbf{S}}'\tilde{\mathbf{S}})^{-1}\tilde{\mathbf{S}}'\mathbf{Y} \\
  &= (\mathbf{S}'\mathbf{P}_M'\mathbf{P}_M\mathbf{S})^{-1}
     \mathbf{S}'\mathbf{P}_M'\mathbf{Y}.
\end{align*}
Since $\mathbf{P}_M$ is a symmetric idempotent projection,
$\mathbf{P}_M'=\mathbf{P}_M$ and $\mathbf{P}_M^2=\mathbf{P}_M$, so
\[
  \tilde{\theta}
  = (\mathbf{S}'\mathbf{P}_M\mathbf{S})^{-1}\mathbf{S}'\mathbf{P}_M\mathbf{Y}
  = \hat{\theta}^{\mathrm{2SLS}}.
\]
The statement about the coefficient on $X_i$ follows by partitioning
$\theta = (\beta,\gamma',\alpha)'$ corresponding to $(X_i,Z_{1i},1)$.
\end{proof}

\begin{remark}
This is the usual equivalence between 2SLS and OLS on the fitted values of the
endogenous regressors; see, e.g., \citet{Wooldridge2010} and
\citet{AngristImbens1995}.
\end{remark}

\subsection{Definition and consistency of the Q--LS estimator}

In the Q--LS procedure we choose a weight vector $w_n\in\R^{J_n}$ that makes
the projected instrument $G_n(Z_i)'w_n$ as close as possible, in moment terms,
to the optimal projection $h^\star(Z_i)$.

Define the sample criterion
\begin{equation}\label{eq:Qls-objective-sample}
  \hat{Q}_n(w)
  := \Big\|
      \frac{1}{n}\sum_{i=1}^n S_i\big(X_i - G_n(Z_i)'w\big)
    \Big\|^2,
  \qquad w\in\R^{J_n},
\end{equation}
and let
\[
  \hat{w}_n \in \arg\min_{w\in\R^{J_n}} \hat{Q}_n(w).
\]
The Q--LS instrument is then
\[
  \hat{h}_n(Z_i) := G_n(Z_i)'\hat{w}_n,
\]
and the Q--LS IV estimator $\hat{\beta}_n$ is defined as the 2SLS coefficient
on $X_i$ in \eqref{eq:structural-appendix} using $\hat{h}_n(Z_i)$ as the
excluded instrument and $Z_{1i}$ as included controls (equivalently, OLS on
the projected regressor by Lemma~\ref{lem:2sls-ols-appendix}).

The corresponding population criterion is
\begin{equation}\label{eq:Qls-objective-pop}
  Q_n(w)
  := \Big\|
      \E\big[S_0(Z_i)\big(X_i - G_n(Z_i)'w\big)\big]
    \Big\|^2,
\end{equation}
and we denote by $w_{0n}$ a minimizer of $Q_n(w)$ (which coincides with the
$L^2$ projection of $X_i$ onto the span of $G_n(Z_i)$, as ``seen through'' the
moments with $S_0(Z_i)$).

\begin{prop}[Consistency of Q--LS]\label{prop:consistency-appendix}
Under Assumptions~\ref{ass:sample-moments}--\ref{ass:approx} and
distributional relevance (Assumption~\ref{ass:DR-appendix}), if
$Q_n(w)$ has a unique minimizer $w_{0n}$, then
\[
  \hat{w}_n \overset{p}\longrightarrow w_{0n},
  \qquad
  \hat{h}_n(Z_i) = G_n(Z_i)'\hat{w}_n
    \overset{L^2}\longrightarrow h^\star(Z_i),
\]
and the Q--LS estimator $\hat{\beta}_n$ is consistent:
\[
  \hat{\beta}_n \overset{p}\longrightarrow \beta_0.
\]
\end{prop}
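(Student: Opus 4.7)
The plan is to decompose the proof into three blocks that mirror the three claims in the proposition. First I would establish consistency of the weights $\hat{w}_n$ by a standard M--estimator argument. Under Assumption~\ref{ass:sample-moments} the sample moments $\frac{1}{n}\sum_i S_i X_i$ and $\frac{1}{n}\sum_i S_i G_n(Z_i)'$ converge in probability to their population counterparts $A_n$ and $B_n$, so $\hat{Q}_n(w) - Q_n(w) \to_p 0$ uniformly on any compact neighborhood of $w_{0n}$. The growth condition $J_n^2/n \to 0$ controls the noise coming from the $J_n$-dimensional instrument dictionary. Together with the full-rank condition on $B_n$ and the hypothesis that $Q_n$ has a unique minimizer $w_{0n}$, a standard argmin/continuous mapping argument (e.g.\ Newey and McFadden, 1994, Theorem~2.1) delivers $\hat{w}_n \to_p w_{0n}$.

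Second, I would promote weight convergence to $L^2$ convergence of the instrument. Writing
\begin{equation*}
  \hat{h}_n(Z_i) - h^\star(Z_i)
  = G_n(Z_i)'\big(\hat{w}_n - w_{0n}\big)
  + \big(G_n(Z_i)'w_{0n} - h^\star(Z_i)\big),
\end{equation*}
the second term vanishes in $L^2$ by Assumption~\ref{ass:approx}. For the first term, I would use $\E\|G_n(Z_i)\|^2 = O(J_n)$ (from the finite second-moment bound in Assumption~\ref{ass:sample-moments}(i)) together with $\|\hat{w}_n - w_{0n}\| = o_p(1)$, and exploit the rate $J_n^2/n \to 0$ to conclude that $\E[(\hat{h}_n(Z_i) - h^\star(Z_i))^2] \to 0$. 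By Lemma~\ref{lem:projection_relevance_appendix}, distributional relevance (Assumption~\ref{ass:DR-appendix}) guarantees that $h^\star$ is non-degenerate with $\mathrm{Cov}(X_i,h^\star(Z_i)) = \E[(h^\star(Z_i))^2] > 0$, so the limiting instrument is classically relevant.

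Third, I would invoke Lemma~\ref{lem:2sls-ols-appendix} to write the Q--LS estimator as the 2SLS estimator with instrument vector $M_i = (\hat{h}_n(Z_i), Z_{1i}', 1)'$ and apply a standard generated-instrument 2SLS consistency argument. The sample Jacobian $\frac{1}{n}\sum_i S_i M_i'$ converges in probability to $\E[S_i M_i^{0\prime}]$ with $M_i^0 = (h^\star(Z_i), Z_{1i}', 1)'$, because the $L^2$ approximation of $\hat{h}_n$ to $h^\star$ translates into convergence of the empirical moments by Cauchy--Schwarz and the finite fourth-moment assumptions. The limiting Jacobian is nonsingular by the relevance statement just established and the full-rank condition in Assumption~\ref{ass:sample-moments}(ii). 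Combining this with $\E[\varepsilon_i \mid Z_i]=0$, which makes $\frac{1}{n}\sum_i M_i^0 \varepsilon_i = o_p(1)$, and Slutsky's theorem, yields $\hat{\beta}_n \to_p \beta_0$.

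The main obstacle will be ensuring that the generated-instrument error $\hat{h}_n - h^\star$ does not contaminate the cross term $\frac{1}{n}\sum_i (\hat{h}_n(Z_i) - h^\star(Z_i))\varepsilon_i$ at a rate that breaks consistency. This is a standard sieve/plug-in issue, and handling it cleanly requires carefully combining the $L^2$-rate from Assumption~\ref{ass:approx} with the $J_n^2/n\to 0$ restriction and Cauchy--Schwarz, rather than a naive uniform bound on $\|G_n(Z_i)\|$. For asymptotic normality (in the next proposition) this is precisely where one needs $o_p(n^{-1/2})$ behavior; for consistency, an $o_p(1)$ control via Cauchy--Schwarz combined with $\E[\varepsilon_i^2]<\infty$ is enough.
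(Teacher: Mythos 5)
Your proposal is correct and follows essentially the same route as the paper's proof: an M--estimation argument for $\hat{w}_n \to_p w_{0n}$, the triangle-inequality decomposition of $\hat{h}_n - h^\star$ into an estimation term and a sieve-approximation term to get $L^2$ convergence, and then a standard generated-instrument 2SLS argument (the paper phrases this last step via an oracle estimator using $h^\star$, you phrase it via convergence of the sample Jacobian and moment conditions, but these are the same argument). Your explicit remark on controlling the cross term $\frac{1}{n}\sum_i(\hat h_n(Z_i)-h^\star(Z_i))\varepsilon_i$ by Cauchy--Schwarz is a useful addition that the paper handles only by citation.
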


\begin{proof}
\textit{Step 1: Uniform convergence of the criterion.}
For fixed $n$, define
\[
  m_n(w) := \E\big[S_0(Z_i)\big(X_i - G_n(Z_i)'w\big)\big],
\]
so that $Q_n(w) = \|m_n(w)\|^2$. The sample analogue is
\[
  \hat{m}_n(w)
  := \frac{1}{n}\sum_{i=1}^n S_i\big(X_i - G_n(Z_i)'w\big),
  \qquad
  \hat{Q}_n(w) = \|\hat{m}_n(w)\|^2.
\]
Because $G_n(Z_i)$ is finite-dimensional and has bounded fourth moments
(Assumption~\ref{ass:sample-moments}), $\hat{m}_n(w)$ is a
finite-dimensional empirical average with linear dependence on $w$. Standard
multivariate LLN and a uniform LLN for parametric families (e.g.\
\citealp{NeweyMcFadden1994}, Sec.~2.4) imply that on any compact set
$\mathcal{W}_n\subset\R^{J_n}$,
\[
  \sup_{w\in\mathcal{W}_n}
    \big\|\hat{m}_n(w) - m_n(w)\big\|
  \overset{p}\longrightarrow 0,
\]
hence also
\[
  \sup_{w\in\mathcal{W}_n}
    \big|\hat{Q}_n(w) - Q_n(w)\big|
  \overset{p}\longrightarrow 0,
\]
using the Lipschitz continuity of $a\mapsto\|a\|^2$.

\smallskip
\textit{Step 2: Consistency of $\hat{w}_n$.}
By Assumption~\ref{ass:sample-moments}(ii) and the quadratic form of $Q_n(w)$,
the minimizer $w_{0n}$ is unique and lies in a bounded region of $\R^{J_n}$.
Choosing $\mathcal{W}_n$ large enough to contain $w_{0n}$ with probability
approaching one, the usual M--estimation argument
(\citealp{NeweyMcFadden1994}, Thm.~2.1) yields
\[
  \hat{w}_n \overset{p}\longrightarrow w_{0n}.
\]

\smallskip
\textit{Step 3: Approximation of $h^\star$.}
By Assumption~\ref{ass:approx},
\[
  \E\big[(h^\star(Z_i) - h_n(Z_i))^2\big]
  = \E\big[(h^\star(Z_i) - G_n(Z_i)'w_{0n})^2\big] \to 0.
\]
Moreover, the convergence $\hat{w}_n\to_p w_{0n}$ and bounded second moments
of $G_n(Z_i)$ imply
\[
  \E\big[(\hat{h}_n(Z_i) - h_n(Z_i))^2\big]
  = \E\big[(G_n(Z_i)'(\hat{w}_n - w_{0n}))^2\big]
  = o_p(1).
\]
Combining the two displays gives
\[
  \E\big[(\hat{h}_n(Z_i) - h^\star(Z_i))^2\big] \to_p 0,
\]
that is, $\hat{h}_n(Z_i)\overset{L^2}\to h^\star(Z_i)$.

\smallskip
\textit{Step 4: Consistency of $\hat{\beta}_n$.}
Consider the ``oracle'' IV estimator that uses the optimal instrument
$h^\star(Z_i)$:
\[
  \beta_n^\star
  := \frac{\frac{1}{n}\sum_{i=1}^n h^\star(Z_i)
         \big(Y_i - Z_{1i}'\gamma_0\big)}
         {\frac{1}{n}\sum_{i=1}^n h^\star(Z_i) X_i}.
\]
By the structural equation \eqref{eq:structural-appendix} and
$\E[\varepsilon_i\mid Z_i]=0$,
\[
  \E\big[h^\star(Z_i)(Y_i - Z_{1i}'\gamma_0)\big]
  = \beta_0\,\E\big[h^\star(Z_i) X_i\big].
\]
Lemma~\ref{lem:projection_relevance_appendix} and distributional relevance
ensure $\E[h^\star(Z_i) X_i]\neq 0$, so by the LLN,
$\beta_n^\star\to_p\beta_0$.

The actual Q--LS estimator replaces $h^\star(Z_i)$ by $\hat{h}_n(Z_i)$ in the
same IV formula. Because $\hat{h}_n(Z_i)\to h^\star(Z_i)$ in $L^2$ and is
uniformly square-integrable, the difference between the sample moments with
$\hat{h}_n$ and with $h^\star$ is $o_p(1)$ (see, e.g.,
\citealp{NeweyMcFadden1994}, Sec.~6.2, on IV with generated regressors).
Therefore, the Q--LS IV estimator $\hat{\beta}_n$ has the same probability
limit as $\beta_n^\star$, namely $\beta_0$.
\end{proof}

\subsection{Asymptotic normality of Q--LS}

We now derive the asymptotic distribution of $\hat{\beta}_n$ under a fixed
dictionary size $J_n = J$; extending to slowly growing $J_n$ requires
additional notation but follows the same steps.

Define
\[
  \phi_i(\beta,h)
  := h(Z_i)\big(Y_i - \beta X_i - Z_{1i}'\gamma_0\big),
\]
and consider the IV moment condition
\[
  \E[\phi_i(\beta_0,h^\star)] = 0.
\]
The Q--LS estimator $\hat{\beta}_n$ solves the sample moment equation
\[
  \hat{m}_\beta(\beta,\hat{h}_n)
  := \frac{1}{n}\sum_{i=1}^n
       \hat{h}_n(Z_i)\big(Y_i - \beta X_i - Z_{1i}'\hat{\gamma}_n\big)
  = 0,
\]
where $\hat{\gamma}_n$ denotes the companion coefficient on $Z_{1i}$ in the
same IV regression. For clarity we suppress $\gamma$ in the notation below and
focus on the scalar coefficient $\beta$.

Let
\[
  A_0 := \E[h^\star(Z_i) X_i],
  \qquad
  \Omega_0 := \E[(h^\star(Z_i)\varepsilon_i)^2].
\]

\begin{prop}[Asymptotic normality of Q--LS]\label{prop:normality-appendix}
Under Assumptions~\ref{ass:sample-moments}--\ref{ass:approx},
distributional relevance, and fixed $J_n=J$, the Q--LS estimator satisfies
\[
  \sqrt{n}(\hat{\beta}_n - \beta_0)
  \;\overset{d}\longrightarrow\;
  N\big(0,V_{\mathrm{QLS}}\big),
  \qquad
  V_{\mathrm{QLS}} := A_0^{-2}\Omega_0.
\]
In the stacked system including $\gamma_0$, this corresponds to the first
diagonal element of the usual IV sandwich covariance matrix.
\end{prop}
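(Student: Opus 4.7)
The plan is to mirror the standard two-step GMM argument for IV with a generated instrument, treating the Q--LS weight vector $\hat w_n$ as a first-stage nuisance whose estimation error drops out at first order because the moment condition is linear in $h$ and the structural residual is mean-zero given $Z$. First, I would partial out $Z_{1i}$ via the Frisch--Waugh--Lovell theorem so that the scalar parameter $\beta$ is characterized by the single moment $\E[h(Z)(Y - \beta X - Z_1'\gamma_0)]=\E[h(Z)\varepsilon]=0$; this reduces the problem to a univariate IV expansion and absorbs $\hat\gamma_n$ into the included-controls step. Rearranging the sample first-order condition for $\hat\beta_n$ yields
\[
  \sqrt n(\hat\beta_n - \beta_0) \;=\; \hat A_n^{-1}\,\frac{1}{\sqrt n}\sum_{i=1}^n \hat h_n(Z_i)\varepsilon_i \;+\; o_p(1),
\]
where $\hat A_n := n^{-1}\sum_i \hat h_n(Z_i)X_i \xrightarrow{p} A_0$ by Proposition~\ref{prop:consistency-appendix} and the continuous mapping theorem, and $A_0 \neq 0$ by Lemma~\ref{lem:projection_relevance_appendix} combined with distributional relevance.

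Next I would split the numerator into an oracle term and a generated-instrument remainder:
\[
  \frac{1}{\sqrt n}\sum_{i=1}^n \hat h_n(Z_i)\varepsilon_i
  = \frac{1}{\sqrt n}\sum_{i=1}^n h^\star(Z_i)\varepsilon_i + \frac{1}{\sqrt n}\sum_{i=1}^n \bigl(\hat h_n(Z_i) - h^\star(Z_i)\bigr)\varepsilon_i.
\]
The first summand is an i.i.d.\ mean-zero average with finite variance $\Omega_0$ under the fourth-moment conditions of Assumption~\ref{ass:sample-moments}, so the Lindeberg--L\'evy CLT gives $N(0,\Omega_0)$. For the remainder, I would decompose $\hat h_n - h^\star = G_n'(\hat w_n - w_{0n}) + (G_n'w_{0n} - h^\star)$. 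With $J$ fixed, identifying $h^\star$ with the projection $G_J'w_{0J}$ onto $\mathcal H_J$ makes the deterministic piece vanish exactly; the stochastic piece factors as $\sqrt n(\hat w_n - w_{0n})'\cdot n^{-1}\sum_i G_n(Z_i)\varepsilon_i$, where $n^{-1}\sum_i G_n\varepsilon_i = O_p(n^{-1/2})$ since $\E[G_n\varepsilon]=0$ by exogeneity, and $\sqrt n(\hat w_n - w_{0n}) = O_p(1)$ follows from the smooth M-estimation structure of \eqref{eq:Qls-objective-sample} (quadratic criterion, full-rank Hessian under Assumption~\ref{ass:sample-moments}(ii), CLT for the score). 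The product is therefore $O_p(n^{-1/2}) = o_p(1)$.

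The main obstacle is precisely the step of showing the generated instrument contributes nothing at first order, and the reason it works is automatic Neyman-orthogonality: the directional derivative of the moment in the nuisance direction is $\E[\delta h(Z)\varepsilon] = 0$ for any $\delta h$, because $\E[\varepsilon\mid Z]=0$. Hence the bias channel through $\hat h_n$ enters only as a cross product of a $\sqrt n$-consistent nuisance with a centered sample average that is itself $O_p(n^{-1/2})$, collapsing at rate $o_p(1)$ and letting the oracle CLT take over. Combining the pieces via Slutsky's theorem delivers
\[
  \sqrt n(\hat\beta_n - \beta_0) \;\xrightarrow{d}\; A_0^{-1}\cdot N(0,\Omega_0) \;=\; N\bigl(0,\,A_0^{-2}\Omega_0\bigr) \;=\; N(0, V_{\mathrm{QLS}}),
\]
and the joint $(\beta,\gamma')'$ statement follows from the usual block-inversion formula for the IV sandwich variance applied to the stacked system with instrument vector $(\hat h_n(Z_i),Z_{1i}',1)'$.
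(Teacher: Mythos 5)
Your proposal is correct and follows essentially the same route as the paper's proof: linearize the IV moment condition in $\beta$, show the Jacobian converges to $A_0\neq 0$ via Lemma~\ref{lem:projection_relevance_appendix}, replace the generated instrument by the oracle $h^\star$ in the score, and conclude by the CLT and Slutsky. If anything, your treatment of the remainder term is tighter than the paper's: where the paper asserts $n^{-1/2}\sum_i(\hat h_n(Z_i)-h^\star(Z_i))\varepsilon_i=o_p(1)$ directly from $L^2$ convergence of $\hat h_n$, you exploit the fixed-$J$ parametric structure to factor it as $\sqrt n(\hat w_n-w_{0n})'\cdot n^{-1}\sum_i G_n(Z_i)\varepsilon_i = O_p(1)\cdot O_p(n^{-1/2})$, which is the argument actually needed to make that step rigorous.
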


\begin{proof}
\textit{Step 1: Linearization of the moment condition.}
By definition,
\[
  0 = \hat{m}_\beta(\hat{\beta}_n,\hat{h}_n)
    = \hat{m}_\beta(\beta_0,\hat{h}_n)
      + \left.\frac{\partial \hat{m}_\beta(\beta,\hat{h}_n)}
                   {\partial\beta}\right|_{\beta=\beta_0}
        (\hat{\beta}_n - \beta_0)
      + o_p(n^{-1/2}).
\]
We can write
\[
  \hat{m}_\beta(\beta_0,\hat{h}_n)
  = \frac{1}{n}\sum_{i=1}^n \hat{h}_n(Z_i)\varepsilon_i
    + o_p(n^{-1/2}),
\]
where the $o_p(n^{-1/2})$ term collects the contribution of estimating
$\gamma_0$; standard IV theory implies that this contribution does not affect
the first-order distribution of $\hat{\beta}_n$ (see, e.g.,
\citealp{Wooldridge2010}, ch.~5).

Under $\hat{h}_n(Z_i)\to h^\star(Z_i)$ in $L^2$ and a Lindeberg condition on
$h^\star(Z_i)\varepsilon_i$, the CLT gives
\[
  \frac{1}{\sqrt{n}}\sum_{i=1}^n \hat{h}_n(Z_i)\varepsilon_i
  = \frac{1}{\sqrt{n}}\sum_{i=1}^n h^\star(Z_i)\varepsilon_i
    + o_p(1)
  \;\overset{d}\longrightarrow\;
  N(0,\Omega_0).
\]

\smallskip
\textit{Step 2: Limit of the derivative.}
The derivative of $\hat{m}_\beta(\beta,\hat{h}_n)$ with respect to $\beta$
is
\[
  \frac{\partial\hat{m}_\beta(\beta,\hat{h}_n)}{\partial\beta}
  = -\frac{1}{n}\sum_{i=1}^n \hat{h}_n(Z_i) X_i.
\]
By the LLN and $\hat{h}_n(Z_i)\to h^\star(Z_i)$ in $L^2$,
\[
  -\frac{\partial\hat{m}_\beta(\beta_0,\hat{h}_n)}{\partial\beta}
  \;\overset{p}\longrightarrow\;
  A_0 = \E[h^\star(Z_i) X_i],
\]
which is nonzero by Lemma~\ref{lem:projection_relevance_appendix}.

\smallskip
\textit{Step 3: Putting pieces together.}
Combining the linearization and the previous limits,
\[
  \sqrt{n}(\hat{\beta}_n - \beta_0)
  = A_0^{-1}\,
    \frac{1}{\sqrt{n}}\sum_{i=1}^n h^\star(Z_i)\varepsilon_i
    + o_p(1)
  \;\overset{d}\longrightarrow\;
  N(0,A_0^{-2}\Omega_0).
\]
\end{proof}

In the finite-dimensional case considered here, this variance coincides with
the standard 2SLS variance formula evaluated at the optimal instrument
$h^\star(Z_i)$. In practice, we estimate the asymptotic variance with the usual
cluster-robust IV sandwich estimator, replacing $h^\star$ by $\hat{h}_n$ and
sample expectations by empirical averages.

\subsection{Equivalence with linear 2SLS under mean-shift designs}

Finally, we formalize the statement that when the first stage is exactly
linear in the original policy instrument and controls, and the Q--LS dictionary
contains this linear space, Q--LS coincides with ordinary 2SLS.

Suppose:
\begin{enumerate}
\item[(i)] The dictionary includes the original scalar instrument $Z_{2i}$, the
controls $Z_{1i}$, and a constant, so that
\[
  \mathcal{H}_n \supset \text{span}\{Z_{2i},Z_{1i},1\}.
\]
\item[(ii)] The true first stage is linear:
\begin{equation}\label{eq:linear-first-stage-appendix}
  X_i = \pi_0 Z_{2i} + Z_{1i}'\pi_1 + u_i,
  \qquad \E[u_i\mid Z_i] = 0.
\end{equation}
\end{enumerate}

\begin{prop}[Equivalence with linear 2SLS]\label{prop:equiv-2sls-appendix}
Under (i)–(ii),
\begin{enumerate}
\item[(a)] The $L^2$-projection $h^\star(Z_i)$ of $X_i$ onto $\mathcal{H}_\infty$
is exactly the linear conditional mean
\[
  h^\star(Z_i) = \E[X_i\mid Z_{1i},Z_{2i}]
               = \pi_0 Z_{2i} + Z_{1i}'\pi_1 + \alpha_0
\]
for some constant $\alpha_0$.

\item[(b)] The Q--LS estimator $\hat{\beta}_n$ is numerically identical to the
standard 2SLS estimator that uses $Z_{2i}$ as the excluded instrument and
$Z_{1i}$ as controls.
\end{enumerate}
\end{prop}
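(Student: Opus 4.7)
}
The plan is to handle part (a) by the uniqueness of the $L^2$ projection and part (b) by identifying the column spaces of the instruments used by Q--LS and by classical 2SLS, then invoking Lemma~\ref{lem:2sls-ols-appendix}.

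For part (a), I would start from the identity $X_i = \pi_0 Z_{2i} + Z_{1i}'\pi_1 + u_i$ with $\E[u_i\mid Z_i]=0$, which yields $\E[X_i\mid Z_i] = \pi_0 Z_{2i} + Z_{1i}'\pi_1$ (absorbing any intercept into $\alpha_0$). This conditional expectation is the $L^2$ projection of $X_i$ onto the full space $L^2(\sigma(Z_i))$ of $Z_i$-measurable square-integrable functions, because the residual $u_i$ is orthogonal to every element of that space. By assumption (i), $\mathcal{H}_\infty$ contains $\operatorname{span}\{1,Z_{1i},Z_{2i}\}$, hence it contains $\E[X_i\mid Z_i]$ itself. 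Since the $L^2$ projection onto any closed subspace that already contains $\E[X_i\mid Z_i]$ returns that same element, we conclude $h^\star(Z_i) = \E[X_i\mid Z_i]$, which is the form stated in (a). This also makes the projection error $r^\star_i = u_i$ mean-independent of $Z_i$, a fact that will be useful later.

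For part (b), I would combine two observations. First, in the finite-grid implementation each quantile basis function $\widehat g_k(Z_i) = Z_i'\widehat\pi_n(\tau_k)$ is a linear combination of $(1,Z_{1i},Z_{2i})$, so the sample column space of the dictionary matrix $\widehat{G}$ is contained in $\operatorname{span}\{1,Z_{1i},Z_{2i}\}$; under (i) and a standard rank condition on $\{\widehat\pi_n(\tau_k)\}_{k\le K}$ the two spans coincide. Second, the Q--LS instrument $\widehat h_n(Z_i) = \widehat G_i'\widehat w_n$ therefore lies in $\operatorname{span}\{1,Z_{1i},Z_{2i}\}$, and by part (a) together with Proposition~\ref{prop:consistency-appendix} its coefficient on $Z_{2i}$ converges in probability to $\pi_0\neq 0$, so with probability approaching one the three-dimensional subspaces $\operatorname{span}\{1,Z_{1i},\widehat h_n(Z_i)\}$ and $\operatorname{span}\{1,Z_{1i},Z_{2i}\}$ coincide. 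Since the two 2SLS estimators in (b) use instruments whose column spaces agree (w.p.a.~1), the corresponding projection matrices $\mathbf{P}_M$ in Lemma~\ref{lem:2sls-ols-appendix} are identical, and the lemma immediately delivers numerical equivalence of the two IV estimators.

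The main obstacle is the finite-sample column-space coincidence in the presence of a dictionary that is strictly richer than $\{1,Z_{1i},Z_{2i}\}$. The argument above works cleanly when each dictionary element is a linear function of $Z_i$ (as in the paper's default linear quantile parameterization), because every $\widehat h_n(Z_i)$ is then automatically an affine function of $(Z_{1i},Z_{2i})$. If one instead allowed the dictionary to include genuinely nonlinear transformations of $Z_i$, then exact finite-sample equivalence would fail and the conclusion of (b) should be read as equivalence up to $o_p(n^{-1/2})$, proved via $\widehat h_n \to h^\star = \E[X\mid Z]$ in $L^2$, the delta-method linearization used in the proof of Proposition~\ref{prop:normality-appendix}, and the fact that the sample projection onto $(\widehat h_n,Z_1,1)$ converges to the projection onto $(\E[X\mid Z],Z_1,1)$, which spans the same moment conditions as $(Z_2,Z_1,1)$ under (ii).
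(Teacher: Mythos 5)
Your proposal is correct and follows essentially the same route as the paper's proof: part (a) via the uniqueness of the $L^2$ projection onto a subspace already containing $\E[X_i\mid Z_i]$, and part (b) via Lemma~\ref{lem:2sls-ols-appendix} together with the coincidence of the instrument column spaces. Your version is in fact tighter than the paper's on the key step of (b): where the paper asserts that $\hat h_n(Z_i)$ equals the linear first-stage fitted value ``up to sampling noise,'' you correctly isolate the conditions under which the identity is exact in finite samples --- every dictionary element must be linear in $Z_i$ and the fitted coefficient on $Z_{2i}$ must be nonzero (which holds with probability approaching one since it converges to $\pi_0\neq 0$) --- and you rightly note that with a genuinely nonlinear dictionary the claim degrades to equivalence up to $o_p(n^{-1/2})$.
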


\begin{proof}
(a) Because $\mathcal{H}_\infty$ contains all linear combinations of
$(Z_{2i},Z_{1i},1)$, the best $L^2$ approximation to $X_i$ in
$\mathcal{H}_\infty$ under \eqref{eq:linear-first-stage-appendix} is its linear
conditional mean
\[
  \E[X_i\mid Z_{1i},Z_{2i}]
  = \pi_0 Z_{2i} + Z_{1i}'\pi_1 + \alpha_0.
\]

(b) In this case, the population Q--LS weights recover exactly the linear
first-stage coefficients, and the fitted values $\hat{h}_n(Z_i)$ coincide (up
to sampling noise) with the usual linear first-stage fitted values from
regressing $X_i$ on $(Z_{2i},Z_{1i})$. By Lemma~\ref{lem:2sls-ols-appendix},
2SLS with instrument vector $(Z_{2i},Z_{1i},1)$ and regressor vector
$(X_i,Z_{1i},1)$ is equivalent to OLS of $Y_i$ on the projected regressors
$(\tilde{X}_i,\tilde{Z}_{1i},1)$, where $\tilde{X}_i = \E[X_i\mid Z_{1i},Z_{2i}]$
and $\tilde{Z}_{1i}=Z_{1i}$.

The Q--LS estimator uses $\hat{h}_n(Z_i)$ in place of $\tilde{X}_i$ but, under
an exactly linear first stage and a dictionary that contains the linear space,
$\hat{h}_n(Z_i)$ equals $\tilde{X}_i$ in finite samples up to numerical
precision. Therefore, Q--LS reproduces the usual 2SLS coefficient on $X_i$.
\end{proof}

\begin{remark}
Proposition~\ref{prop:equiv-2sls-appendix} formalizes the idea that when the
instrument only induces a mean shift in $X_i$ and the first stage is linear,
Q--LS does not change the linear IV estimate. The main gains of Q--LS arise
precisely when instruments are \emph{distributionally} relevant but
\emph{mean}--weak in the sense that $\Var(\E[X_i\mid Z_i])$ is small even
though $F_{X\mid Z}(\cdot\mid Z_i)$ varies substantially in $Z_i$.
\end{remark}

\begin{table}[htbp]\centering
\caption{Selected evidence on mean spending vs. financial-risk protection}
\label{tab:mean_vs_risk_lit}
\scriptsize
\resizebox{13cm}{10.5cm}{\begin{tabular}{p{3 cm} p{3.4cm} p{3.6cm} p{4.2cm}}
\hline\hline
Study & Setting / population & Mean-level outcome & Risk / tail outcome \\ 
\hline
\cite{finkelstein2008did}
& Introduction of Medicare (1965), U.S. adults 65+ 
& No statistically significant effect on out-of-pocket (OOP) spending for the bottom 75\% of the pre-Medicare distribution. 
& Top quartile of OOP spending for the elderly falls by about 40\%; OOP spending for the top decile falls by roughly \$1{,}200 (in 1991 dollars)\\[0.4em]

\cite{barcellos2015effects}
& Medicare eligibility at age 65, U.S. adults 57--74 (HRS) 
& At age 65, mean annual OOP medical spending falls by about 33\%.
& The 95th percentile of the OOP distribution falls by about 53\%; the share of individuals with OOP expenses exceeding income is cut by more than half, and the share with OOP spending above 50\% of income also falls sharply.\\[0.4em]

\cite{engelhardt2011medicare}
& Medicare Part D prescription-drug benefit (2006), U.S. adults 65+ 
& Part D yields modest or statistically weak changes in mean total OOP spending for the elderly as a whole. 
& The policy substantially reduces high OOP drug spending: the 95th percentile and measures of ``catastrophic'' spending (large OOP relative to income) fall significantly, indicating improved financial protection even when mean spending moves little.\\[0.4em]

\cite{allen2013oregon} 
& Medicaid lottery for low-income uninsured adults in Oregon 
& Medicaid has modest effects on average annual OOP spending (mean levels are small in this population and imprecisely estimated). 
& Medicaid reduces the probability of having any OOP medical spending by about 20 percentage points (a \(\approx 35\%\) reduction), and lowers the probability of unpaid medical bills being sent to collection by about 6.4 percentage points (\(\approx 25\%\) reduction), indicating large gains in financial protection relative to modest changes in mean spending.\\[0.4em]

\cite{scott2021assessing}
& Uninsured patients using U.S. hospital emergency departments (EDs), 2006--2017 
& Median ED charges for uninsured patients increase from roughly \$842 in 2006 to about \$2{,}033 in 2017 (in nominal terms), reflecting rising overall cost levels. 
& The study documents a very high prevalence of catastrophic health expenditures (CHE) among uninsured ED users when full charges are compared to income, and notes prior estimates that 70--90\% of uninsured ED visits resulting in hospitalization lead to CHE; the key message is the extreme risk of catastrophic burden, not a finely estimated change in mean spending.\\[0.4em]

\cite{barcellos2015effects}, supplemental results 
& Same HRS sample, focusing on financial strain 
& Small or zero effects on average non-medical consumption.
& Large reductions in the probability of borrowing, skipping payments, or experiencing other indicators of financial strain linked to high medical bills, consistent with big risk/variance effects relative to changes in means.\\[0.4em]

\cite{jones2018lifetime} 
& Lifetime medical spending among U.S. retirees 
& Mean lifetime medical spending is large, but Medicare and supplemental insurance leave expected spending relatively similar across many coverage types.
& The variance and upper tail of lifetime spending are very sensitive to coverage generosity; supplemental coverage mainly reduces the dispersion and extreme upper-tail risk rather than radically changing the mean.\\[0.4em]

\cite{kluender2021medical}
& U.S. medical debt in collections, 2009--2020; ACA Medicaid expansion vs. non-expansion states 
& In Medicaid expansion states, mean medical debt in collections per person falls by about \$400--\$450 relative to non-expansion states. 
& The share of individuals with any medical debt in collections declines by roughly 20\% in expansion states, and the upper tail of medical debt (e.g., 90th percentile) shrinks substantially, indicating large risk protection gains on top of modest average reductions.\\[0.4em]

\hline\hline
\multicolumn{4}{p{15.5cm}}{\footnotesize
Notes: This table summarizes a selection of studies documenting that major insurance reforms often produce relatively modest changes in mean out-of-pocket (OOP) spending but large changes in measures of financial risk and tail exposure (upper quantiles, catastrophic expenditure indicators, or debt in collections). Exact numbers are taken from the cited articles or authoritative summaries when reported; where only qualitative statements are available, we indicate that mean effects are modest or not separately reported.}
\end{tabular}}
\end{table}

\begin{table}[htbp]\centering
\caption{Standard errors for Q--LS coefficient under alternative methods (real OOP)}
\label{tab:qls_se_compare}
\begin{tabular}{l*{3}{D{.}{.}{-4}}}
\hline\hline
                    &\multicolumn{1}{c}{Robust OLS}
                    &\multicolumn{1}{c}{Bootstrap}
                    &\multicolumn{1}{c}{Q--LS SE}\\
\hline
Projected real OOP (Q--LS index, 2015 k\$)  
                    &   0.0072    &   0.0075    &   0.0072    \\
\hline\hline
\multicolumn{4}{p{0.95\textwidth}}{\footnotesize
\emph{Notes:} This table reports alternative standard error estimates for the Q--LS coefficient
on projected real OOP spending (in thousands of 2015 dollars) from the specification in
Table~\ref{tab:qls_vs_2sls_real}. ``Robust OLS'' treats $x_{\mathrm{QLS}}$ as a generated
regressor and uses heteroskedasticity-robust standard errors clustered at the household level.
``Bootstrap'' reports the standard deviation of the coefficient across 999 bootstrap
replications clustered by household. `` Q--LS SE'' implements the analytical variance
formula derived in Section~\ref{sec:QLS}, with the same clustering. All three methods yield
very similar standard errors.}
\end{tabular}
\end{table}

\end{document}